\newcommand{\lsc}[1]{{\color[rgb]{0,0,0}{{#1}}}}
\newcommand{\rev}[1]{{\color[rgb]{0,0,0}{{#1}}}}
\DeclarePairedDelimiter\ceil{\lceil}{\rceil}
\DeclarePairedDelimiter\floor{\lfloor}{\rfloor}
\DeclareMathOperator\caret{\raisebox{1ex}{$\scriptstyle\wedge$}}
\theoremstyle{plain}
\newtheorem{thm}{Theorem}[section]
\newtheorem{lem}[thm]{Lemma}
\theoremstyle{definition}
\theoremstyle{remark}
\newtheorem{remark}{Remark}[section]
\newcommand{\mat}[1]{\mbox{\boldmath$#1$}} % added by Kefei Liu
\newcommand{\set}[1]{{\mathcal #1}}    % set
\newcommand{\real}{\mathbb{R}}         % real-valued numbers
\renewcommand{\pi}{\uppi} % pi
\newcommand{\opof}[2]{\mathop{{\rm #1}}\left\{#2\right\}}         % FUN{x}
\newcommand{\traceof}[1]{\opof{trace}{#1}}       % trace operator
\newcommand{\expvof}[1]{\opof{\mathbb{E}}{#1}}   % expected value
\newcommand{\vecelem}[2]{\errmessage{DO NOT USE}}    % discontinued: not needed and too easily confused with unfolding
\newcommand{\normof}[2]{\left\|#1\right\|_{#2}}
\newcommand{\twonorm}[1]{\normof{#1}{}}                  % two-norm
\newcommand{\trans}{{\rm T}}   % transpose: ^\trans
\newcounter{numtodos}
\renewcommand{\mat}[1]{\mathbf{#1}}
\renewcommand{\opof}[2]{\mathop{{\rm #1}}\left(#2\right)}
\newcommand{\proxopof}[3]{{\rm #1}_{#2}\left(#3\right)}         % FUN{x}
\DeclareMathOperator*{\argmax}{argmax}
\DeclareMathOperator*{\argmin}{argmin}
\newcommand{\indicator}[1]{\mathbf{1}\left(#1\right)}
\newcommand{\inp}[2]{\left\langle#1, #2\right\rangle}
\newcommand{\Covof}[1]{\mathop{{\rm Cov}}\left(#1\right)}         % FUN{x}
\newcommand{\Corrof}[1]{\mathop{{\rm Corr}}\left(#1\right)}         % FUN{x}
\renewcommand{\expvof}[1]{\mathop{{\rm \mathbb{E}}}\left[#1\right]}   % expected value
\renewcommand{\maxof}[1]{\opof{max}{#1}}           % max
\renewcommand{\minof}[1]{\opof{min}{#1}}           % max
\newcommand{\signof}[1]{\opof{sign}{#1}}           % real part
\newcommand{\abs}[1]{\left|#1\right|}
\newcommand\coolover[2]{\mathrlap{\smash{\overbrace{
    \begin{matrix} #2 \end{matrix}}^{\mbox{$#1$}}}}#2}
\newcommand{\NA}{---}
\begin{document}

\title{Grouping effects of sparse CCA models in variable selection}
	
%
% author names and IEEE memberships
% note positions of commas and nonbreaking spaces ( ~ ) LaTeX will not break
% a structure at a ~ so this keeps an author's name from being broken across
% two lines.
% use \thanks{} to gain access to the first footnote area
% a separate \thanks must be used for each paragraph as LaTeX2e's \thanks
% was not built to handle multiple paragraphs
%
\author{Kefei Liu, Qi Long, Li Shen
	% <-this % stops a space
	\thanks{Department of Biostatistics, Epidemiology and Informatics, University of Pennsylvania, Philadelphia, Pennsylvania, USA. Email:li.shen@pennmedicine.upenn.edu.}
}

%\markboth{}%
\maketitle
\begin{abstract}
The sparse canonical correlation analysis (SCCA) is a bi-multivariate association model that finds sparse linear combinations of two sets of variables that are maximally correlated with each other. In addition to the standard SCCA model, a simplified SCCA criterion which maixmizes the cross-covariance between a pair of canonical variables instead of their cross-correlation, is widely used in the literature due to its computational simplicity. However, the behaviors/properties of the solutions of these two models remain unknown in theory. In this paper, we analyze the grouping effect of the standard and simplified SCCA models in variable selection. In high-dimensional settings, the variables often form groups with high within-group correlation and low between-group correlation. Our theoretical analysis shows that for grouped variable selection, the simplified SCCA jointly selects or deselects a group of variables together, while the standard SCCA randomly selects a few dominant variables from each relevant group of correlated variables. Empirical results on synthetic data and real imaging genetics data verify the finding of our theoretical analysis.
\end{abstract}

\begin{IEEEkeywords}
canonical correlation analysis (CCA), sparse CCA, grouped variables, dimensionality reduction, imaging genetics
\end{IEEEkeywords}

\section{Introduction}
Canonical correlation analysis (CCA) \cite{Hotelling1936,hardoon2004canonical} is a multivariate statistical method which investigates the associations between two sets of variables. It has found applications in statistics \cite{klami2013bayesian}, data mining and machine learning \cite{hardoon2004canonical,sun2010canonical}, functional magnetic resonance imaging \cite{worsley1997characterizing,friman2001detection}, genomics \cite{yamanishi2003extraction} and other fields \cite{via2005canonical}. Given two data sets $\mat{X} \in \real^{n \times p}$ and $\mat{Y} \in \real^{n \times q}$ measured on the same set of $n$ samples, CCA seeks linear combinations of the variables in $\mat{X}$ and those in $\mat{Y}$ that are maximally correlated with each other:
\begin{equation*}\label{equ:CCA_model}
\underset{\mat{u},\mat{v}}{\text{maximize}} \; \mat{u}^\trans \mat{X}^\trans \mat{Y} \mat{v} \quad
\text{s.t.} \quad   \mat{u}^\trans \mat{X}^\trans \mat{X} \mat{u} \leq 1, \mat{v}^\trans \mat{Y}^\trans \mat{Y} \mat{v} \leq 1,
\end{equation*}
where $\mat{X}$ and $\mat{Y}$ are column-centered to zero mean.

Compared with multivariate multiple regression, the CCA is ``symmetric" and more flexible in finding variables from both $\mat{X}$ and $\mat{Y}$ to predict each other well. However, in high dimensional setting ($n<p$) such as linking imaging to genomics \cite{hariri2003imaging,shen2020pieee}, the CCA breaks down because it has infinitely many solutions. In particular, the solution can have any support of cardinality greater than or equal to $n$, which means that the CCA can select an arbitrary set of $n$ or more variables. To handle that, the sparse CCA (SCCA)~\cite{waaijenborg2008quantifying,hardoon2011sparse,chu2013sparse,chi2013isbi,suo2017sparse} utilizes the L1 sparsity regularization to select a subset of variables, which can improve the interpretability, stability as well as performance in variable selection.

A main drawback of the SCCA is that it is computationally expensive.
To reduce the computational load, a common practice is to replace the covariance matrices $\mat{X}^\trans \mat{X}$ and $\mat{Y}^\trans \mat{Y}$ in the L2 constraints with diagonal matrices~\cite{Parkhomenko2009,witten2009a,witten2009b,chen2012_aistat,chen2013structure}. The resulting simplified SCCA model allows a closed-form solution for solving each subproblem (update of $\mat{u}$ with $\mat{v}$ fixed or vice versa) and is thus computationally more efficient.

However, the fundamental difference between the standard and simplified SCCA in variable selection remains unclear, particularly in the theoretical properties of their solutions.
In~\cite{witten2009a,chen2013structure}, the use of the simplified SCCA model is justified based only on the empirical observation that ``in high-dimensional classification problems~\cite{Dudoit2002,Tibshirani2003}, treating the covariance matrix as diagonal can yield good results".
In this paper, we attempt to close this gap by investigating the properties of the solutions of the standard and simplified SCCA models. %In high-dimension small sample-size problems, the variables often form groups of various sizes with high within-group correlation and low between-group correlation. Our theoretical analysis shows that the simplified SCCA exhibits the grouping effect and jointly selects or deselects a group of correlated variables together, while the standard SCCA does not have the grouping effect and randomly selects a few dominant variables from each relevant group of correlated variables. %\rev{Empirical results on synthetic data and real imaging genetics data} verify the finding of our theoretical analysis.

Our main contributions are summarized as follows.
\begin{itemize}
  \item The behaviors of the standard and simplified SCCA models in grouped variable selection is theoretically characterized. In high-dimension small sample-size problems, the variables often form groups of various sizes with high within-group correlation and low between-group correlation. It shows that the simplified SCCA jointly selects or deselects a group of correlated variables together, while the standard SCCA tends to select a few dominant variables from each relevant group of correlated variables. This finding could be used by practitioners using SCCA, allowing them to select the proper method for their tasks.

  \item The Lemma 2.2 of~\cite{witten2009a} is extended from $c \in [\sqrt{|\mathcal{S}|}, \infty)$ to $c \in (0, \infty)$, where $\set{S} = \left\{i: i \in \argmax_j \abs{a_j} \right\}$. The Lemma 2.2 of~\cite{witten2009a}, which solves $\underset{\mat{u}}{\text{maximize}} \; \mat{a}^\trans \mat{u}$ subject to $\twonorm{\mat{u}}^2 \leq 1, \normof{\mat{u}}{1} \leq c$, is a key component of the simplified SCCA algorithm used to solve the subproblems at each iteration of the alternating optimization algorithm. However, the lemma fails to provide a solution to the above problem for $c \in (0, \sqrt{|\mathcal{S}|})$.

  \item Greedy algorithms to sequentially compute multiple canonical components for standard and simplified SCCA are derived and presented. To the best of our knowledge, these algorithms are new.
\end{itemize}

\emph{Notation}: Scalars are denoted as italic letters, column vectors as boldface lowercase letters, and matrices as boldface capitals. The $j$-th column vector of a matrix $\mat{X}$ is denoted as $\mat{x}_j$. The superscript $^\trans$ stands for the transpose. The $\twonorm{\mat{u}}$ and $\normof{\mat{u}}{1}$ denote the Euclidean norm and $\ell_1$ norm of a vector $\mat{u}$, respectively. \rev{The $\sigma_{\rm max}\left(\mat{A}\right)$ and $\lambda_{\rm max}\left(\mat{A}\right)$ denote the largest singular value and largest eigenvalue of a matrix $\mat{A}$, respectively.} For a set $\set{S}$, its cardinality is denoted as $\abs{\set{S}}$. \rev{The soft-thresholding operator is defined as %Let $\mathop{{\rm S}}$ be the soft-thresholding operator:
	\begin{equation*}
	\opof{S}{a, \Delta} = %\coloneqq \signof{a} \left( \abs{a}-\Delta \right)^{+} =
	\begin{cases}
	a-\Delta, & a>\Delta \\
	a+\Delta, & a<-\Delta \\
	0, & -\Delta \leq a \leq \Delta,
	\end{cases}
	\end{equation*}
	where $\Delta$ is a non-negative constant.}

\section{Sparse CCA model}
\label{sec:model}

Assume that $\mat{X}$ and $\mat{Y}$ are column-centered to zero mean. SCCA aims to find a linear combination of variables in $\mat{X}$ and $\mat{Y}$ to maximize their correlation~\cite{suo2017sparse,chu2013sparse}:%Hotelling1936,
\begin{equation}\label{equ:SCCA_model}
\begin{aligned}
& \underset{\mat{u},\mat{v}}{\text{maximize}} & & \mat{u}^\trans \mat{X}^\trans \mat{Y} \mat{v} \\
& \text{subject to}                           & & \mat{u}^\trans \mat{X}^\trans \mat{X} \mat{u} \leq 1, \normof{\mat{u}}{1} \leq c_1 \\
&                                             & & \mat{v}^\trans \mat{Y}^\trans \mat{Y} \mat{v} \leq 1, \normof{\mat{v}}{1} \leq c_2,
\end{aligned}
\end{equation}
where $\mat{X} \mat{u}$ and $\mat{Y} \mat{v}$ are the canonical variables, $\mat{u}$ and $\mat{v}$ are canonical loadings/weights measuring the contribution of each feature in the identified association, and $c_1 >0, c_2 >0$ are the regularization parameters that control the sparsity of the solution.

\lsc{The problem \eqref{equ:SCCA_model} is not convenient to solve due to the quadratic constraints.} To save the computational cost, it is a common practice to treat the covariance matrices $\mat{X}^\trans \mat{X}$ and $\mat{Y}^\trans \mat{Y}$ as diagonal~\cite{Parkhomenko2009,witten2009a,witten2009b,chen2012_aistat,chen2013structure,chi2013isbi,Fang2016}. %~\cite{Parkhomenko2009,witten2009a,chen2012_aistat,chen2013structure,Fang2016}
This yields the following simplified formulation of SCCA:
\begin{equation}\label{equ:simplified_SCCA_model}
\begin{aligned}
& \underset{\mat{u},\mat{v}}{\text{maximize}} & & \mat{u}^\trans \mat{X}^\trans \mat{Y} \mat{v} \\
& \text{subject to}                           & & \twonorm{\mat{u}}^2 \leq 1, \normof{\mat{u}}{1} \leq c_1 \\
&                                             & & \twonorm{\mat{v}}^2 \leq 1, \normof{\mat{v}}{1} \leq c_2,
\end{aligned}
\end{equation}
where $c_1, c_2 >0$. %$c_1 \geq 1, c_2 \geq 1$.

In Section \ref{sec:algorithm} and Supplementary Materials Section \ref{sec_in_supp:multiple_components}, we will describe algorithms to fit the two models, as well as explain how to obtain multiple canonical components.

%\section{Properties of the optimal solution of the SCCA models}
%\label{sec:properties_optimal_solution_SCCA_model}

\section{Grouping effect analysis}
\label{sec:properties_optimal_solution_SCCA_model}

\lsc{In high-dimensional problems such as imaging genomics, grouped variables are common and how to properly select them is an important research problem \cite{shen2020pieee,mikewest2001predicting,zou2005regularization,thompson2010curropinneurol}. For a sparse CCA model, we say it exhibits the grouping effect if it jointly selects or deselects each group of highly correlated variables together.}

To gain initial insights, we start with the simplest case with all $p$ $X$ variables fully correlated with each other.

\begin{lem}\label{lema:grouping_effect_analysis_SCCA_models_particular_case}
Let $\mat{x}_1=\mat{x}_2=\dots=\mat{x}_p$ have unit L2 norm.

The optimal solution $\mat{u}^*$ to problem \eqref{equ:SCCA_model} is
\begin{description}
  \item[(i)] any point on the segment of the line $u_1+u_2+\dots+u_p = 1$ that is inside the L1 ball:
\begin{align}\label{equ:standard_SCCA_model_particular_case_solution_u}
\rev{u_1^* + u_2^* + \dots + u_p^*} &= 1 \\
\normof{\mat{u}^*}{1} &\leq c_1
\end{align}
when $c_1 \geq 1$, and

  \item[(ii)] any $u_1^* \geq 0, u_2^* \geq 0, \dots, u_p^* \geq 0$ that satisfy:
\begin{align}\label{equ:standard_SCCA_model_particular_case_solution_u_Case2}
\rev{u_1^* + u_2^* + \dots + u_p^*} = c_1
\end{align}
when $0 < c_1 < 1$.
\end{description}

The optimal solution $\mat{u}^*$ to problem \eqref{equ:simplified_SCCA_model} is:
\begin{description}
  \item[(i)] $\rev{u_1^* = u_2^* = \dots = u_p^*} = \frac{1}{\sqrt{p}}$ when $c_1 \geq \sqrt{p}$, and
  \item[(ii)] any $u_1^* \geq 0, u_2^* \geq 0, \dots, u_p^* \geq 0$ that satisfy:
\begin{align}\label{equ:simplified_SCCA_model_particular_case_solution_u_Case2}
\rev{u_1^* + u_2^* + \dots + u_p^*} &= c_1 \\
\rev{{u_1^*}^2} + {u_2^*}^2 + \dots + {u_p^*}^2 &\leq 1 %\\
\end{align}
when $1 \leq c_1 < \sqrt{p}$.
\end{description}

\end{lem}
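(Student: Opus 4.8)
The plan is to use the collinearity $\mat{x}_1=\cdots=\mat{x}_p=:\mat{x}$ (with $\twonorm{\mat{x}}=1$) to collapse each model to a one-dimensional problem in $s:=\mathbf{1}^\trans\mat{u}=\sum_j u_j$. Since $\mat{X}\mat{u}=s\,\mat{x}$, we have $\mat{u}^\trans\mat{X}^\trans\mat{X}\mat{u}=s^2\twonorm{\mat{x}}^2=s^2$ and $\mat{u}^\trans\mat{X}^\trans\mat{Y}\mat{v}=s\,(\mat{x}^\trans\mat{Y}\mat{v})$. Fix an optimal pair $(\mat{u}^*,\mat{v}^*)$, put $\gamma^*:=\mat{x}^\trans\mat{Y}\mat{v}^*$, and note the optimal value equals $s^*\gamma^*$ with $s^*:=\sum_j u_j^*$. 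In the non-degenerate case $\mat{x}^\trans\mat{Y}\neq\mat{0}$ (otherwise the objective vanishes identically and every feasible $\mat{u}$ is optimal), the optimal value is strictly positive: the feasible point $\mat{u}=\min(1,c_1)\,\mat{e}_1$ (with $\mat{e}_1$ the first coordinate vector), paired with a small enough, suitably signed multiple of any $\mat{v}$ with $\mat{x}^\trans\mat{Y}\mat{v}\neq0$, already gives a positive objective; hence $\gamma^*\neq0$. All constraints of both models are even in $\mat{u}$ and in $\mat{v}$, and the objective is invariant under the joint flip $(\mat{u},\mat{v})\mapsto(-\mat{u},-\mat{v})$, so performing this flip if necessary we may assume $\gamma^*>0$. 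Then $\mat{u}^*$ is precisely a maximizer of $s=\sum_j u_j$ over the $\mat{u}$-feasible set of the model at hand (and, conversely, any such maximizer paired with $\mat{v}^*$ is again optimal). Up to this normalization of sign, the two claims reduce to the following elementary optimizations.

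\emph{Standard model:} maximize $\sum_j u_j$ subject to $s^2\le1$ and $\normof{\mat{u}}{1}\le c_1$. Since $s^2\le1\iff-1\le s\le1$ and $s\le\sum_j\abs{u_j}=\normof{\mat{u}}{1}\le c_1$, we get $s\le\min(1,c_1)$. When $c_1\ge1$ the value $1$ is attained (e.g.\ at $\mat{e}_1$), so the maximizers are exactly the $\mat{u}$ with $\sum_j u_j=1$ and $\normof{\mat{u}}{1}\le c_1$, each of which is feasible; this is case (i). When $0<c_1<1$ the value $c_1$ is attained (e.g.\ at $c_1\mat{e}_1$), and attaining it forces $s=\normof{\mat{u}}{1}$, i.e.\ $u_j\ge0$ for all $j$, together with $\sum_j u_j=c_1$; the quadratic constraint then reads $c_1^2<1$ and holds automatically. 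This is case (ii).

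\emph{Simplified model:} maximize $\sum_j u_j$ subject to $\twonorm{\mat{u}}^2\le1$ and $\normof{\mat{u}}{1}\le c_1$. By Cauchy--Schwarz $s\le\sqrt{p}\,\twonorm{\mat{u}}\le\sqrt{p}$, and again $s\le c_1$, so $s\le\min(\sqrt{p},c_1)$. When $c_1\ge\sqrt{p}$, equality $s=\sqrt{p}$ forces equality in Cauchy--Schwarz (all $u_j$ equal) together with $\twonorm{\mat{u}}=1$, hence $u_j=1/\sqrt{p}$ (the positive root, consistent with $\gamma^*>0$), which is the unique solution in case (i). When $1\le c_1<\sqrt{p}$, the value $c_1$ is attained at $\mat{u}=(c_1/p)\mathbf{1}$ (feasible since $c_1^2/p\le1$), and any maximizer must satisfy $\sum_j u_j=c_1$, $u_j\ge0$ for all $j$, and $\twonorm{\mat{u}}^2\le1$; this is case (ii), and its solution set is non-empty as just exhibited.

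The only delicate step is the reduction in the first paragraph: because SCCA is a joint optimization over $(\mat{u},\mat{v})$, identifying $\mat{u}^*$ as the solution of a pure linear program in $\sum_j u_j$ relies on excluding the degenerate case $\mat{x}^\trans\mat{Y}=\mat{0}$ and on the joint-sign symmetry used to normalize $\gamma^*>0$. Everything afterwards is routine bookkeeping of which constraint is active on each range of $c_1$, plus checking that the exhibited solution sets are non-empty and lie inside the original (quadratic) feasible region.
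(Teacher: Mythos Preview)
Your proof is correct and follows essentially the same approach as the paper: collapse the objective via $\mat{X}\mat{u}=s\,\mat{x}$ with $s=\sum_j u_j$, use the sign symmetry $(\mat{u},\mat{v})\mapsto(-\mat{u},-\mat{v})$ together with strict positivity of the optimal value to decouple, and then solve the resulting linear program in $s$ over the $\mat{u}$-feasible set. You are somewhat more explicit than the paper in two places---you isolate the degenerate case $\mat{x}^\trans\mat{Y}=\mat{0}$ (which the paper tacitly excludes when it asserts the optimal value is positive), and you actually carry out the case analysis on $c_1$ with Cauchy--Schwarz and witness points, whereas the paper simply writes ``Solving \ldots\ yields the optimal solution''---but the underlying route is identical.
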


\begin{proof}
We first prove the result for problem \eqref{equ:SCCA_model}, i.e., the SCCA model.

When $\mat{x}_1=\mat{x}_2=\dots=\mat{x}_p \triangleq \mat{x}$, the problem \eqref{equ:SCCA_model} reduces to
\begin{equation}\label{equ:standard_SCCA_model_particular_case}
\begin{aligned}
& \underset{\mat{u},\mat{v}}{\text{maximize}} & & (u_1+u_2+\dots+u_p) \mat{x}^\trans \mat{Y} \mat{v} \\
& \text{subject to} & & \abs{u_1+u_2+\dots+u_p} \leq 1, \normof{\mat{u}}{1} \leq c_1 \\
&                   & & \mat{v}^\trans \mat{Y}^\trans \mat{Y} \mat{v} \leq 1, \normof{\mat{v}}{1} \leq c_2,
\end{aligned}
\end{equation}
where $c_1 \geq 1$, $c_2 \geq 1$.

Note that the optimal solution to problem \eqref{equ:standard_SCCA_model_particular_case} is not unique because the objective function remains the same after we reverse the signs of both $\mat{u}$ and $\mat{v}$. To resolve this, we assume $u_1+u_2+\dots+u_p \geq 0$.

Note also that the optimal value of problem \eqref{equ:standard_SCCA_model_particular_case} is larger than zero when $c_1>0, c_2>0$.

As a result, $\mat{u}$ and $\mat{v}$ can be independently optimized:
% \begin{equation}\label{equ:standard_SCCA_model_particular_case_u}
% \begin{aligned}
% \mat{u}^* &= \underset{\mat{u}}{\argmax} & & (u_1+u_2+\dots+u_p) \\
%                       & \text{subject to} & & \abs{u_1+u_2+\dots+u_p} \leq 1, \normof{\mat{u}}{1} \leq c_1 \\
% \end{aligned}
% \end{equation}
% \vspace{1em}
% \begin{equation}\label{equ:standard_SCCA_model_particular_case_v}
% \begin{aligned}
% \mat{v}^* & = \underset{\mat{v}}{\argmax} & & \mat{x}^\trans \mat{Y} \mat{v} \\
% & \text{subject to} & & \mat{v}^\trans \mat{Y}^\trans \mat{Y} \mat{v} \leq 1, \normof{\mat{v}}{1} \leq c_2.
% \end{aligned}
% \end{equation}
\begin{align}
&
\begin{aligned}
\mat{u}^* &= \underset{\mat{u}}{\argmax} & & (u_1+u_2+\dots+u_p) \\
                      & \text{subject to} & & \abs{u_1+u_2+\dots+u_p} \leq 1, \normof{\mat{u}}{1} \leq c_1 \\
\end{aligned}\label{equ:standard_SCCA_model_particular_case_u} \\[6pt]
&
\begin{aligned}
\mat{v}^* & = \underset{\mat{v}}{\argmax} & & \mat{x}^\trans \mat{Y} \mat{v} \\
& \text{subject to} & & \mat{v}^\trans \mat{Y}^\trans \mat{Y} \mat{v} \leq 1, \normof{\mat{v}}{1} \leq c_2.
\end{aligned}\label{equ:standard_SCCA_model_particular_case_v}
\end{align}

Solving \eqref{equ:standard_SCCA_model_particular_case_u} yields the optimal solution $\mat{u}^*$ shown in \eqref{equ:standard_SCCA_model_particular_case_solution_u}-\eqref{equ:standard_SCCA_model_particular_case_solution_u_Case2}.

We next prove the result regarding problem \eqref{equ:simplified_SCCA_model}, i.e., the simplified SCCA model.

When $\mat{x}_1=\mat{x}_2=\dots=\mat{x}_p \triangleq \mat{x}$, the problem \eqref{equ:simplified_SCCA_model} reduces to
\begin{equation}\label{equ:simplified_SCCA_model_particular_case}
\begin{aligned}
& \underset{\mat{u},\mat{v}}{\text{maximize}} & & (u_1+u_2+\dots+u_p) \mat{x}^\trans \mat{Y} \mat{v} \\
& \text{subject to} & & \twonorm{\mat{u}}^2 \leq 1, \normof{\mat{u}}{1} \leq c_1 \\
&                   & & \twonorm{\mat{v}}^2 \leq 1, \normof{\mat{v}}{1} \leq c_2,
\end{aligned}
\end{equation}
where $c_1 \geq 1$, $c_2 \geq 1$.

To resolve sign ambiguity, we assume $u_1+u_2+\dots+u_p \geq 0$. Therefore, $\mat{u}$ and $\mat{v}$ can be independently optimized:
\begin{equation}\label{equ:simplified_SCCA_model_particular_case_u}
\begin{aligned}
\mat{u}^* & = \underset{\mat{u}}{\argmax} & & (u_1+u_2+\dots+u_p) \\
& \text{subject to} & & \twonorm{\mat{u}}^2 \leq 1, \normof{\mat{u}}{1} \leq c_1 \\
\end{aligned}
\end{equation}
\vspace{6pt}
\begin{equation}\label{equ:simplified_SCCA_model_particular_case_v}
\begin{aligned}
\mat{v}^* & = \underset{\mat{v}}{\argmax} & & \mat{x}^\trans \mat{Y} \mat{v} \\
& \text{subject to} & & \twonorm{\mat{v}}^2 \leq 1, \normof{\mat{v}}{1} \leq c_2.
\end{aligned}
\end{equation}

Solving \eqref{equ:simplified_SCCA_model_particular_case_u} yields the optimal solution shown in Lemma \ref{lema:grouping_effect_analysis_SCCA_models_particular_case} (simplified SCCA part).
\end{proof}

\lsc{We then provide a formal proof of the grouping effects in variable selection for the simplified SCCA.}
%
%\rev{Let $\mathop{{\rm S}}$ be the soft-thresholding operator:
%	\begin{equation*}
%	\opof{S}{a, \Delta} = %\coloneqq \signof{a} \left( \abs{a}-\Delta \right)^{+} =
%	\begin{cases}
%	a-\Delta, & a>\Delta \\
%	a+\Delta, & a<-\Delta \\
%	0, & -\Delta \leq a \leq \Delta
%	\end{cases}
%	\end{equation*}
%	where $\Delta$ is a non-negative constant.}

\begin{thm}\label{thm:grouping_effect}
Given data $\left(\mat{X},\mat{Y}\right)$, with columns standardized to zero mean and unit norm, and regularization parameters $\left(c_1,c_2\right)$. Let $\left(\mat{u}^*,\mat{v}^*\right)$ be an optimal solution to problem \eqref{equ:simplified_SCCA_model}. %Suppose that $c_1 \geq \sqrt{\abs{\set{S}}}$, $c_2 \geq \sqrt{\abs{\set{S}'}}$, where $\set{S} = \left\{i \mid \; \abs{[\mat{X}^\trans \mat{Y} \mat{v}^*]_i} = \max_{i'} \abs{[\mat{X}^\trans \mat{Y} \mat{v}^*]_{i'}} \right\}$ and $\set{S}' = \left\{j \mid \; \abs{[\mat{Y}^\trans \mat{X} \mat{u}^*]_j} = \max_{j'} \abs{[\mat{Y}^\trans \mat{X} \mat{u}^*]_{j'}} \right\}$. %
Assume at $\left(\mat{u}^*,\mat{v}^*\right)$ the L2 inequality constraint on $\mat{u}$ is strongly active. We have:
\begin{itemize}
  \item when $u_i^* u_j^*>0$
  \begin{align}\label{equ:ui-uj_bound}
         &\abs{u_i^* - u_j^*} \\
    \leq\; &\frac{1}{\alpha_1} \minof{\rev{\sigma_{\rm max}}\left(\mat{Y}\right), c_2 \sqrt{\sum_{\ell=1}^n \max_{1 \leq j \leq q} y_{\ell j}^2}} \sqrt{\left(1-r_{ij}\right)/2} \nonumber
  \end{align}
    \item when $u_i^* u_j^*<0$
\begin{align}\label{equ:ui+uj_bound}
         &\abs{u_i^* + u_j^*} \\
    \leq\; &\frac{1}{\alpha_1} \minof{\rev{\sigma_{\rm max}}\left(\mat{Y}\right), c_2 \sqrt{\sum_{\ell=1}^n \max_{1 \leq j \leq q} y_{\ell j}^2}} \sqrt{\left(1+r_{ij}\right)/2}, \nonumber
  \end{align}
\end{itemize}
% \begin{itemize}
%   \item when $u_i^* u_j^*>0$
%   \begin{equation}\label{equ:ui-uj_bound}
%   \begin{aligned}
%          &\abs{u_i^* - u_j^*} \\
%     \leq\; &\frac{1}{\alpha_1} \minof{\sigma_{\rm max}\left(\mat{Y}\right), c_2 \sqrt{\sum_{\ell=1}^n \max_{1 \leq j \leq q} y_{\ell j}^2}} \sqrt{\left(1-r_{ij}\right)/2}
%   \end{aligned}
%   \end{equation}
%     \item when $u_i^* u_j^*<0$
%     \begin{equation}\label{equ:ui+uj_bound}
% \begin{aligned}
%          &\abs{u_i^* + u_j^*} \\
%     \leq\; &\frac{1}{\alpha_1} \minof{\sigma_{\rm max}\left(\mat{Y}\right), c_2 \sqrt{\sum_{\ell=1}^n \max_{1 \leq j \leq q} y_{\ell j}^2}} \sqrt{\left(1+r_{ij}\right)/2},
%   \end{aligned}
%   \end{equation}
% \end{itemize}
where $r_{ij} = \mat{x}_i^\trans \mat{x}_j \in [-1,1]$ is the Pearson correlation coefficient between $\mat{x}_i$ and $\mat{x}_j$, and $\alpha_1>0$ is a constant that \lsc{only} depends on $\left(\mat{X},\mat{Y}, c_1,c_2\right)$.

Likewise, if at $\left(\mat{u}^*,\mat{v}^*\right)$ the L2 inequality constraint on $\mat{v}$ is strongly active, we have
\begin{itemize}
  \item when $v_i^* v_j^*>0$
  \begin{align}\label{equ:vi-vj_bound}
    &\abs{v_i^* - v_j^*} \\
    \leq\; &\frac{1}{\alpha_2} \minof{\rev{\sigma_{\rm max}}\left(\mat{X}\right), c_1 \sqrt{\sum_{\ell=1}^n \max_{1 \leq i \leq p} x_{\ell i}^2}} \sqrt{(1-r_{ij}')/2} \nonumber
  \end{align}
    \item when $v_i^* v_j^*<0$
\begin{align}\label{equ:vi+vj_bound}
    &\abs{v_i^* + v_j^*} \\
    \leq\; &\frac{1}{\alpha_2} \minof{\rev{\sigma_{\rm max}}\left(\mat{X}\right), c_1 \sqrt{\sum_{\ell=1}^n \max_{1 \leq i \leq p} x_{\ell i}^2}} \sqrt{\left(1+r_{ij}'\right)/2}, \nonumber
  \end{align}
\end{itemize}
% \begin{itemize}
%   \item when $v_i^* v_j^*>0$
%   \begin{equation}\label{equ:vi-vj_bound}
%   \begin{aligned}
%     &\abs{v_i^* - v_j^*} \\
%     \leq\; &\frac{1}{\alpha_2} \minof{\sigma_{\rm max}\left(\mat{X}\right), c_1 \sqrt{\sum_{\ell=1}^n \max_{1 \leq i \leq p} x_{\ell i}^2}} \sqrt{(1-r_{ij}')/2}
%   \end{aligned}
%   \end{equation}
%     \item when $v_i^* v_j^*<0$
%     \begin{equation}\label{equ:vi+vj_bound}
% \begin{aligned}
%     &\abs{v_i^* + v_j^*} \\
%     \leq\; &\frac{1}{\alpha_2} \minof{\sigma_{\rm max}\left(\mat{X}\right), c_1 \sqrt{\sum_{\ell=1}^n \max_{1 \leq i \leq p} x_{\ell i}^2}} \sqrt{\left(1+r_{ij}'\right)/2},
%   \end{aligned}
%   \end{equation}
% \end{itemize}
where $r_{ij}' = \mat{y}_i^\trans \mat{y}_j \in [-1,1]$ is the Pearson correlation coefficient between $\mat{y}_i$ and $\mat{y}_j$, and $\alpha_2>0$ is a constant that \lsc{only} depends on $\left(\mat{X},\mat{Y}, c_1,c_2\right)$.
\end{thm}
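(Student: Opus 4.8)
The plan is to reduce the statement to the first‑order optimality conditions of the convex subproblem in $\mat{u}$ and then run the classical elastic‑net grouping argument \cite{zou2005regularization}.

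First I would fix $\mat{v}=\mat{v}^*$ in \eqref{equ:simplified_SCCA_model}. Since $\left(\mat{u}^*,\mat{v}^*\right)$ is globally optimal and the objective is bilinear, $\mat{u}^*$ must solve $\max_{\mat{u}}\;\mat{a}^\trans\mat{u}$ subject to $\twonorm{\mat{u}}^2\le 1,\ \normof{\mat{u}}{1}\le c_1$, where $\mat{a}=\mat{X}^\trans\mat{Y}\mat{v}^*$ and $a_k=\mat{x}_k^\trans\mat{Y}\mat{v}^*$. This is the maximization of a linear function over a compact convex set, and $\mathbf{0}$ is strictly feasible, so Slater's condition holds and the KKT conditions are necessary: there exist multipliers $\lambda\ge 0$ (for the L2 constraint), $\mu\ge 0$ (for the L1 constraint) and a subgradient $\mat{s}\in\partial\normof{\mat{u}^*}{1}$ such that $\mat{a}-2\lambda\mat{u}^*-\mu\mat{s}=\mathbf{0}$; in particular $a_k=2\lambda u_k^*+\mu\,\signof{u_k^*}$ for every $k$ with $u_k^*\ne 0$. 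The hypothesis that the L2 constraint on $\mat{u}$ is strongly active gives $\twonorm{\mat{u}^*}=1$ and $\lambda>0$, and I would set $\alpha_1:=\lambda>0$.

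Next I would subtract coordinate equations for a selected pair $(i,j)$. If $u_i^*u_j^*>0$ then $\signof{u_i^*}=\signof{u_j^*}$, the $\mu\mat{s}$ terms cancel, and $2\lambda\left(u_i^*-u_j^*\right)=a_i-a_j=\left(\mat{x}_i-\mat{x}_j\right)^\trans\mat{Y}\mat{v}^*$; if $u_i^*u_j^*<0$ then $\signof{u_i^*}=-\signof{u_j^*}$ and adding the two equations gives $2\lambda\left(u_i^*+u_j^*\right)=a_i+a_j=\left(\mat{x}_i+\mat{x}_j\right)^\trans\mat{Y}\mat{v}^*$. By Cauchy--Schwarz, $\left|a_i\mp a_j\right|\le\twonorm{\mat{x}_i\mp\mat{x}_j}\,\twonorm{\mat{Y}\mat{v}^*}$, and since the columns of $\mat{X}$ have unit norm, $\twonorm{\mat{x}_i-\mat{x}_j}=\sqrt{2\left(1-r_{ij}\right)}=2\sqrt{\left(1-r_{ij}\right)/2}$ and $\twonorm{\mat{x}_i+\mat{x}_j}=2\sqrt{\left(1+r_{ij}\right)/2}$. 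Then I would estimate $\twonorm{\mat{Y}\mat{v}^*}$ in two ways: $\twonorm{\mat{Y}\mat{v}^*}\le\sigma_{\rm max}\left(\mat{Y}\right)\twonorm{\mat{v}^*}\le\sigma_{\rm max}\left(\mat{Y}\right)$ from the L2 constraint; and entrywise $\left|\left(\mat{Y}\mat{v}^*\right)_\ell\right|\le\max_{j}\left|y_{\ell j}\right|\,\normof{\mat{v}^*}{1}\le c_2\max_{j}\left|y_{\ell j}\right|$, hence $\twonorm{\mat{Y}\mat{v}^*}\le c_2\sqrt{\sum_{\ell}\max_{j}y_{\ell j}^2}$ from the L1 constraint. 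Taking the minimum of the two and dividing by $2\alpha_1$ yields \eqref{equ:ui-uj_bound} and \eqref{equ:ui+uj_bound}. The constant $\alpha_1$ is a Lagrange multiplier of the reduced problem, whose data $\mat{a}=\mat{X}^\trans\mat{Y}\mat{v}^*$ and constraint radius $c_1$ are determined by $\left(\mat{X},\mat{Y},c_1,c_2\right)$, so $\alpha_1$ depends only on $\left(\mat{X},\mat{Y},c_1,c_2\right)$.

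The companion bounds \eqref{equ:vi-vj_bound}--\eqref{equ:vi+vj_bound} follow by the mirror‑image argument: fix $\mat{u}=\mat{u}^*$, apply the same reasoning to $\max_{\mat{v}}\left(\mat{Y}^\trans\mat{X}\mat{u}^*\right)^\trans\mat{v}$ with $\alpha_2$ the positive multiplier of the (now strongly active) L2 constraint on $\mat{v}$, and bound $\twonorm{\mat{X}\mat{u}^*}\le\minof{\sigma_{\rm max}\left(\mat{X}\right),\,c_1\sqrt{\sum_{\ell}\max_{i}x_{\ell i}^2}}$. I expect the only genuine subtlety to be the nonsmooth KKT bookkeeping---checking Slater's condition, handling $\partial\normof{\cdot}{1}$ on the support of $\mat{u}^*$, and making precise that the multiplier $\alpha_1$ (hence the constant in the bound) may be taken to depend on the problem data alone even when the optimal solution is not unique; the remaining steps are Cauchy--Schwarz and elementary norm estimates.
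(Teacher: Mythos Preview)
Your proposal is correct and follows essentially the same route as the paper: fix $\mat{v}=\mat{v}^*$, invoke Slater/KKT for the convex subproblem in $\mat{u}$, subtract (or add) the stationarity equations on the support, apply Cauchy--Schwarz together with the unit-norm column identity $\twonorm{\mat{x}_i\mp\mat{x}_j}=2\sqrt{(1\mp r_{ij})/2}$, and bound $\twonorm{\mat{Y}\mat{v}^*}$ via both the L2 and L1 constraints. The paper additionally records the explicit formula $\alpha_1=\tfrac12\twonorm{\opof{S}{\mat{X}^\trans\mat{Y}\mat{v}^*,\lambda_1}}$ for the multiplier, but otherwise your argument matches.
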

\begin{proof}
Since each subproblem (solve for $\mat{u}$ with $\mat{v}$ fixed or solve for $\mat{v}$ with $\mat{u}$ fixed) is a convex optimization problem with differentiable objective and constraint functions (The L1 inequality constraint can be written as $2^p$ linear inequality constraints), and is strictly feasible (Slater's condition holds), the KKT conditions provide necessary and sufficient conditions for optimality \cite{Boyd2004}.

The KKT conditions for the optimality of $\mat{u}^*$ consist of the following conditions:
\begin{equation}\label{equ:simplified_SCCA_model_KKT_conditions_stationarity_u}
    2\alpha_1 \mat{u}^* + \lambda_1 \mat{s} = \mat{X}^\trans \mat{Y} \mat{v}^*,
\end{equation}
where $s_i=\signof{u_i^*}$ if $u_i^* \neq 0$; otherwise, $s_i \in [-1,1]$.
\begin{align}
    \alpha_1 \geq 0, \quad \twonorm{\mat{u}^*}^2 \leq 1, \quad \alpha_1 \left(\twonorm{\mat{u}^*}^2 - 1\right) = 0
    \label{equ:simplified_SCCA_model_KKT_conditions_L2_constraint_u} \\
    \lambda_1 \geq 0, \quad \normof{\mat{u}^*}{1} \leq c_1, \quad \lambda_1 \left(\normof{\mat{u}^*}{1} - c_1\right) = 0.
    \label{equ:simplified_SCCA_model_KKT_conditions_L1_constraint_u}
\end{align}

If $u_i^* u_j^*>0$, then both $u_i^*$ and $u_j^*$ are non-zero with $\signof{u_i^*}=\signof{u_j^*}$. From \eqref{equ:simplified_SCCA_model_KKT_conditions_stationarity_u}, it follows that
\begin{align}
    2\alpha_1 u_i^* + \lambda_1 \signof{u_i^*} &= \mat{x}_i^\trans \mat{Y} \mat{v}^* \label{equ:simplified_SCCA_model_KKT_conditions_stationarity_i}\\
    2\alpha_1 u_j^* + \lambda_1 \signof{u_j^*} &= \mat{x}_j^\trans \mat{Y} \mat{v}^*. \label{equ:simplified_SCCA_model_KKT_conditions_stationarity_j}
\end{align}

Subtracting \eqref{equ:simplified_SCCA_model_KKT_conditions_stationarity_j} from \eqref{equ:simplified_SCCA_model_KKT_conditions_stationarity_i} gives
\begin{equation}%\label{}
    2\alpha_1 \left(u_i^* - u_j^*\right) = \left(\mat{x}_i-\mat{x}_j\right)^\trans \mat{Y} \mat{v}^*.
\end{equation}

Therefore, we have
\begin{equation}\label{equ:ui-uj_bound1}
\begin{aligned}
\abs{u_i^* - u_j^*} &= \frac{1}{2\alpha_1} \abs{\left(\mat{x}_i-\mat{x}_j\right)^\trans \mat{Y} \mat{v}^*} \\
                 &\leq \frac{1}{2\alpha_1} \twonorm{\mat{x}_i-\mat{x}_j} \twonorm{\mat{Y} \mat{v}^*}. %\\
    %\leq \frac{1}{2\alpha_1} \minof{\sigma_{\rm max}\left(\mat{Y}\right), c_2 \sqrt{\sum_{\ell=1}^n \max_{1 \leq j \leq q} y_{\ell j}^2}} \sqrt{2\left(1-r_{ij}\right)}
\end{aligned}
\end{equation}

Since $\mat{X}$ is column standardized, we have %$\twonorm{\mat{x}_i-\mat{x}_j} = \sqrt{\twonorm{\mat{x}_i}^2 + \twonorm{\mat{x}_j}^2 - 2\mat{x}_i^\trans \mat{x}_j} = \sqrt{2\left(1-r_{ij}\right)}$,
\begin{equation}\label{equ:samplePearson_correlation_coefficient}
    \twonorm{\mat{x}_i-\mat{x}_j} = \sqrt{\twonorm{\mat{x}_i}^2 + \twonorm{\mat{x}_j}^2 - 2\mat{x}_i^\trans \mat{x}_j} = \sqrt{2\left(1-r_{ij}\right)},
\end{equation}
where $r_{ij} = \mat{x}_i^\trans \mat{x}_j$ is the sample Pearson correlation coefficient between $\mat{x}_i$ and $\mat{x}_j$.

In the domain of problem \eqref{equ:simplified_SCCA_model}, it holds that % (defined by its L2 and L1 inequality constraints)
	\begin{equation}\label{equ:ub11}
	\twonorm{\mat{Y} \mat{v}^*} \leq \sigma_{\rm max}\left(\mat{Y}\right) \twonorm{\mat{v}^*} \leq \sigma_{\rm max}\left(\mat{Y}\right)
	\end{equation}
	and
\begin{equation}\label{equ:ub12}
\begin{aligned}
	  \twonorm{\mat{Y} \mat{v}^*} &= \sqrt{\sum_{\ell=1}^n \left(\sum_{j=1}^q y_{\ell j} v_j^*\right)^2} \\
\leq &\sqrt{\sum_{\ell=1}^n \max_{1 \leq j \leq q} y_{\ell j}^2 \left(\sum_{j=1}^q \abs{v_j^*}\right)^2} \\
   = &\sqrt{\sum_{\ell=1}^n \max_{1 \leq j \leq q} y_{\ell j}^2} \normof{\mat{v}^*}{1} \\
\leq &c_2 \sqrt{\sum_{\ell=1}^n \max_{1 \leq j \leq q} y_{\ell j}^2},
\end{aligned}
\end{equation}
%where $\normof{\cdot}{2}$ denote the spectral norm of its matrix argument, and the absolute value operator is applied to each element of a matrix or vector.
where in \eqref{equ:ub11} and \eqref{equ:ub12} we have used the L2 and L1 constraints in problem \eqref{equ:simplified_SCCA_model}, respectively.

Substituting \eqref{equ:samplePearson_correlation_coefficient}-\eqref{equ:ub12} into \eqref{equ:ui-uj_bound1}, we arrive at
%\begin{equation}
\begin{align}%\label{}
       &\abs{u_i^* - u_j^*} \nonumber \\
\leq\; &\frac{1}{\alpha_1} \minof{\sigma_{\rm max}\left(\mat{Y}\right), c_2 \sqrt{\sum_{\ell=1}^n \max_{1 \leq j \leq q} y_{\ell j}^2}} \sqrt{\left(1-r_{ij}\right)/2}.
\end{align}
%\end{equation}

Since the L2 inequality constraint on $\mat{u}$ is strongly active at $\left(\mat{u}^*,\mat{v}^*\right)$, we have $\alpha_1 > 0$. Specifically, combining conditions \eqref{equ:simplified_SCCA_model_KKT_conditions_stationarity_u}-\eqref{equ:simplified_SCCA_model_KKT_conditions_L1_constraint_u} yields %$\alpha_1 = \frac{1}{2}\twonorm{\rev{\opof{S}{\mat{X}^\trans \mat{Y} \mat{v}^*,\lambda_1}}}$,
\begin{equation}
    \alpha_1 = \frac{1}{2}\twonorm{\rev{\opof{S}{\mat{X}^\trans \mat{Y} \mat{v}^*,\lambda_1}}},
\end{equation}
where $\lambda_1 = 0$ if this results in $\frac{\normof{\mat{X}^\trans \mat{Y} \mat{v}^*}{1}}{\twonorm{\mat{X}^\trans \mat{Y} \mat{v}^*}} \leq c_1$; otherwise, $\lambda_1$ is the smallest positive number for which it satisfies $\frac{\normof{\rev{\opof{S}{\mat{X}^\trans \mat{Y} \mat{v}^*,\lambda_1}}}{1}}{\twonorm{\rev{\opof{S}{\mat{X}^\trans \mat{Y} \mat{v}^*,\lambda_1}}}} = c_1$. Thus we obtain \eqref{equ:ui-uj_bound}.

Using a similar line of argumentation, we can prove \eqref{equ:ui+uj_bound} and \eqref{equ:vi-vj_bound}-\eqref{equ:vi+vj_bound}.
\end{proof}

\begin{figure*}[tb]
%\vskip 0.2in
\begin{center}
\subfigure[]{\label{fig:a}\includegraphics[width=.25\textwidth]{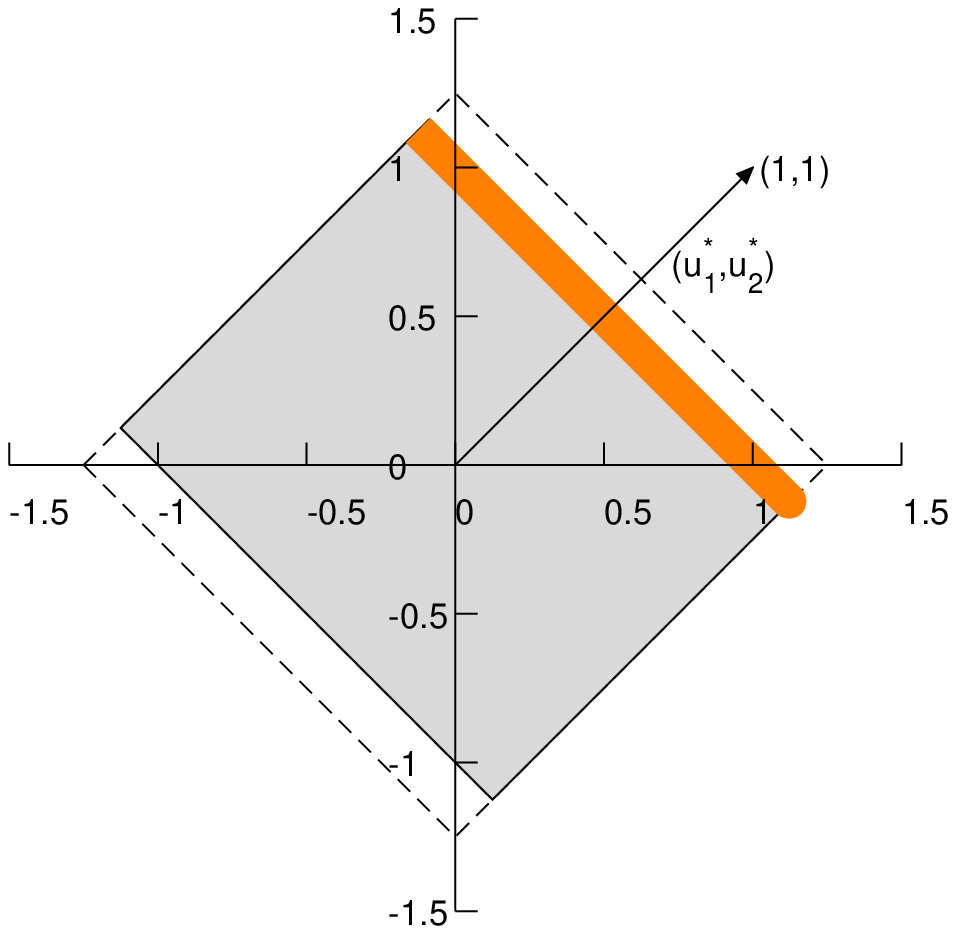}}
\subfigure[]{\label{fig:b}\includegraphics[width=.25\textwidth]{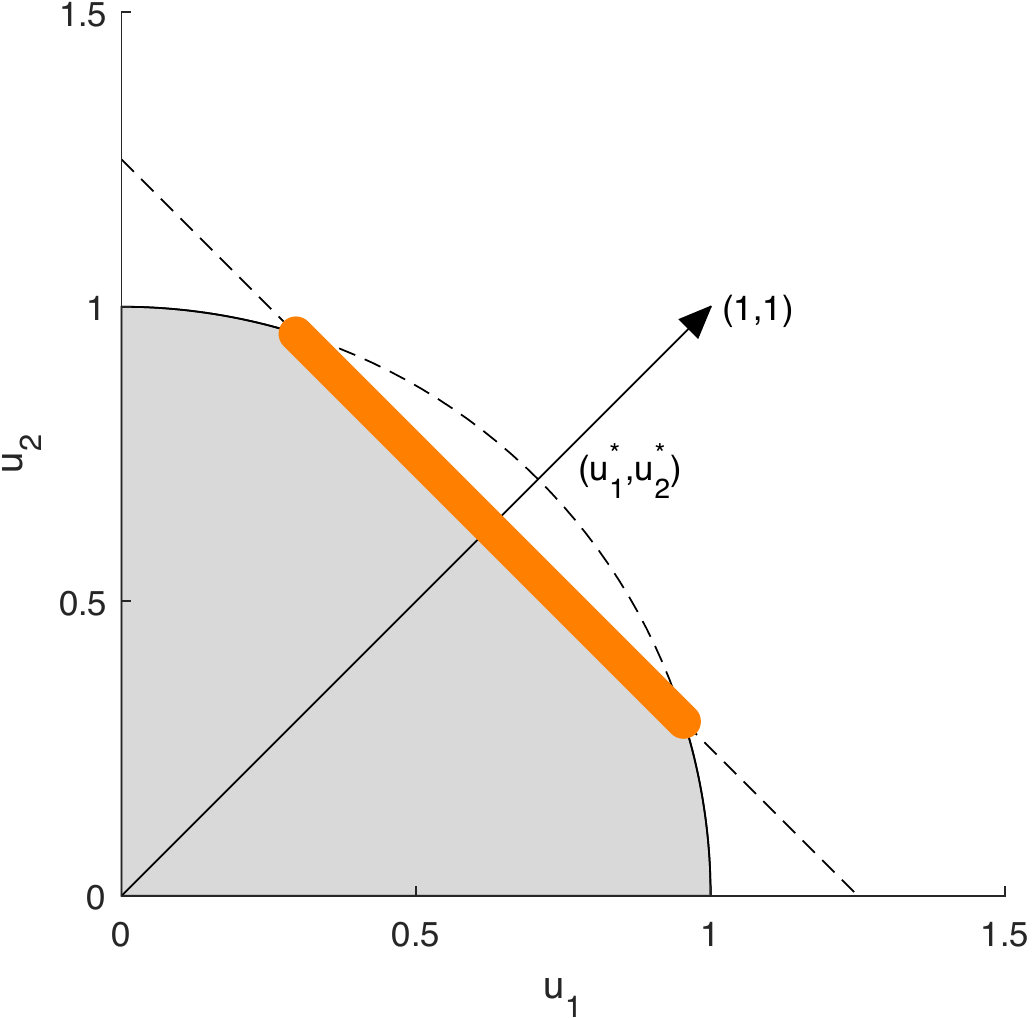}}
\subfigure[]{\label{fig:c}\includegraphics[width=.25\textwidth]{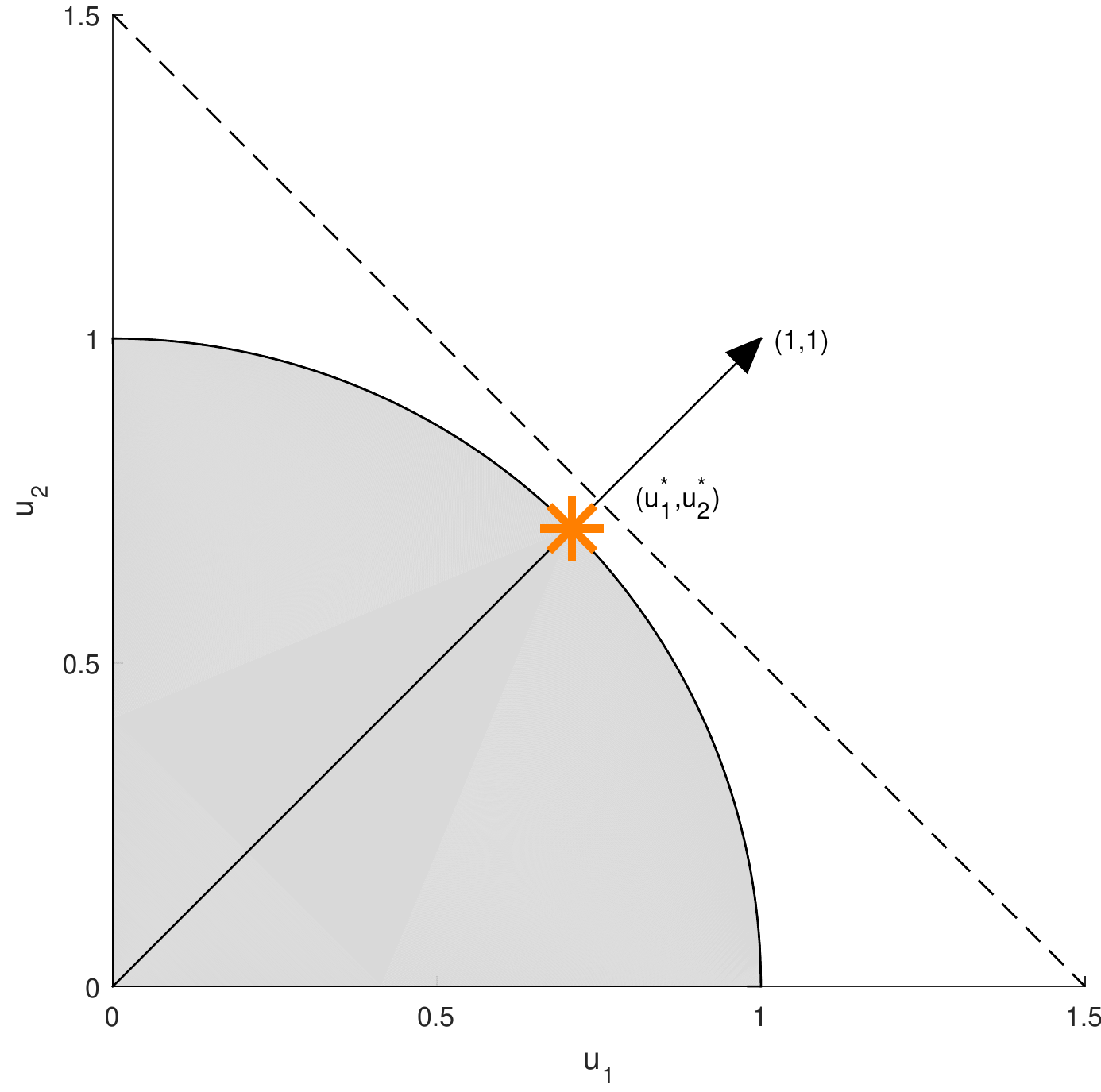}}
\caption{The optimal solution set $\mat{u}^*$ with $p=2$ identical variables. (a) the SCCA problem with $c_1 = 1.25$; (b) simplified SCCA with $c_1 = 1.25$; (c) simplified SCCA with $c_1 = 1.5$. The feasible set of points are shown lightly shaded. The optimal points are highlighted in orange.}
\label{fig:SCCA_illustration_solution_u_p2}
\end{center}
%\vskip -0.2in
\end{figure*}

Fig.~\ref{fig:SCCA_illustration_solution_u_p2} illustrates the optimal solution $\mat{u}^*$ to problems \eqref{equ:SCCA_model} and \eqref{equ:simplified_SCCA_model} with $p=2$ identical $X$ variables. We see that for SCCA (Fig. \ref{fig:a}), the optimal solution set is a line segment that cross the axes (i.e., includes sparse solutions). While for simplified SCCA (Figs. \ref{fig:b}-\ref{fig:c}), the optimal solution set does not intersect with the axes (i.e., does not include sparse or nearly sparse solutions); in particular, when the L2 constraint on $\mat{u}$ is strongly active at the optimal solution, i.e., when $c_1 \geq \sqrt{2}$, the optimal solution set contains a single point with equal coordinates: $(u_1^*,u_2^*)=\left(\frac{\sqrt{2}}{2},\frac{\sqrt{2}}{2}\right)$.

\section{Optimization Algorithms}
\label{sec:algorithm}
Both problems \eqref{equ:SCCA_model} and \eqref{equ:simplified_SCCA_model} are bi-convex, i.e., convex in $\mathbf{u}$ with $\mathbf{v}$ fixed and in $\mathbf{v}$ with $\mathbf{u}$ fixed, but not jointly convex in $\mathbf{u}$ and $\mathbf{v}$. A standard method to solve the SCCA models is alternating optimization~\cite{Bezdek2002}: it first updates $\mat{u}$ while holding $\mat{v}$ fixed and then updates $\mat{v}$ while holding $\mat{u}$ fixed, and repeats this process until convergence.

\subsection{SCCA model \eqref{equ:SCCA_model}}
\label{subsec:algorithm_SCCA_model}

The SCCA model fitting algorithm is shown in Algorithm \ref{alg:SCCA}.

\begin{algorithm}
	\caption{SCCA algorithm}
	\label{alg:SCCA}
	\begin{algorithmic}[1]
		\STATE Initialize $\mat{v}$;
		\REPEAT
		
		\STATE Update $\mat{u}$ with $\mat{v}$ fixed:
\begin{equation}\label{equ:SCCA_model_solve_for_u}
\begin{aligned}
& \underset{\mat{u}}{\text{maximize}} & & \mat{u}^\trans \mat{X}^\trans \mat{Y} \mat{v} \\
& \text{subject to}                           & & \twonorm{\mat{X} \mat{u}}^2 \leq 1, \normof{\mat{u}}{1} \leq c_1
\end{aligned}
\end{equation}
		
		\STATE Update $\mat{v}$ with $\mat{u}$ fixed:;	
		\begin{equation}\label{equ:SCCA_model_solve_for_v}
\begin{aligned}
& \underset{\mat{v}}{\text{maximize}} & & \mat{u}^\trans \mat{X}^\trans \mat{Y} \mat{v} \\
& \text{subject to}                           & & \mat{v}^\trans \mat{Y}^\trans \mat{Y} \mat{v} \leq 1, \normof{\mat{v}}{1} \leq c_2
\end{aligned}
\end{equation}
		\UNTIL{convergence.}% or maximum number of iterations is reached.}
		%\EndProcedure
	\end{algorithmic}
\end{algorithm}

Both problems \eqref{equ:SCCA_model_solve_for_u} and \eqref{equ:SCCA_model_solve_for_v} are convex optimization problems, and in~\cite{suo2017sparse} the linearized alternating direction method of multipliers (ADMM)~\cite{Boyd2011} algorithm has been proposed to solve each of them.
Since in~\cite{suo2017sparse} it uses a slightly different formulation (\lsc{therein the L1 regularizer appears in the objective function}), we have \lsc{presented a new} linearized ADMM algorithm to solve problem \eqref{equ:SCCA_model_solve_for_u} in Supplementary Materials \ref{sec_in_supp:ADMM}.

\subsection{Simplified SCCA model \eqref{equ:simplified_SCCA_model}}
\label{sec:algorithm}

%Let $\mathop{{\rm S}}$ be the soft-thresholding operator:
%	\begin{equation*}
%	\opof{S}{a, \Delta} = %\coloneqq \signof{a} \left( \abs{a}-\Delta \right)^{+} =
%	\begin{cases}
%	a-\Delta, & a>\Delta \\
%	a+\Delta, & a<-\Delta \\
%	0, & -\Delta \leq a \leq \Delta
%	\end{cases}
%	\end{equation*}
%	where $\Delta$ is a non-negative constant.
	
We first introduce the following lemma, which will be used \lsc{as a building block} in the simplified SCCA algorithm.

\begin{lem}\label{lema:solution_QCLP}
	Consider the quadratically constrained linear program (QCLP) optimization problem:
	\begin{equation}\label{equ:QCLP}
	\underset{\mat{u}}{\text{maximize}} \; \mat{a}^\trans \mat{u}  \quad  \text{subject to } \twonorm{\mat{u}}^2 \leq 1, \normof{\mat{u}}{1} \leq c,
	\end{equation}
	where $c>0$ is a constant.

Define $\set{S} = \left\{i: i \in \argmax_j \abs{a_j} \right\}$. The optimal solution $\mat{u}^*$ to \eqref{equ:QCLP} is as below.
\begin{itemize}
  \item Case 1~\footnote{In Case 1, the solution is generally not unique. Specifically, the optimal solution has the following form:
\begin{equation*}
	[\mat{u}^*]_i =
	\begin{cases}
	w_i \signof{a_i} , & i \in \set{S} \\
	0, & i \notin \set{S}
	\end{cases}
\end{equation*}
where $w_i$, $i \in \set{S}$, can be any non-negative numbers that satisfy
$\sum_{i \in \set{S}} w_i^2 \leq 1, \quad \sum_{i \in \set{S}} w_i = c$.
The presented solution is the solution that minimizes $\sum_{i \in \set{S}} w_i^2$.}: $c < \sqrt{\abs{\set{S}}}$
\begin{equation}\label{equ:solution_QCLP_Case1}
	[\mat{u}^*]_i =
	\begin{cases}
	\frac{c}{\abs{\set{S}}} \signof{a_i} , & i \in \set{S} \\
	0, & i \notin \set{S}
	\end{cases}
\end{equation}

  \item Case 2: $c \geq \sqrt{\abs{\set{S}}}$
	\begin{equation}\label{equ:solution_QCLP_Case3}
	\mat{u}^* = \frac{\rev{\opof{S}{\mat{a},\Delta}}}{\twonorm{\rev{\opof{S}{\mat{a},\Delta}}}}
	\end{equation}
	where $\Delta=0$ if this results $\normof{\mat{u}^*}{1} \leq c$; otherwise, $\Delta > 0$ satisfies $\normof{\mat{u}^*}{1} = c$. Here the soft-thresholding $\rev{\opof{S}{\mat{a},\Delta}}$ is applied to $\mat{a}$ coordinate-wise.
\end{itemize}
\end{lem}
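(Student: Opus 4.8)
The plan is to treat~\eqref{equ:QCLP} as a convex program and certify the claimed $\mat{u}^*$ through the KKT conditions, exactly as in the proof of Theorem~\ref{thm:grouping_effect}. First I would dispose of the degenerate case $\mat{a}=\mat{0}$, where every feasible point (in particular $\mat{u}^*=\mat{0}$) is optimal. So assume $\mat{a}\neq\mat{0}$. Both constraints in~\eqref{equ:QCLP} are invariant under flipping the sign of a coordinate of $\mat{u}$, and such a flip is equivalent to flipping the sign of the corresponding entry of $\mat{a}$ in the objective; hence we may assume without loss of generality that $a_i\ge 0$ for all $i$, the factor $\signof{a_i}$ in~\eqref{equ:solution_QCLP_Case1} and the coordinatewise soft-thresholding in~\eqref{equ:solution_QCLP_Case3} being precisely what restores the general case. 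Write $M=\max_i a_i>0$ and $k=\abs{\set{S}}$, so $a_i=M$ for $i\in\set{S}$ and $0\le a_i<M$ otherwise. Since the objective is linear, the constraints convex, and $\mat{u}=\mat{0}$ is strictly feasible (this uses $c>0$), the KKT conditions---stationarity $\mat{a}=2\alpha\mat{u}^*+\lambda\mat{s}$ with $\alpha,\lambda\ge 0$ and $\mat{s}$ a sign vector of $\mat{u}^*$ (arbitrary in $[-1,1]$ on the zero coordinates), together with primal feasibility and complementary slackness---are necessary and sufficient for global optimality, as argued in the proof of Theorem~\ref{thm:grouping_effect}. It therefore suffices to exhibit, in each case, a feasible $\mat{u}^*$ of the claimed form together with multipliers $(\alpha,\lambda)$ satisfying these conditions.

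For Case~1 ($c<\sqrt{\abs{\set{S}}}$) I would take $\mat{u}^*$ as in~\eqref{equ:solution_QCLP_Case1}, so $\normof{\mat{u}^*}{1}=c$ and $\twonorm{\mat{u}^*}^2=c^2/k<1$: the $\ell_2$ constraint is slack, forcing $\alpha=0$. Setting $\lambda=M>0$, stationarity becomes $a_i=M s_i$ coordinatewise, which holds with $s_i=1$ for $i\in\set{S}$ (there $u_i^*>0$, $a_i=M$) and with the admissible subgradient $s_i=a_i/M\in[0,1)$ for $i\notin\set{S}$ (there $u_i^*=0$); complementary slackness is immediate. (Equivalently and more directly, any feasible $\mat{u}$ obeys $\mat{a}^\trans\mat{u}\le\normof{\mat{a}}{\infty}\normof{\mat{u}}{1}\le Mc$, and $\mat{u}^*$ attains $Mc$ while feasible, which already identifies the optimal value and the support/sign pattern of \emph{every} optimizer; the footnote's non-uniqueness then follows, the displayed $\mat{u}^*$ being the one minimizing $\sum_{i\in\set{S}}w_i^2$ by the power-mean inequality $\sum_{i\in\set S} w_i^2\ge c^2/k$, whose minimum value $c^2/k$ is $<1$ exactly because $c<\sqrt{k}$.)

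For Case~2 ($c\ge\sqrt{\abs{\set{S}}}$) the only nontrivial step is choosing the threshold $\Delta$. I would analyze $g(\Delta):=\normof{\opof{S}{\mat{a},\Delta}}{1}/\twonorm{\opof{S}{\mat{a},\Delta}}$ on $\Delta\in[0,M)$: it is continuous there, since on each interval where the active set $\{i:a_i>\Delta\}$ is constant it is a ratio of continuous functions with nonvanishing denominator, and at a breakpoint $\Delta=a_j$ the coordinates leaving the active set contribute terms vanishing in both numerator and denominator, so the one-sided limits agree. Moreover $g(0)=\normof{\mat{a}}{1}/\twonorm{\mat{a}}$, and writing $B=M-\Delta$ and splitting off the $k$ tied top entries one gets $g(\Delta)^2=(kB+T_1)^2/(kB^2+T_2)$ with $T_1=\sum_{i\notin\set S,\,a_i>\Delta}(a_i-\Delta)\ge 0$ and $T_2=\sum_{i\notin\set S,\,a_i>\Delta}(a_i-\Delta)^2\le B T_1$, whence $g(\Delta)^2\ge(kB+T_1)/B\ge k$, with $g(\Delta)=\sqrt{k}$ as soon as $\{i:a_i>\Delta\}=\set{S}$. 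Hence $g$ takes every value in $[\sqrt{k},\,g(0)]$: if $c\ge g(0)$ put $\Delta=0$, otherwise the intermediate value theorem supplies $\Delta\in(0,M)$ with $g(\Delta)=c$. Now set $\mat{u}^*=\opof{S}{\mat{a},\Delta}/\twonorm{\opof{S}{\mat{a},\Delta}}$, $\alpha=\tfrac12\twonorm{\opof{S}{\mat{a},\Delta}}>0$, $\lambda=\Delta\ge 0$. Then $\twonorm{\mat{u}^*}=1$ and $\normof{\mat{u}^*}{1}=g(\Delta)\le c$, while $2\alpha\mat{u}^*=\opof{S}{\mat{a},\Delta}$ so stationarity $a_i=[\opof{S}{\mat{a},\Delta}]_i+\lambda s_i$ reads $a_i=(a_i-\Delta)+\Delta\cdot 1$ when $a_i>\Delta$ (then $u_i^*>0$) and $a_i=0+\Delta\cdot(a_i/\Delta)$ when $a_i\le\Delta$ (then $u_i^*=0$, $s_i=a_i/\Delta\in[0,1]$), and complementary slackness again holds. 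At $c=\sqrt{k}$ the admissible $\Delta$ makes~\eqref{equ:solution_QCLP_Case1} and~\eqref{equ:solution_QCLP_Case3} coincide, so the two cases dovetail.

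The main obstacle is the threshold-selection step in Case~2: one must check carefully that $g$ is continuous across the soft-thresholding breakpoints and that its range over $[0,M)$ is exactly $[\sqrt{\abs{\set{S}}},\,\normof{\mat{a}}{1}/\twonorm{\mat{a}}]$, for this is what guarantees a valid $\Delta$ exists whenever $c\ge\sqrt{\abs{\set{S}}}$ and, dually, why no such normalized-soft-thresholding representation can be feasible once $c<\sqrt{\abs{\set{S}}}$---the very gap in Lemma~2.2 of~\cite{witten2009a} that Case~1 closes. Everything else reduces to a routine coordinatewise inspection of the KKT system.
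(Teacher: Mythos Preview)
Your argument is correct and follows the same KKT route as the paper, which also splits on which multipliers vanish; the paper organizes the cases as $\alpha=0$ versus $\alpha>0$ (and then $\Delta=0$ versus $\Delta>0$) and reads off the solution form, whereas you propose $\mat{u}^*$ together with explicit multipliers and verify. Your intermediate-value analysis of $g(\Delta)=\normof{\opof{S}{\mat{a},\Delta}}{1}/\twonorm{\opof{S}{\mat{a},\Delta}}$ to guarantee a valid threshold exists, and the H\"older shortcut $\mat{a}^\trans\mat{u}\le\normof{\mat{a}}{\infty}\normof{\mat{u}}{1}$ for Case~1, are both genuine additions that the paper's terser proof leaves implicit.
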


The above lemma extends Lemma 2.2 of~\cite{witten2009a} from $c \in [\sqrt{\abs{\set{S}}}, \infty)$ to $c \in (0, \infty)$. See Supplementary Materials Section \ref{sec_in_supp:lemma2_2} for the proof of Lemma \ref{lema:solution_QCLP} and how it extends Lemma 2.2 of~\cite{witten2009a}.

For the simplified SCCA in \eqref{equ:simplified_SCCA_model}, each subproblem (solving $\mat{u}$ with $\mat{v}$ fixed or solving $\mat{v}$ with $\mat{u}$ fixed) is a QCLP problem of form \eqref{equ:QCLP}, which results in Algorithm \ref{alg:simplified_SCCA}.

\begin{algorithm}
	\caption{Simplified SCCA algorithm}
	\label{alg:simplified_SCCA}
	\begin{algorithmic}[1]
		\STATE Initialize $\mat{v}$;
		\REPEAT
		\STATE Update $\mat{u}$ according to Lemma \ref{lema:solution_QCLP}, with $\mat{a} = \mat{X}^\trans \mat{Y} \mat{v}$ and $c=c_1$;
		\STATE Update $\mat{v}$  according to Lemma \ref{lema:solution_QCLP}, with $\mat{a} = \mat{Y}^\trans \mat{X} \mat{u}$ and $c=c_2$;
		\UNTIL{convergence.}% or maximum number of iterations is reached.}
	\end{algorithmic}
\end{algorithm}

Note that by repeatedly applying Algorithms \ref{alg:SCCA} and \ref{alg:simplified_SCCA}, we can obtain multiple canonical components, as described in Section \ref{sec_in_supp:multiple_components} in Supplemental Materials.

\begin{figure*}[tb]
  \begin{minipage}[b]{.48\textwidth}
  \centering
    \includegraphics[width=\linewidth]{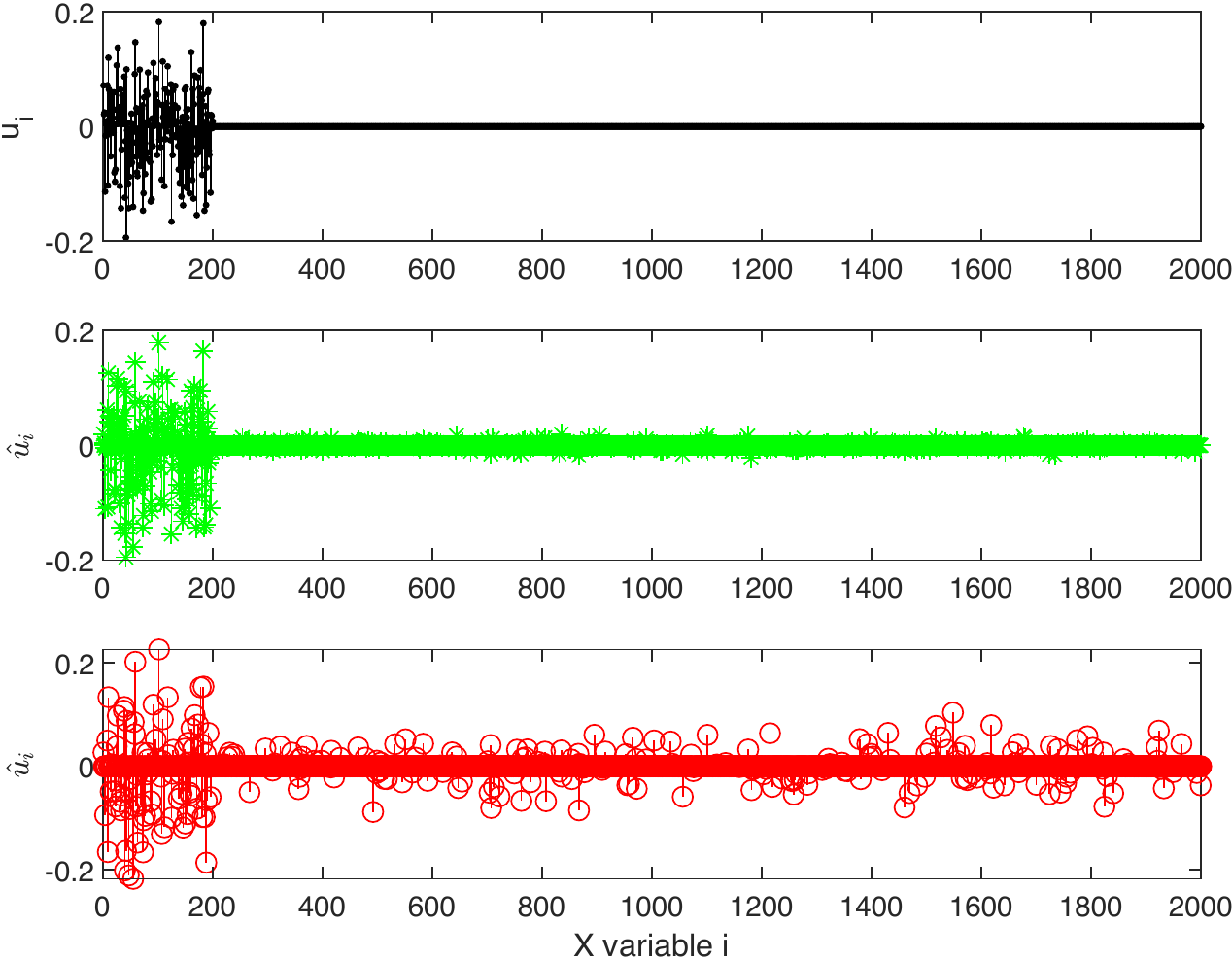}
    % \caption{Initial condition}
  \end{minipage}
  \hspace{0.5cm}
  \begin{minipage}[b]{.48\textwidth}
  \centering
    \includegraphics[width=\linewidth]{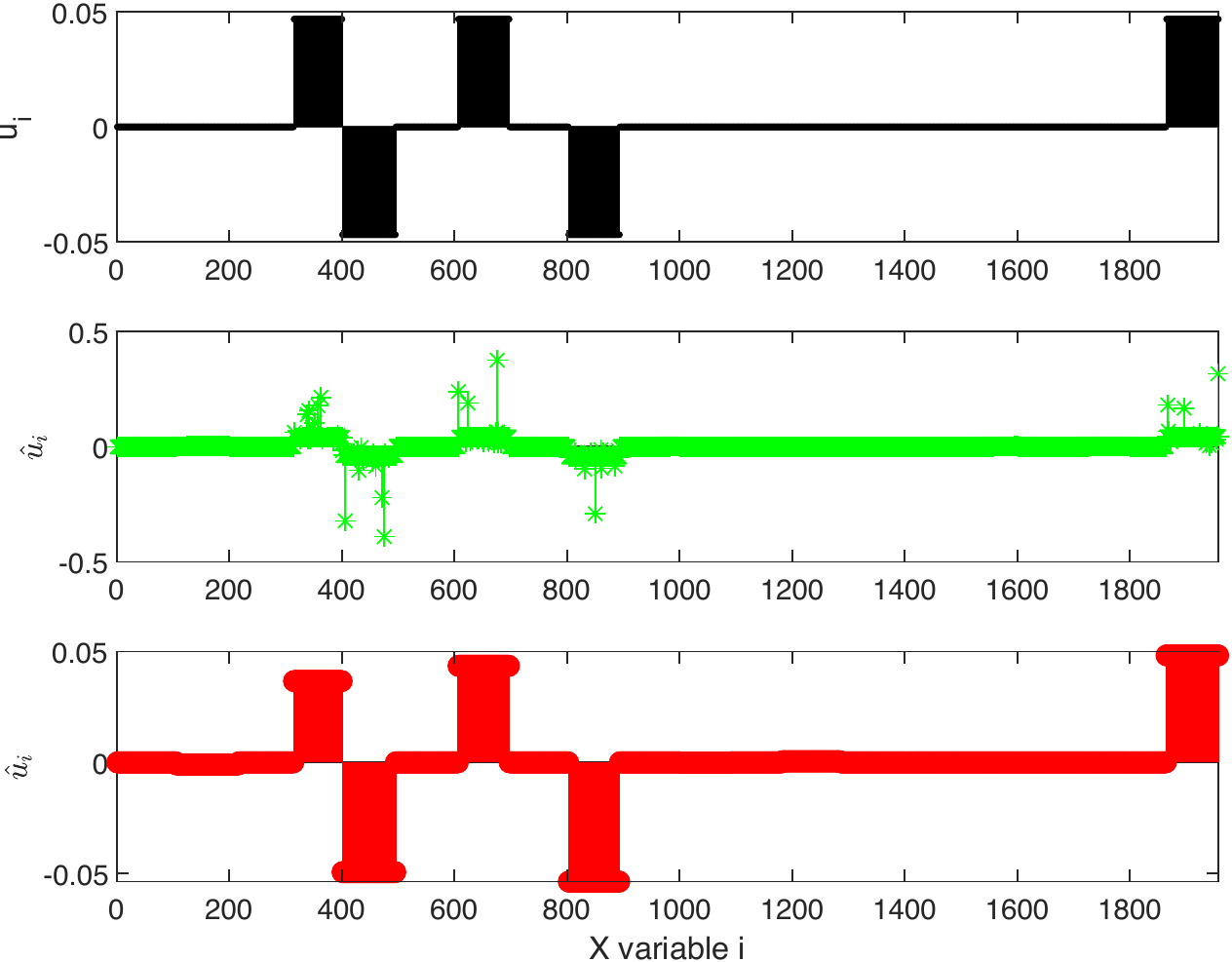}
    % \caption{Rupture}
  \end{minipage} \\
%  \vspace{-50ex}
  \begin{minipage}[b]{.48\textwidth}
  \centering
    \includegraphics[width=\linewidth]{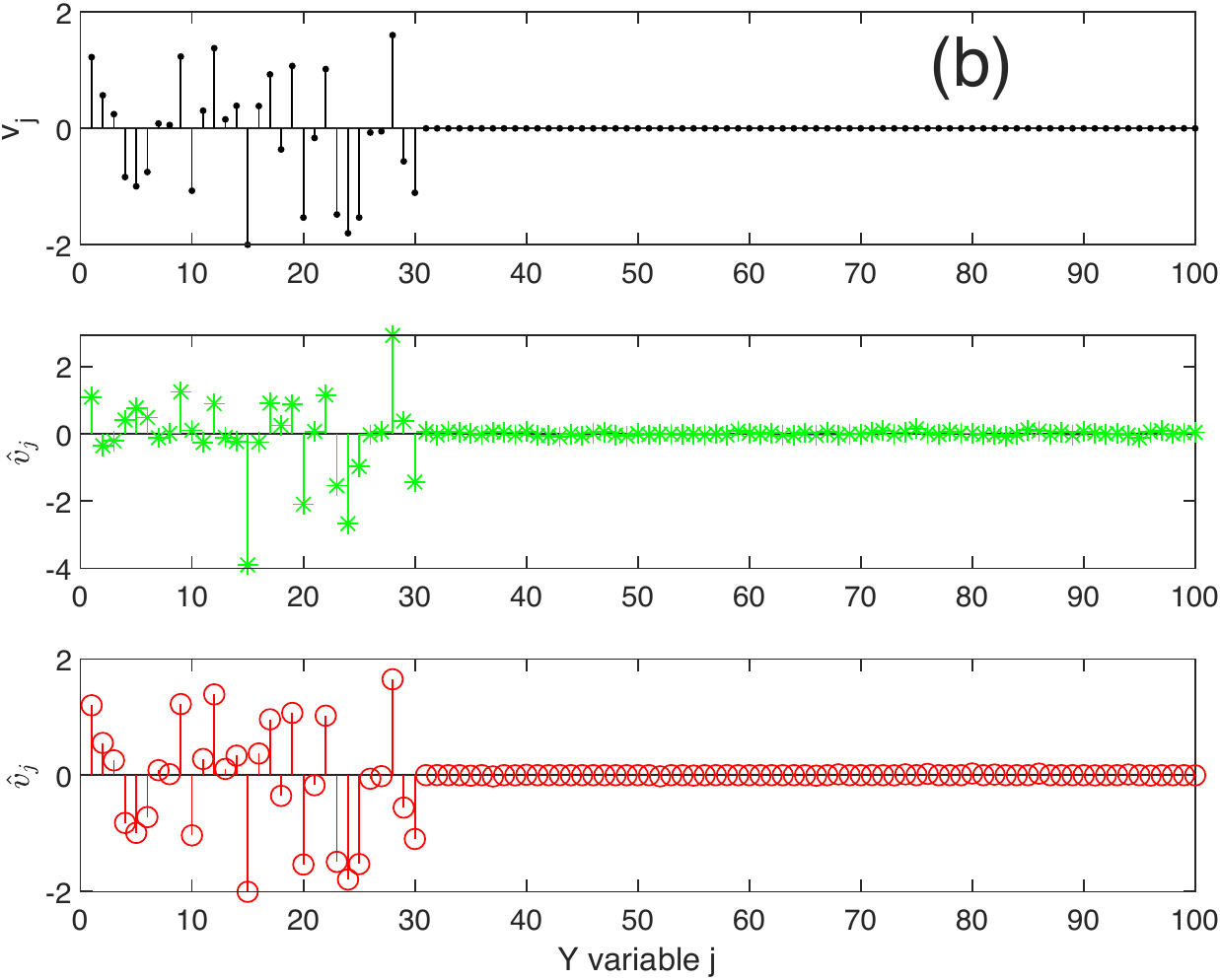}
    % \caption{DFT, Initial condition}
  \end{minipage}
%  \hfill
  \hspace{0.5cm}
  \begin{minipage}[b]{.48\textwidth}
  \centering
    \includegraphics[width=\linewidth]{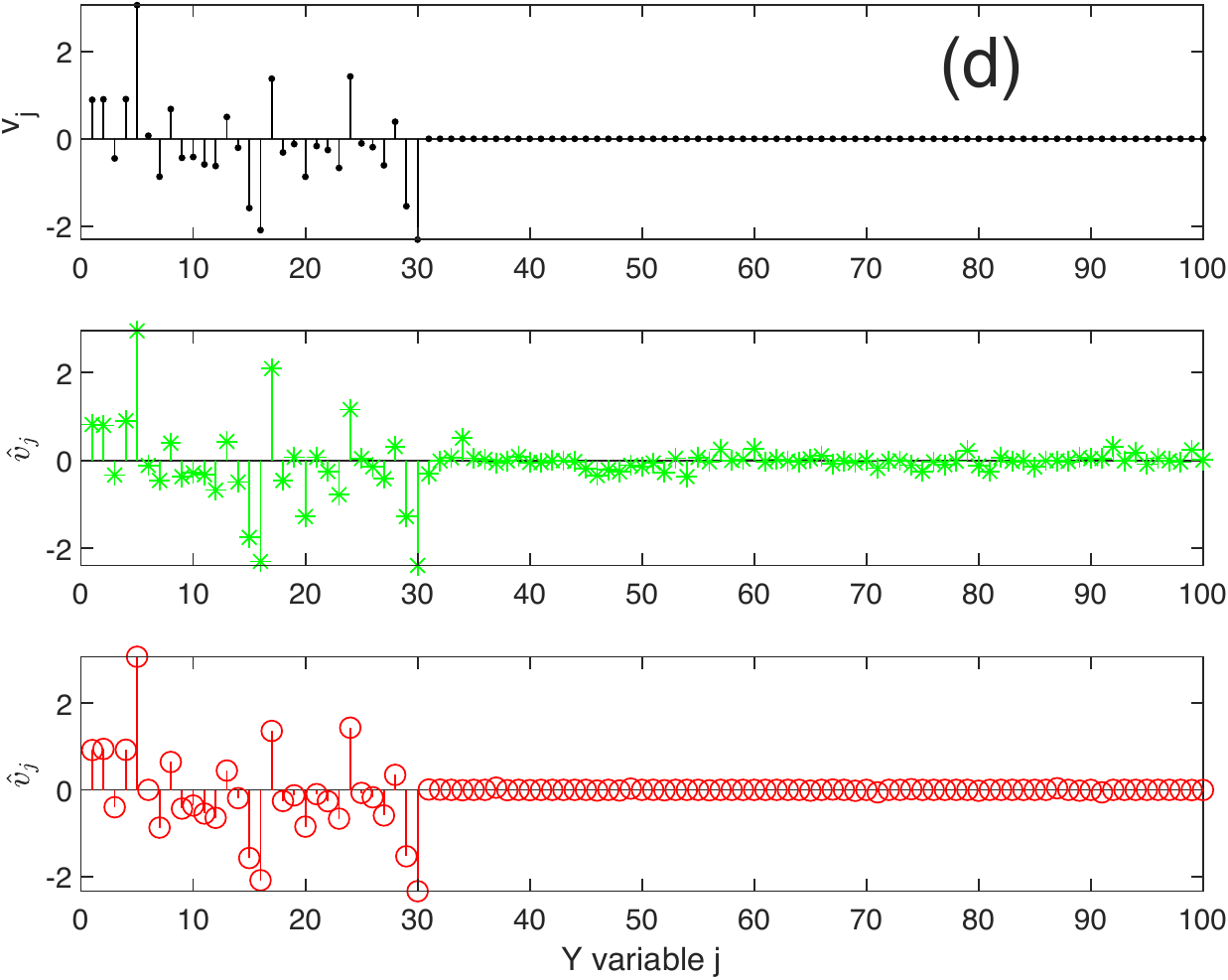}
    % \caption{DFT, rupture}
  \end{minipage}
      \vspace{-0.3in}
      \caption{The actual and estimated canonical weight vectors $\mat{u}$ and  $\mat{v}$ for (a-b) Experiment 1 and (c-d) Experiment 2. In each subfigure, the top row shows the actual weights used in the generative model, and the bottom two rows show the weights estimated by SCCA and simplified SCCA on all training data, respectively. \lsc{To facilitate comparison, the estimated weight vectors are scaled to have the same Euclidean norm as the actual weight vector.}}
    \label{fig:synthetic_data_uv_esti}

\end{figure*}

\vfill

\section{Experimental results and discussion}
\label{sec:experiment}

We perform comparative study of the two SCCA models using both synthetic data and real imaging genetics data.

\subsection{Simulation study on synthetic data}
\label{subsec:simulation}
Assume the data $\mat{X} \in \real^{n \times p}$ and $\mat{Y} \in \real^{n \times q}$ collect $n$ i.i.d. observations/samples of random vectors $\mat{x} \in \real^{p \times 1}$ and $Y \in \real^{q \times 1}$ (with slight abuse of notation), respectively, with $n=1000, p\approx 2000, q=100$. \lsc{We consider two simulation setups, one with uncorrelated variables and the other with grouped variables.} For simplicity, we focus on the simulation and analysis of $X$ variables only.

\begin{table*}[tb]
\begin{center}
\begin{threeparttable}
\caption{Performance comparison on canonical correlation coefficients on synthetic data.}
\label{tab:Training_testing_CC_synthetic_data}
\vskip 0.15in
\begin{small}
\begin{sc}
\begin{tabular}{lccccccr}
\toprule
            &                                            & \multicolumn{4}{c}{Training} & \multicolumn{2}{c}{Testing} \\
                                                          \cmidrule(lr){3-6}\cmidrule(lr){7-8}
  Model     & $\left(c_1^{\rm opt},c_2^{\rm opt}\right)$ & Cov@Val$^*$ & Corr@Val$^*$ & Cov$^{**}$ & Corr$^{**}$ & Cov & Corr \\
\midrule
          \multicolumn{8}{c}{Experimental setup 1} \\
%SCCA		 &  (9.739, 1.244)	 &  \NA		 &    0.880	 &    4.909	 &    0.992	 &    4.662	 &    0.957	 \\
%Simp SCCA	 &  (8.348, 4.532)	 &    2.706	 &  \NA		 &    6.144	 &    0.851	 &    4.640	 &    0.763	 \\
SCCA		 &  (10.379, 1.218)	 &  \NA		 &    0.843	 &    4.750	 &    0.995	 &    4.395	 &    0.962	 \\
Simp SCCA	 &  (11.763, 4.513)	 &    3.304	 &  \NA		 &    7.396	 &    0.893	 &    4.104	 &    0.749	 \\
%Simp SCCA	 &  (10.379, 3.949)	 &  \NA		 &    0.596	 &    6.786	 &    0.880	 &    4.063	 &    0.756	 \\
          \multicolumn{8}{c}{Experimental setup 2} \\
%SCCA		 &  (2.376, 0.188)	 &  \NA		 &    0.993	 &  393.252	 &    0.996	 &  418.614	 &    0.995	 \\
%Simp SCCA	 &  (23.495, 4.873)	 &  509.566	 &  \NA		 &  528.958	 &    0.986	 &  552.164	 &    0.985	 \\
SCCA		 &  (2.516, 0.145)	 &  \NA		 &    0.986	 &  384.013	 &    0.991	 &  406.562	 &    0.990	 \\
Simp SCCA	 &  (21.354, 4.154)	 &  461.295	 &  \NA		 &  522.153	 &    0.983	 &  545.016	 &    0.985	 \\
%Simp SCCA	 &  (21.354, 4.154)	 &  \NA		 &    0.980	 &  522.153	 &    0.983	 &  545.016	 &    0.985	 \\
\bottomrule
\end{tabular}
\end{sc}
\end{small}
\begin{tablenotes}
      \footnotesize
\item[*] Cov@Val/Corr@Val: canonical covariance/correlation on the validation data during the training (model selection) stage. The reported value is the maximum canonical covariance/correlation over all candidate $\left(c_1,c_2\right)$ (i.e. at the optimal regularization parameters $\left(c_1^{\rm opt},c_2^{\rm opt}\right)$).
\item[**] Cov/Corr: canonical covariance/correlation when the optimal model is fit to combined training and validation data.
\end{tablenotes}
\vskip -0.1in
\end{threeparttable}
\end{center}
\end{table*}

\subsubsection{Setup 1: uncorrelated variables}
\label{subsubsec:uncorrelated_variables}
The random vector $\mat{x}$ is modeled as standard normal: $\mat{x} \sim \mathcal{N}(0,\, \mat{I}_p)$. Define $z$ as $z = \mat{c}^\trans \mat{x}$, where $\mat{c} \in \real^{p \times 1}$ is a sparse vector.
The \rev{random vector $\mat{y}$} is the modeled as
%$\mat{y} = \mat{d} z + \sigma \mat{n}$,
\begin{equation}
\mat{y} = \mat{d} z + \sigma \mat{n} \label{equ:uncorrelated_variables_random_vector_Y}
\end{equation}
where $\mat{d} \in \real^{q \times 1}$ is a sparse vector, $\mat{n} \sim \mathcal{N}(0,\, \mat{I}_q)$ models random noise. We set $\sigma^2$ to have signal-to-noise ratio of 1.

\subsubsection{Setup 2: grouped variables}
\label{subsubsec:grouped_variables}
We assume that \rev{the variables in $\mat{x}$} form $G=20$ non-overlapping groups:
\vskip 1pt
\begin{equation*}\label{equ:X}
\mat{x} =
\begin{bmatrix}
\coolover{p_1}{x_1 & \cdots & x_1} & \coolover{p_2}{x_2 & \cdots & x_2} & \cdots & \coolover{p_G}{x_G & \cdots & x_G}
\end{bmatrix}^\trans
\end{equation*}
The group sizes $p_g$ are drawn independently from a Poisson distribution with mean 100. The total number of variables in $\mat{x}$ is $p=\sum_{g=1}^G p_g $. For $g=1,2,\dots,G$, $x_g \sim \mathcal{N}(0,\,1)$.

Define $\mat{c} \in \real^{p \times 1}$ as a sparse vector collecting the weights of variables in $\mat{x}$. We assume that the elements of $\mat{c}$ are grouped in the same way as $\mat{x}$.
Five of $G=20$ groups of variables in $\mat{x}$ are randomly selected and their weights are set to 1 (\lsc{alternate in sign group-wise for visual clarity}), while the remaining groups of variables in $\mat{x}$ are not correlated/informative and their weights are set to 0.
The $\mat{c}$ is shown in the top row of \rev{Fig.~\ref{fig:synthetic_data_uv_esti}(c)}.
Define a random variable $z$ as $z = \mat{c}^\trans \mat{X}$. The \rev{random vector $\mat{y}$} is modeled in the same way as described in Section \ref{subsubsec:uncorrelated_variables}.

\subsubsection{Hyperparameter tuning \& performance estimation}
\label{subsubsec:Tuning_parameter_selection+generalization_performance_estimation}

To tune the hyperparameters $\left(c_1,c_2\right)$, we partition the data into training (50\%), validation (25\%), and testing (25\%) sets. After fitting the SCCA model on the training data, the canonical correlation on the validation data is estimated over a two-dimensional grid in log-linear scale: $2.\caret{\left(\floor{\log_2 c_{1,{\rm min}}}:\ceil{\log_2 c_{1,{\rm max}}}\right)} \times 2.\caret{\left(\floor{\log_2 c_{2,{\rm min}}}:\ceil{\log_2 c_{2,{\rm max}}}\right)}$, where $c_{\ell,{\rm min}}$ and $c_{\ell,{\rm max}}$, $\ell=1,2$, are the minimum and maximum value of $c_\ell$, respectively. The $c_1$ and $c_2$ yielding the maximum validation canonical correlation is selected. Then, we train the model with the selected regularization parameters on the full training data (training+validation) and report the canonical correlation on the testing set as the performance. For the simplified SCCA, the same procedure is used except that the canonical covariance is used as the metric for hyperparameter tuning. More detailed description of the procedure to select $c_1, c_2$ and to assess performance, including how to determine $c_{\ell,{\rm min}}$ and $c_{\ell,{\rm max}}$, $\ell=1,2$, is provided in Supplementary Materials \ref{subsec_in_supp:Tuning_parameter_selection+generalization_performance_estimation}.

\begin{table*}[tb]
\begin{center}
\begin{threeparttable}
\caption{Performance comparison on canonical correlation coefficients on real data.}
\label{tab:Training_testing_CC_real_data}
\vskip 0.15in
\begin{small}
\begin{sc}
\begin{tabular}{lccccccr}
\toprule

            &                                            & \multicolumn{4}{c}{Training} & \multicolumn{2}{c}{Testing} \\
                                                          \cmidrule(lr){3-6}\cmidrule(lr){7-8}
  Fold index     & $\left(c_1^{\rm opt},c_2^{\rm opt}\right)$ & Cov@Val$^*$ & Corr@Val$^*$ & Cov$^{**}$ & Corr$^{**}$ & Cov & Corr \\
\midrule
          \multicolumn{8}{c}{SCCA} \\
Fold 1	 &  (2, 4)	 &  \NA		 &   0.4880	 &   1.1583	 &   0.6775	 &   0.7114	 &   0.4451	 \\
Fold 2	 &  (1, 4)	 &  \NA		 &   0.4641	 &   0.5738	 &   0.5654	 &   0.4585	 &   0.4853	 \\
Fold 3	 &  (2, 2)	 &  \NA		 &   0.4480	 &   1.4923	 &   0.6001	 &   1.2068	 &   0.4826	 \\
Fold 4	 &  (2, 4)	 &  \NA		 &   0.4369	 &   1.0060	 &   0.6379	 &   0.8623	 &   0.5612	 \\
\hline
Full data	 &  (2, 4)	 &  \NA		 &  \NA		 &   1.1274	 &   0.0.6331	 &  \NA		 &  \NA		 \\
          \multicolumn{8}{c}{Simplified SCCA} \\
Fold 1	 &  (4, 16)	 &   6.1471	 &  \NA		 &   7.6180	 &   0.4551	 &   5.1317	 &   0.3222	 \\
Fold 2	 &  (4, 16)	 &   5.3965	 &  \NA		 &   7.1126	 &   0.4222	 &   6.8125	 &   0.4245	 \\
Fold 3	 &  (4, 16)	 &   5.8975	 &  \NA		 &   7.1964	 &   0.4346	 &   6.3124	 &   0.3931	 \\
Fold 4	 &  (4, 16)	 &   5.7150	 &  \NA		 &   7.3270	 &   0.4326	 &   5.2280	 &   0.3205	 \\
\hline
Full data	 &  (4, 16)	 &  \NA		 &  \NA		 &   7.1326	 &   0.4248	 &  \NA		 &  \NA		 \\
\bottomrule
\end{tabular}
\end{sc}
\end{small}
\begin{tablenotes}
      \footnotesize
\item[*] Cov@Val/Corr@Val: mean canonical covariance/correlation for the left-out folds in the inner cross-validation to select the regularization parameters. The reported value is the maximum mean canonical covariance/correlation over all candidate $\left(c_1,c_2\right)$ (i.e. at the optimal regularization parameters $\left(c_1^{\rm opt},c_2^{\rm opt}\right)$).
\item[**] Cov/Corr: mean canonical covariance/correlation when the optimal model is fit to the whole training data.
\end{tablenotes}
\end{threeparttable}
\end{center}
\vskip -0.1in
\end{table*}

\subsubsection{Simulation study results}
\label{subsubsec:Simulation_study_results}

Fig.~\ref{fig:synthetic_data_uv_esti} shows the canonical weight vectors estimated by SCCA and simplified SCCA on the entire training data. In Supplementary Materials Tables \ref{tab:X_variable_selection_performance_synthetic_data}-\ref{tab:Y_variable_selection_performance_synthetic_data}, we also summarize the variable selection performance in terms of recall, precision, F1 score, accuracy (ACC), balanced accuracy (bACC), Matthews correlation coefficient (MCC), precision-recall area under curve (PR AUC), and relative absolute error (RAE). The canonical correlation/covariance on the training and testing sets are reported in Table\ref{tab:Training_testing_CC_synthetic_data}.

Referring to Experimental setup 1 where \rev{the variables in $\mat{x}$} are uncorrelated, the standard SCCA consistently outperforms the simplified SCCA in both selection of variables in $\mat{x}$ and identification of strong canonical correlation.

Referring to Experimental setup 2 where \rev{the variables in $\mat{x}$} form in groups with full correlation within each group, the simplified SCCA always assigns the same weights to each group of variables in $\mat{x}$. However, for the standard SCCA, the weights of variables in $\mat{x}$ in the same group is randomly assigned, which leads to a few variables with large weights while remaining variables with weights \lsc{close to zero}. \lsc{Despite that the simplified SCCA can falsely detect variables group-wise,} it outperforms standard SCCA in selection of variables in $\mat{x}$. Note that, compared to standard SCCA, the simplified SCCA has slightly lower canonical correlation but much higher canonical covariance. \lsc{This is not surprising because} in the standard SCCA the objective is to maximize the canonical correlation while the simplified SCCA maximizes the canonical covariance.

Regarding the selection of \rev{variables in $\mat{y}$}, the simplified SCCA performs better than standard SCCA in both Experimental setups. \lsc{This is as expected} considering that \rev{the variables in $\mat{y}$} in \eqref{equ:uncorrelated_variables_random_vector_Y} are highly correlated.

\subsection{Application to real imaging genetic data}
\label{subsec:experiment_real_data}

We applied the two SCCA models to a real imaging genetics data set to compare their performances.
The genotyping and baseline AV-45 PET data of 757 non-Hispanic Caucasian subjects (age 72.26$\pm$7.17), including 183 healthy control (HC, 94 female), 75 significant memory concern (SMC, 46 female), 218 early mild cognitive impairment (EMCI, 105 female), 184 late MCI (LMCI, 88 female), and 97 Alzheimer's disease (AD, 43 female) participants, were downloaded from the Alzheimer's Disease Neuroimaging Initiative (ADNI) database \cite{weiner2017alzheimer}. One aim of ADNI has been to test whether serial magnetic resonance imaging (MRI), positron emission tomography (PET), other biological markers, and clinical and neuropsychological assessment can be combined to measure the progression of MCI and early AD. For up-to-date information, see \url{www.adni-info.org}.

The \lsc{AV-45 scans} were aligned to each participant's same visit MRI scan and normalized to the Montreal Neurological Institute (MNI) space. Region-of-interest (ROI) level AV-45 measurements were further extracted based on the MarsBaR AAL atlas. We focused on the analysis of 1,542 single nucleotide polymorphisms (SNPs) from \lsc{27 AD risk genes} and AV-45 imaging measures from 116 ROIs. Using the regression weights derived from the HC participants, the genotype and imaging measures were preadjusted for removing the effects of age, gender, education, and handedness.

The two SCCA models were applied to the ADNI data to identify bi-multivariate imaging genetics associations. We employed the nested five-fold cross-validation (which is an extension of the procedure described in Section \ref{subsubsec:Tuning_parameter_selection+generalization_performance_estimation}) to choose the regularization parameters and report the performance.
The genetic and imaging feature selection results are reported in Figures \ref{fig:genetic_weight_esti}-\ref{fig:imaging_weight_esti} and Tables \ref{tab:snps_selected}-\ref{tab:QTs_selected} in Supplementary Materials Section \ref{sec_in_supp:experiment_real_data}, while the canonical correlation/covariance performance is reported in Table \ref{tab:Training_testing_CC_real_data}.

For genetic feature selection (Figure \ref{fig:genetic_weight_esti} and Table \ref{tab:snps_selected}), both SCCA models select top AD risk genes such as \emph{APOE}, \emph{PICALM} and \emph{ABCA7}. However, in each gene, the simplified SCCA selects a cluster of SNPs while the standard SCCA only selects \lsc{one or very few SNPs which dominate}. Together with the correlation among the SNPs within each gene (Fig.~\ref{fig:corr_SNP_AV45} middle), it verifies that the simplified SCCA has the grouping effects in feature selection while the standard SCCA does not.

For imaging feature selection (Figure \ref{fig:imaging_weight_esti} and Table \ref{tab:QTs_selected}), although high correlation is prevalent among the 116 imaging features (Fig.~\ref{fig:corr_SNP_AV45} right), the standard SCCA only selects about 20 features while the simplified SCCA selects more than 60 features, which confirms that the simplified SCCA is prone to selecting correlated feature together.

\section{Conclusion}
The sparse canonical correlation analysis (SCCA) is a bi-multivariate model that maximizes the multivariate correlation between two sets of variables. Since SCCA is computationally expensive, a simplified SCCA model which maximizes the multivariate covariance, has been widely used as its surrogate. The fundamental properties of the solutions of these two models remain unknown. Through theoretical analysis, we show that these two models behave differently regarding the grouping effects in variable selection. The simplified SCCA jointly selects or \rev{deselects} a group of correlated variables together, while the standard SCCA randomly selects one or few representatives from a group of correlated variables. Empirical results on both synthetic and real data confirm our theoretical finding. \lsc{This result can guide users to choose the right SCCA model in practice.}

\newpage
\clearpage

\vfill

% References and End of Paper
% These lines must be placed at the end of your paper
\bibliographystyle{IEEEtran}
\bibliography{CCA}

%%%%%%%%%%%%%%%%%%%%%%
%\end{document}
%%%%%%%%%%%%%%%%%%%%%%

\vfill

% the Supplementary text
\newpage
\clearpage
\pagenumbering{arabic}

\setcounter{figure}{0}
\setcounter{table}{0}
\renewcommand{\thetable}{S\arabic{table}}
\renewcommand{\thefigure}{S\arabic{figure}}

\onecolumn

\section*{\centering \Large
Study on the grouping effects of two sparse CCA models in variable selection \\
\vspace{0.05in}
Supplementary materials}
\vspace{0.05in}

\appendix

%%%%%%%%%%%%%%%%%%%%%%%
\section{Linearized ADMM to solve subproblem \eqref{equ:SCCA_model_solve_for_u}}
\label{sec_in_supp:ADMM}
%%%%%%%%%%%%%%%%%%%%%%%

We present how to solve problem \eqref{equ:SCCA_model_solve_for_u} using the linearized alternating direction method of multipliers (ADMM)~\cite{Boyd2011,suo2017sparse}. The problem \eqref{equ:SCCA_model_solve_for_v} can be solved in a similar manner.

First, we write problem \eqref{equ:SCCA_model_solve_for_u} in the form:
\begin{equation}\label{equ:SCCA_model_solve_for_u_unconstrained_form}
\underset{\mat{u}}{\text{minimize}} \; -\mat{u}^\trans \mat{X}^\trans \mat{Y} \mat{v}  + \indicator{\twonorm{\mat{X} \mat{u}}^2 \leq 1} + \indicator{\normof{\mat{u}}{1} \leq c_1},
\end{equation}
where $\indicator{\cdot}$ is the indicator function defined as
\begin{equation*}
	\indicator{\mat{x} \in \mathbb{A}} =
	\begin{cases}
	0, & \mat{x} \in \mathbb{A} \\
    \infty, & \mat{x} \notin \mathbb{A}
	\end{cases}
\end{equation*}

To apply the ADMM, the problem \eqref{equ:SCCA_model_solve_for_u_unconstrained_form} is reformulated as
\begin{equation}\label{equ:SCCA_model_solve_for_u_unconstrained_form_ADMM}
\begin{aligned}
& \underset{\mat{u}}{\text{minimize}} & & -\mat{u}^\trans \mat{X}^\trans \mat{Y} \mat{v} + \indicator{\twonorm{\mat{z}}^2 \leq 1} + \indicator{\normof{\mat{u}}{1} \leq c_1} \\
& \text{subject to} & & \mat{X} \mat{u} = \mat{z}
\end{aligned}
\end{equation}

The augmented Lagrangian of problem \eqref{equ:SCCA_model_solve_for_u_unconstrained_form_ADMM} is
\begin{equation}\label{equ:SCCA_model_solve_for_u_augmented_Lagrangian}
\mathcal{L}_\rho\left(\mat{u}, \mat{z}, \mat{\lambda}\right) = -\mat{u}^\trans \mat{X}^\trans \mat{Y} \mat{v}  + \indicator{\twonorm{\mat{z}}^2 \leq 1} + \indicator{\normof{\mat{u}}{1} \leq c_1} + \inp{\mat{\lambda}}{\mat{X}\mat{u}-\mat{z}} + \frac{\rho}{2} \twonorm{\mat{X} \mat{u} - \mat{z}}^2.
\end{equation}

ADMM consists of the iterations:
\begin{align}
\mat{u}_{\ell+1} &= \argmin_{\mat{u}} \mathcal{L}_\rho\left(\mat{u}, \mat{z}_\ell, \mat{\lambda}_\ell\right) \\
\mat{z}_{\ell+1} &= \argmin_{\mat{z}} \mathcal{L}_\rho\left(\mat{u}_{\ell+1}, \mat{z}, \mat{\lambda}_\ell\right) \\
\mat{\lambda}_{\ell+1} &= \mat{\lambda}_\ell + \rho \left(\mat{X} \mat{u}_{\ell+1} - \mat{z}_{\ell+1}\right)
\end{align}

That is
\begin{align}
\mat{u}_{\ell+1} &= \argmin_{\mat{u}} -\mat{u}^\trans \mat{X}^\trans \mat{Y} \mat{v}  + \indicator{\normof{\mat{u}}{1} \leq c_1} + \inp{\mat{\lambda}_\ell}{\mat{X} \mat{u} - \mat{z}_\ell} + \frac{\rho}{2} \twonorm{\mat{X} \mat{u} - \mat{z}_\ell}^2 \label{equ:ADMM_update_u} \\
\mat{z}_{\ell+1} &= \argmin_{\mat{z}} \indicator{\twonorm{\mat{z}}^2 \leq 1} + \inp{\mat{\lambda}_\ell}{\mat{X} \mat{u}_{\ell+1} - \mat{z}} + \frac{\rho}{2} \twonorm{\mat{X} \mat{u}_{\ell+1} - \mat{z}}^2 \label{equ:ADMM_update_z} \\
\mat{\lambda}_{\ell+1} &= \mat{\lambda}_\ell + \rho \left(\mat{X} \mat{u}_{\ell+1} - \mat{z}_{\ell+1}\right) \label{equ:ADMM_update_y}
\end{align}

The problem \eqref{equ:ADMM_update_u} is not easy to solve due to the term $\frac{1}{2} \twonorm{\mat{X} \mat{u} - \mat{z}_\ell}^2 \triangleq f(\mat{u})$. To handle this, we construct a quadratic approximation of $f(\mat{u})$ near the estimate $\mat{u}_\ell$ of $\mat{u}$ in the previous iteration $\ell$:
\begin{align}
F(\mat{u}) &\triangleq f(\mat{u}_\ell) + \inp{\nabla f(\mat{u}_\ell)}{\mat{u}-\mat{u}_\ell} + \frac{L_X}{2} \twonorm{\mat{u} - \mat{u}_\ell}^2 \nonumber \\
           & = \frac{1}{2} \twonorm{\mat{X} \mat{u}_\ell - \mat{z}_\ell}^2 + \inp{\mat{X}^\trans \left(\mat{X} \mat{u}_\ell - \mat{z}_\ell\right)}{\mat{u}-\mat{u}_\ell} + \frac{L_X}{2} \twonorm{\mat{u} - \mat{u}_\ell}^2
\end{align}
where $L_X=\lambda_{\rm max}\left(\mat{X}^\trans \mat{X}\right)$, where $\lambda_{\rm max}\left(\cdot\right)$ is the largest eigenvalue of its argument/input. Note that we have $F(\mat{u}) \geq f(\mat{u})$ for any $\mat{u} \in \real^{p \times 1}$ and $F(\mat{u}_\ell)=f(\mat{u}_\ell)$.

In the linearized ADMM, it solves the approximate version of problem \eqref{equ:ADMM_update_u} with the term $f(\mat{u})=\frac{1}{2}\twonorm{\mat{X} \mat{u} - \mat{z}_\ell}^2$ replaced by $F(\mat{u})$:
\begin{align}\label{equ:linearized_ADMM_update_formula_u}
    \mat{u}_{\ell+1} &= \argmin_{\mat{u}} -\mat{u}^\trans \mat{X}^\trans \mat{Y} \mat{v}  + \indicator{\normof{\mat{u}}{1} \leq c_1} + \inp{\mat{\lambda}_\ell}{\mat{X} \mat{u} - \mat{z}_\ell} + \rho F(\mat{u}) \nonumber \\
                  &= \argmin_{\mat{u}} \frac{\rho L_X}{2} \twonorm{\mat{u} - \mat{u}_\ell + \frac{1}{L_X} \mat{X}^\trans \left(\mat{X} \mat{u}_\ell - \mat{z}_\ell  + \frac{1}{\rho} \mat{\lambda}_\ell - \frac{1}{\rho} \mat{Y} \mat{v}\right)} + \indicator{\normof{\mat{u}}{1} \leq c_1} \nonumber \\
                  &= \proxopof{prox}{L1}{\mat{u}_\ell - \frac{1}{L_X} \mat{X}^\trans \left(\mat{X} \mat{u}_\ell - \mat{z}_\ell + \frac{1}{\rho} \mat{\lambda}_\ell - \frac{1}{\rho} \mat{Y} \mat{v} \right); c_1}
\end{align}
where the proximal operator $\proxopof{prox}{L1}{\cdot; \cdot}$ is defined as
\begin{equation}\label{equ:def_prox}
	\proxopof{prox}{L1}{\mat{a};c} = \argmin_{\mat{x}} \; \frac{1}{2} \twonorm{\mat{x} - \mat{a}}^2 + \indicator{\normof{\mat{x}}{1} \leq c} =
	\begin{cases}
	\mat{a}, & \normof{\mat{a}}{1} \leq c \\
	\opof{S}{\mat{a},\Delta}, & \normof{\mat{a}}{1} > c
	\end{cases}
\end{equation}
where $\Delta$ is a positive constant that satisfies $\normof{\opof{S}{\mat{a},\Delta}}{1} = c$.

The update formula of problem \eqref{equ:ADMM_update_z} is
\begin{equation}\label{equ:ADMM_update_formula_z}
    \mat{z}_{\ell+1} = 	
    \begin{cases}
	\mat{X} \mat{u}_{\ell+1} + \mat{\lambda}_\ell / \rho, & \twonorm{\mat{X} \mat{u}_{\ell+1} + \mat{\lambda}_\ell / \rho} \leq 1 \\
	\frac{\mat{X} \mat{u}_{\ell+1} + \mat{\lambda}_\ell / \rho}{\twonorm{\mat{X} \mat{u}_{\ell+1} + \mat{\lambda}_\ell / \rho}}, & \twonorm{\mat{X} \mat{u}_{\ell+1} + \mat{\lambda}_\ell / \rho} > 1
	\end{cases}
\end{equation}

Taken together, the updates at each ADMM iteration are
\begin{align}
\mat{u}_{\ell+1} &= \proxopof{prox}{L1}{\mat{u}_\ell - \frac{1}{L_X} \mat{X}^\trans \left(\mat{X} \mat{u}_\ell - \mat{z}_\ell + \mat{\xi}_\ell - \frac{1}{\rho} \mat{Y} \mat{v}\right); c_1} \label{equ:ADMM_update_formula_u} \\
\mat{z}_{\ell+1} &= 	
    \begin{cases}
	\mat{X} \mat{u}_{\ell+1} + \mat{\xi}_\ell, & \twonorm{\mat{X} \mat{u}_{\ell+1} + \mat{\xi}_\ell} \leq 1 \\
	\frac{\mat{X} \mat{u}_{\ell+1} + \mat{\xi}_\ell}{\twonorm{\mat{X} \mat{u}_{\ell+1} + \mat{\xi}_\ell}}, & \twonorm{\mat{X} \mat{u}_{\ell+1} + \mat{\xi}_\ell} > 1
	\end{cases} \label{equ:ADMM_update_formula_z} \\
\mat{\xi}_{\ell+1} &= \mat{\xi}_\ell + \mat{X} \mat{u}_{\ell+1} - \mat{z}_{\ell+1} \label{equ:ADMM_update_formula_y}
\end{align}
where $\mat{\xi}_\ell = \mat{\lambda}_\ell / \rho$.

The Linearized ADMM algorithm for fitting the SCCA model is summarized in Algorithm~\ref{alg:LinearizedADMM}.

\begin{algorithm*}
	\caption{Sparse CCA fitting algorithm: Linearized ADMM}
	\label{alg:LinearizedADMM}
	\begin{algorithmic}[1]
		%\Procedure{TG-SCCA}{$\ten{Y}$}%\Comment{The g.c.d. of a and b}
		\REQUIRE
        \par \noindent $\mat{X} \in \real^{n \times p}$, $\mat{Y} \in \real^{n \times q}$, with column-wise zero empirical mean (the sample mean of each column has been shifted to zero);
		\par \indent Regularization parameters $c_1$ and $c_2$.% which are selected via stability selection or cross-validation; choose $\lambda$ via cross-validation.
		\STATE Calculate Lipschitz constants: $L_X=\lambda_{\rm max}\left(\mat{X}^\trans \mat{X}\right)$, $L_Y=\lambda_{\rm max}\left(\mat{Y}^\trans \mat{Y}\right)$;
		\STATE Initialization: $\mat{u}^{(0)} \in \real^{p \times 1}, \mat{v}^{(0)} \in \real^{q \times 1}$;
        \STATE Set the penalty parameter to $\rho_1=\rho_2=1$ \cite{Boyd2011};
		\STATE $k=0$; % \underline{\textbf{Outer loop:}}
        \REPEAT	
		\STATE \underline{Update $\mat{u}$:}
         \STATE $\mat{a} = \mat{Y} \mat{v}^{(k)}$
         %, where $\mat{v}^{(0)}$ is the estimate of $\mat{v}$ in the previous outer iteration.
         %        \STATE \underline{\textbf{Inner loop:}}
        \STATE Input: $\mat{u}_0=\mat{u}^{(k)} \in \real^{p \times 1}$, $\mat{z}_0 = \mat{\xi}_0 = \mat{0} \in \real^{n \times 1}$;
        %where $\mat{u}^{(k)}$ denotes the estimate of $\mat{u}$ in the previous outer iteration; let $\theta_0=1$.
        \FOR{$\ell=0,1,2,\dots$} %REPEAT %~\Comment{Inner loop}
        \STATE
\begin{align*}
\mat{u}_{\ell+1} &= \proxopof{prox}{L1}{\mat{u}_\ell - \frac{1}{L_X} \mat{X}^\trans \left(\mat{X} \mat{u}_\ell - \mat{z}_\ell + \mat{\xi}_\ell - \frac{1}{\rho_1} \mat{a}\right); c_1} \\
\mat{z}_{\ell+1} &= 	
    \begin{cases}
	\mat{X} \mat{u}_{\ell+1} + \mat{\xi}_\ell, & \twonorm{\mat{X} \mat{u}_{\ell+1} + \mat{\xi}_\ell} \leq 1 \\
	\frac{\mat{X} \mat{u}_{\ell+1} + \mat{\xi}_\ell}{\twonorm{\mat{X} \mat{u}_{\ell+1} + \mat{\xi}_\ell}}, & \twonorm{\mat{X} \mat{u}_{\ell+1} + \mat{\xi}_\ell} > 1
	\end{cases} \\
\mat{\xi}_{\ell+1} &= \mat{\xi}_\ell + \mat{X} \mat{u}_{\ell+1} - \mat{z}_{\ell+1}
\end{align*}
	\ENDFOR
	\STATE Output: $\mat{u}^{(k+1)} = \mat{u}_{\ell+1}$.

\

		\STATE \underline{Update $\mat{v}$:}
         \STATE $\mat{b} = \mat{X} \mat{u}^{(k+1)}$
         %, where $\mat{v}^{(0)}$ is the estimate of $\mat{v}$ in the previous outer iteration.
         %        \STATE \underline{\textbf{Inner loop:}}
        \STATE Initialization: $\mat{v}_0=\mat{v}^{(k)} \in \real^{q \times 1}$, $\mat{\zeta}_0 = \mat{\psi}_0 = \mat{0} \in \real^{n \times 1}$; %where $\mat{v}^{(k)}$ denotes the estimate of $\mat{v}$ in the previous outer iteration; let $\theta_0=1$.
        \FOR{$\ell=0,1,2,\dots$} %REPEAT %~\Comment{Inner loop}
        \STATE
\begin{align*}
\mat{v}_{\ell+1} &= \proxopof{prox}{L1}{\mat{v}_\ell - \frac{1}{L_Y} \mat{Y}^\trans \left(\mat{Y} \mat{v}_\ell - \mat{\zeta}_\ell + \mat{\psi}_\ell - \frac{1}{\rho_2} \mat{b}\right); c_2} \\
\mat{\zeta}_{\ell+1} &= 	
    \begin{cases}
	\mat{Y} \mat{v}_{\ell+1} + \mat{\psi}_\ell, & \twonorm{\mat{Y} \mat{v}_{\ell+1} + \mat{\psi}_\ell} \leq 1 \\
	\frac{\mat{Y} \mat{v}_{\ell+1} + \mat{\psi}_\ell}{\twonorm{\mat{Y} \mat{v}_{\ell+1} + \mat{\psi}_\ell}}, & \twonorm{\mat{Y} \mat{v}_{\ell+1} + \mat{\psi}_\ell} > 1
	\end{cases} \\
\mat{\psi}_{\ell+1} &= \mat{\psi}_\ell + \mat{Y} \mat{v}_{\ell+1} - \mat{\zeta}_{\ell+1}
\end{align*}
	\ENDFOR
\STATE Output: $\mat{v}^{(k+1)}=\mat{v}_{\ell+1}$.

\STATE $k \gets k+1$.
\UNTIL{convergence.} % or maximum number of iterations is reached.}
	\end{algorithmic}
\end{algorithm*}

%%%%%%%%%%%%%%%%%%%%%%%
\section{Proof of Lemma \ref{lema:solution_QCLP} and how it extends Lemma 2.2 of~\cite{witten2009a}}
\label{sec_in_supp:lemma2_2}
%%%%%%%%%%%%%%%%%%%%%%%

\subsection{Proof of Lemma \ref{lema:solution_QCLP}}
\label{subsec_in_supp:proof_lemma2_2}

\begin{proof}
Since the problem \eqref{equ:QCLP} is a convex optimization problem with differentiable objective and constraint functions (Note that the L1 inequality constraint can be written as $2^p$ linear inequality constraints), and is strictly feasible (Slater's condition holds), the KKT conditions provide necessary and sufficient conditions for optimality \cite{Boyd2004}.

The Lagrangian function is
\begin{equation*}
    \mathcal{L} \left(\mat{u}, \alpha, \Delta\right) = - \mat{a}^\trans \mat{u} + \frac{\alpha}{2} \left(\twonorm{\mat{u}}^2 - 1\right) + \Delta \left(\normof{\mat{u}}{1} - c\right)
\end{equation*}
where $\alpha$ and $\Delta$ are the Lagrange multipliers (dual variables) for the L2 and L1 constraints, respectively.

Setting the differential of $\mathcal{L} \left(\mat{u}, \alpha, \Delta\right)$ with respect to $\mat{u}$ equal to zero yields
\begin{equation}\label{equ:QCLP_KKT_conditions_stationarity}
    \alpha \mat{u} + \Delta \mat{s} = \mat{a}
\end{equation}
where \rev{$\mat{s}$ is the subgradient of $\normof{\mat{u}}{1}$ with respect to $\mat{u}$, with $s_i=\signof{u_i}$ if $u_i \neq 0$ and $s_i \in [-1,1]$ otherwise.}

The KKT conditions for optimality consist of \eqref{equ:QCLP_KKT_conditions_stationarity} and
\begin{equation}\label{equ:QCLP_KKT_conditions_L2_constraint}
    \alpha \geq 0, \quad \twonorm{\mat{u}}^2 \leq 1, \quad \alpha \left(\twonorm{\mat{u}}^2 - 1\right) = 0
\end{equation}
\begin{equation}\label{equ:QCLP_KKT_conditions_L1_constraint}
    \Delta \geq 0, \quad \normof{\mat{u}}{1} \leq c, \quad \Delta \left(\normof{\mat{u}}{1} - c\right) = 0
\end{equation}

\begin{itemize}
  \item Case 1: $\alpha=0$, $\Delta>0$.

The KKT conditions \eqref{equ:QCLP_KKT_conditions_stationarity}-\eqref{equ:QCLP_KKT_conditions_L1_constraint} are simplified as
\begin{align}
    \Delta \mat{s} & = \mat{a}, \; \Delta > 0 \label{equ:QCLP_KKT_conditions_stationarity_Case1} \\
    \twonorm{\mat{u}}^2 & \leq 1 \label{equ:QCLP_KKT_conditions_L2_constraint_Case1} \\
    \normof{\mat{u}}{1} &= c \label{equ:QCLP_KKT_conditions_L1_constraint_Case1}
\end{align}

From \eqref{equ:QCLP_KKT_conditions_stationarity_Case1}, it follows that $\Delta = \max_{1 \leq i \leq p} \abs{a_i}$ and $u_i = 0$ for any $i \notin \set{S}$, where $\set{S} = \left\{i \mid \; \abs{a_{i}} = \Delta \right\}$.

Therefore, an optimal solution can be written in the following form:
\begin{equation}\label{equ:general_solution_QCLP_Case1}
	[\mat{u}^*]_i =
	\begin{cases}
	w_i \signof{a_i} , & i \in \set{S} \\
	0, & i \notin \set{S}
	\end{cases}
\end{equation}
with $w_i\geq 0$ satisfying
\begin{equation*}
\sum_{i \in \set{S}} w_i^2 \leq 1, \quad \sum_{i \in \set{S}} w_i = c
\end{equation*}

When $c \leq \sqrt{\abs{\set{S}}}$, the set of solutions defined above is non-empty, and among them the solution with minimum Euclidean norm is shown in \eqref{equ:solution_QCLP_Case1}.

  \item Case 2: $\alpha>0$

\begin{itemize}
  \item Case 2.1: $\Delta=0$
\end{itemize}
  The KKT conditions \eqref{equ:QCLP_KKT_conditions_stationarity}-\eqref{equ:QCLP_KKT_conditions_L1_constraint} are simplified as
\begin{align}
    \alpha \mat{u} & = \mat{a}, \; \alpha > 0 \label{equ:QCLP_KKT_conditions_stationarity_Case2} \\
    \twonorm{\mat{u}}^2 &= 1 \label{equ:QCLP_KKT_conditions_L2_constraint_Case2} \\
    \normof{\mat{u}}{1} &\leq c \label{equ:QCLP_KKT_conditions_L1_constraint_Case2}
\end{align}

From \eqref{equ:QCLP_KKT_conditions_stationarity_Case2}-\eqref{equ:QCLP_KKT_conditions_L2_constraint_Case2}, it follows that $\mat{u} = \frac{\mat{a}}{\twonorm{\mat{a}}}$.

When $c \geq \frac{\normof{\mat{a}}{1}}{\twonorm{\mat{a}}}$, the above $\mat{u}$ also satisfies \eqref{equ:QCLP_KKT_conditions_L1_constraint_Case2} and is therefore the optimal solution.

  \item Case 2.2: $\Delta>0$
  The conditions \eqref{equ:QCLP_KKT_conditions_L2_constraint}-\eqref{equ:QCLP_KKT_conditions_L1_constraint} become
\begin{equation}\label{equ:QCLP_KKT_conditions_L2_constraint_Case3}
    \alpha > 0, \quad \twonorm{\mat{u}}^2 = 1
\end{equation}
\begin{equation}\label{equ:QCLP_KKT_conditions_L1_constraint_Case3}
    \Delta > 0, \quad \normof{\mat{u}}{1} = c
\end{equation}

Combining conditions \eqref{equ:QCLP_KKT_conditions_stationarity} and \eqref{equ:QCLP_KKT_conditions_L2_constraint_Case3}-\eqref{equ:QCLP_KKT_conditions_L1_constraint_Case3}, we obtain the optimal solution shown in
Eq. \eqref{equ:solution_QCLP_Case3}. This corresponds to the range $\sqrt{\abs{\set{S}}} \leq c < \frac{\normof{\mat{a}}{1}}{\twonorm{\mat{a}}}$.

\end{itemize}

\end{proof}

\subsection{How does Lemma \ref{lema:solution_QCLP} extend Lemma 2.2 of~\cite{witten2009a}?}
\label{subsec_in_supp:lemma2_2_completed}

Fig.~\ref{fig:illustration_Lemma_2_2_fails} illustrates two particular scenarios of Case 1 in Lemma \ref{lema:solution_QCLP} for $p=2$ where Lemma 2.2 in~\cite{witten2009a} fails.
\begin{figure}[!htb]
\begin{center}
%\centerline{\includegraphics[width=\columnwidth]{icml_numpapers}}
\subfigure{\label{fig:a}\includegraphics[width=.4\columnwidth]{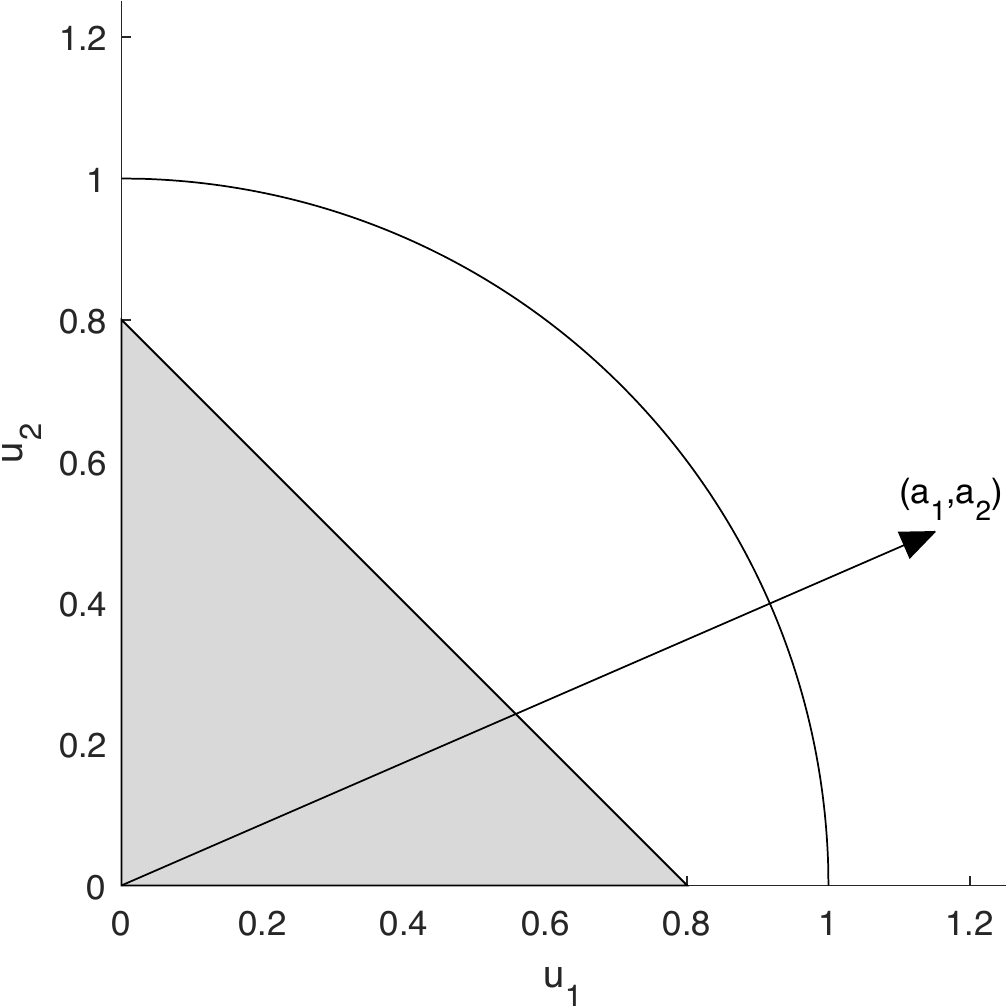}}
\hspace{1cm}
\subfigure{\label{fig:b}\includegraphics[width=.4\columnwidth]{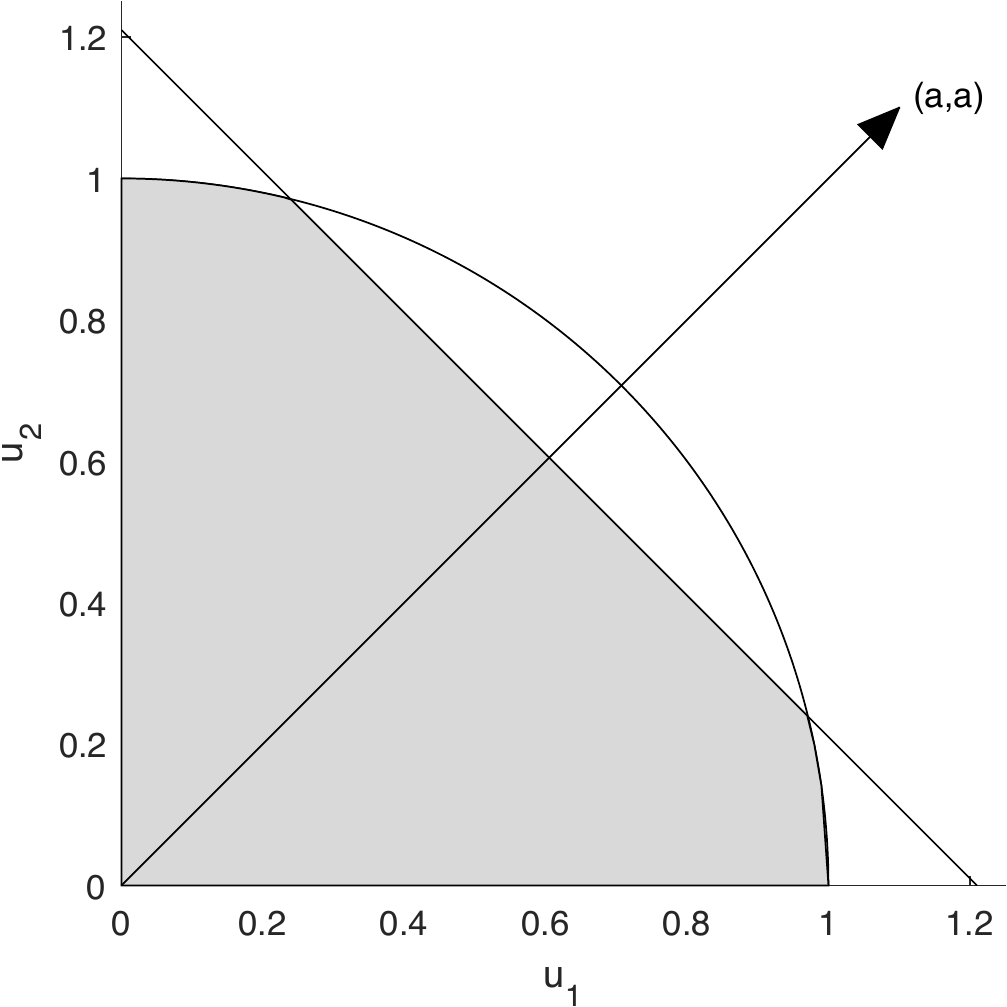}}
\caption{Two particular scenarios of Case 1 in Lemma \ref{lema:solution_QCLP} where Lemma 2.2 in~\cite{witten2009a} does not consider in the optimization problem. The dimension is $p=2$. The shaded area shows the domain [feasible/constraint set/region] (defined by the L2 and L1 constraints) of the objective function of problem \eqref{equ:QCLP}.
(a) $c = 0.8 < 1$;
(b) $c = 1.25 < \sqrt{2}$ and $a_1=a_2=a$.
\lsc{In both cases, the optimal solution in (a) (the point $\mat{u}*=[0.8; 0]$) and the optimal solution in (b) (any point on the chord of the circle) do not have a form that is shown in Lemma 2.2 in~\cite{witten2009a}.}}
\label{fig:illustration_Lemma_2_2_fails}
\end{center}
\end{figure}

\lsc{Essentially, the expression presented in Lemma 2.2 of~\cite{witten2009a} is the solution to
	\begin{equation}\label{equ:QCLP_L2equality_constraint}
	\underset{\mat{u}}{\text{maximize}} \; \mat{a}^\trans \mat{u}  \quad  \text{subject to } \twonorm{\mat{u}}^2 = 1, \normof{\mat{u}}{1} \leq c
	\end{equation}
while in the problem \eqref{equ:QCLP} that Lemma \ref{lema:solution_QCLP} is solving, the L2 equality constraint is replaced by the L2 inequality constraint, resulting in a convex problem.

Note that in order for problem \eqref{equ:QCLP_L2equality_constraint} to have an optimal solution of the form presented in Lemma 2.2 of~\cite{witten2009a}, $c$ must be larger than or equal to $\sqrt{\abs{\set{S}}}$, where $\set{S}$ is a set defined as $\set{S} = \left\{i: i \in \argmax_j \abs{a_j} \right\}$ (see the proof of Lemma \ref{lema:solution_QCLP}). Otherwise,
\begin{itemize}
  \item when $0 \leq c < 1$, problem \eqref{equ:QCLP_L2equality_constraint} is infeasible (there are no feasible points that satisfy the constraints).
  \item when $1 < c < \sqrt{\abs{\set{S}}}$, the optimal solution to problem \eqref{equ:QCLP_L2equality_constraint} is
\begin{subequations}\label{equ:solution_QCLP_L2equality_constraint_Case1}
\begin{equation}
	[\mat{u}^*]_i =
	\begin{cases}
	w_i \signof{a_i} , & i \in \set{S} \\
	0, & i \notin \set{S}
	\end{cases}
\end{equation}
where $w_i \geq 0$, $i \in \set{S}$, satisfy
\begin{equation}
\sum_{i \in \set{S}} w_i^2 = 1, \quad \sum_{i \in \set{S}} w_i = c
\end{equation}
\end{subequations}
\end{itemize}
Note that solution \eqref{equ:solution_QCLP_L2equality_constraint_Case1} cannot be written in the form shown in Lemma 2.2 in~\cite{witten2009a}.

By contrast, problem \eqref{equ:QCLP} has an optimal solution for every $c \geq 0$.}

%%%%%%%%%%%%%%%%%%%%%%%
\section{How to obtain multiple canonical components of SCCA model sequentially?}
\label{sec_in_supp:multiple_components}
%%%%%%%%%%%%%%%%%%%%%%%
% repeatedly  in a greedy way
In this section, we will show how to apply the single-canonical-component SCCA algorithms to sequentially compute multiple canonical components of standard and simplified SCCA models. \lsc{Note that except that Algorithm \ref{alg:simplified_SCCA_for_multiple_canonical_components_deflated_sample_cov_matrix_XY} was described in~\cite{witten2009a}, all algorithms (Algorithms \ref{alg:standard_SCCA_for_multiple_canonical_components_deflated_sample_cov_matrix_XY}-\ref{alg:standard_SCCA_for_multiple_canonical_components_deflated_data} for standard the SCCA model and Algorithm \ref{alg:simplified_SCCA_for_multiple_canonical_components_deflated_data} for the simplified SCCA model) and their theoretical justifications in Sections \ref{subsubsec_in_supp:proof_cal_multiple_canonical_components_for_standard_SCCA} and \ref{subsubsec_in_supp:proof_cal_multiple_canonical_components_for_simplified_SCCA} are new, to the best our knowledge.}

\subsection{Sequential calculation of multiple canonical components of standard SCCA}
\label{subsec_in_supp:proof_cal_multiple_canonical_components_for_standard_SCCA}

The SCCA model for computing $R$ canonical components is
\begin{equation}\label{equ:standard_SCCA_model_multiple_components}
\begin{aligned}
& \underset{\mat{U},\mat{V}}{\text{maximize}} & & \traceof{\mat{U}^\trans \mat{\hat{\Sigma}}_{\mat{xy}} \mat{V}} \\
&                           \text{subject to} & & \mat{U}^\trans \mat{\hat{\Sigma}}_{\mat{xx}} \mat{U} = \mat{I}_R, \normof{\mat{u}_r}{1} \leq c_{1r}, r=1,2,\dots,R\\
&                                             & & \mat{V}^\trans \mat{\hat{\Sigma}}_{\mat{yy}} \mat{V} = \mat{I}_R, \normof{\mat{v}_r}{1} \leq c_{2r}, r=1,2,\dots,R
\end{aligned}
\end{equation}
where
\begin{align}
    \mat{\hat{\Sigma}}_{\mat{xy}} &= \frac{1}{n-1} \mat{X}^\trans \mat{Y} \label{equ:sample_cov_matrix_XY} \\
    \mat{\hat{\Sigma}}_{\mat{xx}} &= \frac{1}{n-1} \mat{X}^\trans \mat{X} \label{equ:sample_cov_matrix_X} \\
    \mat{\hat{\Sigma}}_{\mat{yy}} &= \frac{1}{n-1} \mat{Y}^\trans \mat{Y} \label{equ:sample_cov_matrix_Y}
\end{align}
are the sample cross-covariance between \rev{random vectors $\mat{x}$ and $\mat{y}$}, sample auto-covariance matrix within \rev{random vector $\mat{x}$} and sample auto-covariance matrix within \rev{random vector $\mat{y}$}, respectively. Here we assume that the columns of $\mat{X}$ and $\mat{Y}$ have been centered to zero mean.

For clarity, we first present two algorithms (Algorithms \ref{alg:standard_SCCA_for_multiple_canonical_components_deflated_sample_cov_matrix_XY} and \ref{alg:standard_SCCA_for_multiple_canonical_components_deflated_data}) to sequentially compute multiple canonical components of SCCA: one is based on deflation of the cross-covariance matrix, and the other one is based on deflation of the data matrices. Then we provide theoretical explanations of both algorithms in the subsequent sections.

\begin{algorithm}
	\caption{Sequential computation of $R$ canonical components of SCCA via deflation of the cross-covariance matrix.}
	\label{alg:standard_SCCA_for_multiple_canonical_components_deflated_sample_cov_matrix_XY}
	\begin{algorithmic}[1]
\STATE Let $\mat{\hat{\Sigma}}_{\mat{xy}}^0 = \frac{1}{n-1} \mat{X}^\trans \mat{Y} \in \real^{p \times q}$, $\mat{\hat{\Sigma}}_{\mat{xx}} = \frac{1}{n-1} \mat{X}^\trans \mat{X} \in \real^{p \times p}$ and $\mat{\hat{\Sigma}}_{\mat{yy}} = \frac{1}{n-1} \mat{Y}^\trans \mat{Y} \in \real^{q \times q}$.
\FOR{$r=1,2,\dots,R$} %REPEAT %~\Comment{Inner loop}
\STATE Find the $r$-th pair of canonical weight vectors $\mat{\hat{u}}_r$ and $\mat{\hat{v}}_r$ by applying the single-canonical-component SCCA algorithm to $\left(\mat{\hat{\Sigma}}_{\mat{xy}}^{r-1}, \mat{\hat{\Sigma}}_{\mat{xx}}, \mat{\hat{\Sigma}}_{\mat{yy}}\right)$:
\begin{equation*}%\label{equ:standard_SCCA_rth_component_cov}
\begin{aligned}
& \underset{\mat{u}_r,\mat{v}_r}{\text{maximize}} & & \mat{u}_r^\trans \mat{\hat{\Sigma}}_{\mat{xy}}^{r-1} \mat{v}_r \\
&                           \text{subject to} & & \mat{u}_r^\trans \mat{\hat{\Sigma}}_{\mat{xx}} \mat{u}_r \leq 1, \normof{\mat{u}_r}{1} \leq c_{1r} \\
&                                             & & \mat{v}_r^\trans \mat{\hat{\Sigma}}_{\mat{yy}} \mat{v}_r \leq 1, \normof{\mat{v}_r}{1} \leq c_{2r}
\end{aligned}
\end{equation*}
%%\STATE Let $\hat{d}_r = \mat{\hat{u}}_r^\trans \mat{\hat{\Sigma}}_{\mat{xy}}^{r-1} \mat{\hat{v}}_r$.
%\STATE Let $\hat{d}_r = \frac{\mat{\hat{u}}_r^\trans \mat{\hat{\Sigma}}_{\mat{xy}}^{r-1} \mat{\hat{v}}_r}{\mat{\hat{u}}_r^\trans \mat{\hat{\Sigma}}_{\mat{xx}} \mat{\hat{u}}_r \cdot \mat{\hat{v}}_r^\trans \mat{\hat{\Sigma}}_{\mat{yy}} \mat{\hat{v}}_r}$.
\STATE $\mat{\hat{\Sigma}}_{\mat{xy}}^r \gets \mat{\hat{\Sigma}}_{\mat{xy}}^{r-1} - \mat{\hat{\Sigma}}_{\mat{xx}} \hat{d}_r \mat{\hat{u}}_r \mat{\hat{v}}_r^\trans \mat{\hat{\Sigma}}_{\mat{yy}}$, where $\hat{d}_r = \frac{\mat{\hat{u}}_r^\trans \mat{\hat{\Sigma}}_{\mat{xy}}^{r-1} \mat{\hat{v}}_r}{\mat{\hat{u}}_r^\trans \mat{\hat{\Sigma}}_{\mat{xx}} \mat{\hat{u}}_r \cdot \mat{\hat{v}}_r^\trans \mat{\hat{\Sigma}}_{\mat{yy}} \mat{\hat{v}}_r}$.
\ENDFOR
	\end{algorithmic}
\end{algorithm}

\begin{algorithm}
	\caption{Sequential computation of $R$ canonical components of SCCA via deflation of the data matrices.}
	\label{alg:standard_SCCA_for_multiple_canonical_components_deflated_data}
	\begin{algorithmic}[1]
\STATE Let $\mat{X}^0 = \mat{X} \in \real^{n \times p}$, $\mat{Y}^0 = \mat{Y} \in \real^{n \times q}$.
%\STATE Let $\mat{\hat{\Sigma}}_{\mat{xx}} = \frac{1}{n-1} \mat{X}^\trans \mat{X} \in \real^{p \times p}$ and $\mat{\hat{\Sigma}}_{\mat{yy}} = \frac{1}{n-1} \mat{Y}^\trans \mat{Y} \in \real^{q \times q}$.
\FOR{$r=1,2,\dots,R$} %REPEAT %~\Comment{Inner loop}
\STATE Find the $r$-th pair of canonical weight vectors $\left(\mat{\hat{u}}_r,\mat{\hat{v}}_r\right)$ by applying Algorithm \ref{alg:LinearizedADMM} to solve% to $\left(\mat{\hat{\Sigma}}_{\mat{xy}}^{r-1}, \mat{\hat{\Sigma}}_{\mat{xx}}, \mat{\hat{\Sigma}}_{\mat{yy}}\right)$:
\begin{equation*}%\label{equ:standard_SCCA_rth_component_data}
\begin{aligned}
& \underset{\mat{u},\mat{v}}{\text{maximize}} & & \frac{1}{n-1} \mat{u}_r^\trans {\mat{X}^{r-1}}^\trans \mat{Y}^{r-1} \mat{v}_r \\
& \text{subject to}                           & & \frac{1}{n-1} \mat{u}_r^\trans \mat{X}^\trans \mat{X} \mat{u}_r \leq 1, \normof{\mat{u}_r}{1} \leq c_{1r} \\
&                                             & & \frac{1}{n-1} \mat{v}_r^\trans \mat{Y}^\trans \mat{Y} \mat{v}_r \leq 1, \normof{\mat{v}_r}{1} \leq c_{2r} \\
\end{aligned}
\end{equation*}
\STATE Calculate the residual data:
%\begin{align*}
%{\mat{X}^r}^\trans &\gets {\mat{X}^{r-1}}^\trans - \frac{\mat{X}^\trans \mat{X} \mat{\hat{u}}_r \mat{\hat{u}}_r^\trans}{\mat{\hat{u}}_r^\trans \mat{X}^\trans \mat{X} \mat{\hat{u}}_r} {\mat{X}^{r-1}}^\trans \\
%{\mat{Y}^r}^\trans &\gets {\mat{Y}^{r-1}}^\trans - \frac{\mat{Y}^\trans \mat{Y} \mat{\hat{v}}_r \mat{\hat{v}}_r^\trans}{\mat{\hat{v}}_r^\trans \mat{Y}^\trans \mat{Y} \mat{\hat{v}}_r} {\mat{Y}^{r-1}}^\trans
%\end{align*}
\begin{align}
\mat{X}^r &\gets \mat{X}^{r-1} - \mat{X}^{r-1} \frac{\mat{\hat{u}}_r \mat{\hat{u}}_r^\trans \mat{X}^\trans \mat{X}}{\mat{\hat{u}}_r^\trans \mat{X}^\trans \mat{X} \mat{\hat{u}}_r} \label{equ:standard_SCCA_in_finite_sample_setting_deflated_data_matrix_X_iter_r} \\
\mat{Y}^r &\gets \mat{Y}^{r-1} - \mat{Y}^{r-1} \frac{\mat{\hat{v}}_r \mat{\hat{v}}_r^\trans \mat{Y}^\trans \mat{Y}}{\mat{\hat{v}}_r^\trans \mat{Y}^\trans \mat{Y} \mat{\hat{v}}_r} \label{equ:standard_SCCA_in_finite_sample_setting_deflated_data_matrix_Y_iter_r}
\end{align}
\ENDFOR
	\end{algorithmic}
\end{algorithm}

\begin{remark}
The deflated data in Eqs. \eqref{equ:standard_SCCA_in_finite_sample_setting_deflated_data_matrix_X_iter_r}-\eqref{equ:standard_SCCA_in_finite_sample_setting_deflated_data_matrix_Y_iter_r} can also be interpreted as the residual matrix of linear least squares regression: $\underset{\mat{z} \in \real^n}{\text{minimize}} \; \normof{\mat{X}^{r-1} \left(\mat{X}^\trans \mat{X}\right)^{-1/2} - \mat{z} \cdot \left[\left(\mat{X}^\trans \mat{X}\right)^{1/2} \mat{\hat{u}}_r\right]^\trans}{\rm F}^2$ and $\underset{\mat{\zeta} \in \real^n}{\text{minimize}} \; \normof{\mat{Y}^{r-1} \left(\mat{Y}^\trans \mat{Y}\right)^{-1/2} - \mat{\zeta} \cdot \left[\left(\mat{Y}^\trans \mat{Y}\right)^{1/2} \mat{\hat{v}}_r\right]^\trans}{\rm F}^2$, respectively.
\end{remark}

\clearpage

\subsubsection{Sequential calculation of multiple SCCA canonical components in the large-sample-size asymptotic regime}
\label{subsubsec_in_supp:proof_cal_multiple_canonical_components_for_standard_SCCA}

% gain insights on how to
To compute $R$ canonical components sequentially/greedily, we consider the asymptotic regime of $n\to\infty$ in which case model \eqref{equ:standard_SCCA_model_multiple_components} becomes
\begin{equation}\label{equ:standard_SCCA_model_multiple_components_in_asymptotic_regime}
\begin{aligned}
& \underset{\mat{U},\mat{V}}{\text{maximize}} & & \traceof{\mat{U}^\trans \mat{\Sigma}_{\mat{xy}} \mat{V}} \\
& \text{subject to}                           & & \mat{U}^\trans \mat{\Sigma}_{\mat{xx}} \mat{U} = \mat{I}_R \\
&                                             & & \mat{V}^\trans \mat{\Sigma}_{\mat{yy}} \mat{V} = \mat{I}_R \\
\end{aligned}
\end{equation}
where $\mat{\Sigma}_{\mat{xy}}$, $\mat{\Sigma}_{\mat{xx}}$ and $\mat{\Sigma}_{\mat{yy}}$ are the population cross-covariance matrix between \rev{random vectors $\mat{x}$ and $\mat{y}$}, population auto-covariance matrix within \rev{random vector $\mat{x}$} and population auto-covariance matrix within \rev{random vector $\mat{y}$}, respectively. \lsc{Note that in model \eqref{equ:standard_SCCA_model_multiple_components_in_asymptotic_regime} we have dropped the L1 regularizers: since we have infinite amount of data available for use, the L1 regularizations are no longer necessary.}

The Lagrangian function of problem \eqref{equ:standard_SCCA_model_multiple_components_in_asymptotic_regime} is defined as
\begin{equation*}
    \mathcal{L} \left(\mat{U}, \mat{V}, \mat{\Psi}, \mat{\Phi}\right) = - \mat{U}^\trans \mat{\Sigma}_{\mat{xy}} \mat{V} + \inp{\mat{\Psi}}{\mat{U}^\trans \mat{\Sigma}_{\mat{xx}} \mat{U} - \mat{I}_R} + \inp{\mat{\Phi}}{\mat{V}^\trans \mat{\Sigma}_{\mat{yy}} \mat{V} - \mat{I}_R}
\end{equation*}
%\begin{align*}
%    \mathcal{L} \left(\mat{U}, \mat{V}, \mat{\Psi}, \mat{\Phi}\right) = - \mat{U}^\trans \mat{\Sigma}_{\mat{xy}} \mat{V} &+ \inp{\mat{\Psi}}{\mat{U}^\trans \mat{\Sigma}_{\mat{xx}} \mat{U} - \mat{I}_R} \\
%                                                                                                                        &+ \inp{\mat{\Phi}}{\mat{V}^\trans \mat{\Sigma}_{\mat{yy}} \mat{V} - \mat{I}_R}
%\end{align*}
where % $\mat{\Psi},\mat{\Phi} \in \real^{R \times R}$ are symmetric matrices of Lagrange multipliers for the $R(R-1)/2$ constraints
$\mat{\Psi} \in \real^{R \times R}$ is a symmetric matrix of Lagrange multipliers for the $R(R+1)/2$ constraints on $\mat{U}$ in problem \eqref{equ:standard_SCCA_model_multiple_components_in_asymptotic_regime}, and $\mat{\Phi} \in \real^{R \times R}$ is a symmetric matrix of Lagrange multipliers for the $R(R+1)/2$ constraints on $\mat{V}$.

Denote the optimal primal and dual solutions of problem \eqref{equ:standard_SCCA_model_multiple_components_in_asymptotic_regime} as $\left(\mat{\hat{U}},\mat{\hat{V}}\right)$ and $\left(\mat{\hat{\Psi}},\mat{\hat{\Phi}}\right)$, respectively. According to the KKT conditions, we have
\begin{align}
2 \mat{\Sigma}_{\mat{xx}} \mat{\hat{U}} \mat{\hat{\Psi}} &= \mat{\Sigma}_{\mat{xy}} \mat{\hat{V}} \label{equ:standard_SCCA_in_asymptotic_regime_KKT_condition_stationarity_U} \\
2 \mat{\Sigma}_{\mat{yy}} \mat{\hat{V}} \mat{\hat{\Phi}} &= \mat{\Sigma}_{\mat{xy}}^\trans \mat{\hat{U}} \label{equ:standard_SCCA_in_asymptotic_regime_KKT_condition_stationarity_V}
\end{align}

Combining Eqs. \eqref{equ:standard_SCCA_in_asymptotic_regime_KKT_condition_stationarity_U}-\eqref{equ:standard_SCCA_in_asymptotic_regime_KKT_condition_stationarity_V} with the quadratic constraints in problem \eqref{equ:standard_SCCA_model_multiple_components_in_asymptotic_regime} yields %$\mat{\hat{\Phi}} = \mat{\hat{V}}^\trans \mat{\Sigma}_{\mat{xy}}^\trans \mat{\hat{U}}$.
\begin{align*}
    2\mat{\hat{\Psi}} &= \mat{\hat{U}}^\trans \mat{\Sigma}_{\mat{xy}} \mat{\hat{V}} \\
    2\mat{\hat{\Phi}} &= \mat{\hat{V}}^\trans \mat{\Sigma}_{\mat{xy}}^\trans \mat{\hat{U}}
\end{align*}

Note that problem \eqref{equ:standard_SCCA_model_multiple_components_in_asymptotic_regime} does not have a unique solution due to the rotational ambiguity: if $\left(\mat{\hat{U}},\mat{\hat{V}}\right)$ is an optimal solution of problem \eqref{equ:standard_SCCA_model_multiple_components_in_asymptotic_regime}, then $\left(\mat{\hat{\hat{U}}},\mat{\hat{\hat{V}}}\right) = \left(\mat{\hat{U}} \mat{Q},\mat{\hat{V}} \mat{Q}\right)$ for any orthogonal matrix $\mat{Q}  \in \real^{R \times R}$ is also an optimal solution.
\lsc{Since $\mat{\hat{\Psi}}$ and thus $\mat{\hat{U}}^\trans \mat{\Sigma}_{\mat{xy}} \mat{\hat{V}}$ is a symmetric matrix}, we can choose the optimal solution $\left(\mat{\hat{U}},\mat{\hat{V}}\right)$ for which $\mat{\hat{U}}^\trans \mat{\Sigma}_{\mat{xy}} \mat{\hat{V}}$ is a diagonal matrix. As a result, %we have %$\mat{\hat{\Psi}}=\mat{\hat{\Phi}} \coloneqq \mat{D}$
\begin{equation*}
    2\mat{\hat{\Psi}}=2\mat{\hat{\Phi}} \eqqcolon \mat{D}
\end{equation*}
is a diagonal matrix. Assuming both $\mat{\Sigma}_{\mat{xx}}$ and $\mat{\Sigma}_{\mat{yy}}$ are nonsingular, Eqs. \eqref{equ:standard_SCCA_in_asymptotic_regime_KKT_condition_stationarity_U}-\eqref{equ:standard_SCCA_in_asymptotic_regime_KKT_condition_stationarity_V} can be rewritten as
\begin{align}
\mat{\Sigma}_{\mat{xx}}^{1/2} \mat{\hat{U}} \mat{D} &= \mat{\Sigma}_{\mat{xx}}^{-1/2} \mat{\Sigma}_{\mat{xy}} \mat{\Sigma}_{\mat{yy}}^{-1/2}  \cdot \mat{\Sigma}_{\mat{yy}}^{1/2}\mat{\hat{V}} \label{equ:standard_SCCA_in_asymptotic_regime_KKT_condition_stationarity_U_2} \\
\mat{\Sigma}_{\mat{yy}}^{1/2} \mat{\hat{V}} \mat{D} &= \mat{\Sigma}_{\mat{yy}}^{-1/2} \mat{\Sigma}_{\mat{xy}}^\trans \mat{\Sigma}_{\mat{xx}}^{-1/2} \cdot \mat{\Sigma}_{\mat{xx}}^{1/2}\mat{\hat{U}} \label{equ:standard_SCCA_in_asymptotic_regime_KKT_condition_stationarity_V_2}
\end{align}
Note that the objective of problem \eqref{equ:standard_SCCA_model_multiple_components_in_asymptotic_regime} is to maximize $\traceof{\mat{D}}$ under the constraints that $\mat{\Sigma}_{\mat{xx}}^{1/2}\mat{U}$ and $\mat{\Sigma}_{\mat{yy}}^{1/2}\mat{V}$ both have orthonormal columns. It follows that %$\mat{E} \coloneqq \mat{\Sigma}_{\mat{xx}}^{1/2} \mat{\hat{U}}$ and $\mat{F} \coloneqq \mat{\Sigma}_{\mat{yy}}^{1/2} \mat{\hat{V}}$
$\mat{D}$ contains the $R$ largest singular values of $\mat{\Sigma}_{\mat{xx}}^{-1/2} \mat{\Sigma}_{\mat{xy}} \mat{\Sigma}_{\mat{yy}}^{-1/2}$, and
%\begin{align}
%\mat{\hat{E}} &\coloneqq \mat{\Sigma}_{\mat{xx}}^{1/2} \mat{\hat{U}} \label{equ:standard_SCCA_in_asymptotic_regime_change_of_variable_U2E} \\
%\mat{\hat{F}} &\coloneqq \mat{\Sigma}_{\mat{yy}}^{1/2} \mat{\hat{V}} \label{equ:standard_SCCA_in_asymptotic_regime_change_of_variable_V2F}
%\end{align}
$\mat{\hat{E}} = \mat{\Sigma}_{\mat{xx}}^{1/2} \mat{\hat{U}}$ and $\mat{\hat{F}} = \mat{\Sigma}_{\mat{yy}}^{1/2} \mat{\hat{V}}$ contain the corresponding $R$ left and right singular vectors, respectively. \lsc{According to the Eckart-Young-Mirsky theorem \cite{eckart1936}, the columns of $\mat{\hat{U}}$ and $\mat{\hat{V}}$ can be obtained by successive rank-one SVDs of the residual covariance matrix.} Specifically, let $\mat{S}^0 = \mat{\Sigma}_{\mat{xx}}^{-1/2} \mat{\Sigma}_{\mat{xy}} \mat{\Sigma}_{\mat{yy}}^{-1/2} \in \real^{p \times q}$.
For $r=1,2,\dots,R$, we have
\begin{align}
    \left(\hat{d}_r, \mat{\hat{u}}_r,\mat{\hat{v}}_r\right) &= \argmin_{\substack{d_r, \mat{u}_r,\mat{v}_r \\ \twonorm{\mat{\Sigma}_{\mat{xx}}^{1/2} \mat{u}_r}=1 \\ \twonorm{\mat{\Sigma}_{\mat{yy}}^{1/2} \mat{v}_r}=1}} \normof{\mat{S}^{r-1} - \mat{\Sigma}_{\mat{xx}}^{1/2} d_r \mat{u}_r \mat{v}_r^\trans \mat{\Sigma}_{\mat{yy}}^{1/2}}{\rm F}^2 \label{equ:standard_SCCA_in_asymptotic_regime_estimate_duv} \\
    \mat{S}^r &= \mat{S}^{r-1} - \mat{\Sigma}_{\mat{xx}}^{1/2} \hat{d}_r \mat{\hat{u}}_r \mat{\hat{v}}_r^\trans \mat{\Sigma}_{\mat{yy}}^{1/2} \label{equ:standard_SCCA_in_asymptotic_regime_update_normalized_cov}
\end{align}
Suppose we have obtained the estimate of the $r$-th pair of canonical weight vectors $\left(\mat{\hat{u}}_r,\mat{\hat{v}}_r\right)$. %by applying the single-canonical-component SCCA algorithm (Algorithm \ref{alg:LinearizedADMM}).
We then estimate $d_r$ as
\begin{align*}
    \hat{d}_r = \argmin_{d_r} \normof{\mat{S}^{r-1} - \mat{\Sigma}_{\mat{xx}}^{1/2} d_r \mat{\hat{u}}_r \mat{\hat{v}}_r^\trans \mat{\Sigma}_{\mat{yy}}^{1/2}}{\rm F}^2 = \frac{\mat{\hat{u}}_r^\trans \mat{\Sigma}_{\mat{xx}}^{1/2} \mat{S}^{r-1} \mat{\Sigma}_{\mat{yy}}^{1/2} \mat{\hat{v}}_r}{\mat{\hat{u}}_r^\trans \mat{\Sigma}_{\mat{xx}} \mat{\hat{u}}_r \cdot \mat{\hat{v}}_r^\trans \mat{\Sigma}_{\mat{yy}} \mat{\hat{v}}_r}
\end{align*}
%Therefore, Eq. \equ{equ:standard_SCCA_in_asymptotic_regime_update_normalized_cov} becomes %the residual covariance matrix for calculating the next pair of canonical weight vectors are
%\begin{equation*}\label{equ:standard_SCCA_in_asymptotic_regime_deflated_popu_cov_matrix_XY_normalized}
%    \mat{S}^r = \mat{S}^{r-1} - \frac{\mat{\Sigma}_{\mat{xx}}^{1/2} \mat{\hat{u}}_r \mat{\hat{u}}_r^\trans \mat{\Sigma}_{\mat{xx}}^{1/2} \mat{S}^{r-1} \mat{\Sigma}_{\mat{yy}}^{1/2} \mat{\hat{v}}_r \mat{\hat{v}}_r^\trans \mat{\Sigma}_{\mat{yy}}^{1/2}}{\mat{\hat{u}}_r^\trans \mat{\Sigma}_{\mat{xx}} \mat{\hat{u}}_r \cdot \mat{\hat{v}}_r^\trans \mat{\Sigma}_{\mat{yy}} \mat{\hat{v}}_r}
%\end{equation*}

Taken all together, to compute multiple canonical components sequentially in the large-sample-size asymptotic regime, the residual covariance matrix is updated as below:
\begin{align}
    \mat{S}^0 &= \mat{\Sigma}_{\mat{xx}}^{-1/2} \mat{\Sigma}_{\mat{xy}} \mat{\Sigma}_{\mat{yy}}^{-1/2} \label{equ:standard_SCCA_in_asymptotic_regime_deflated_popu_cov_matrix_XY_normalized_iter0} \\
    \mat{S}^r &= \mat{S}^{r-1} - \frac{\mat{\Sigma}_{\mat{xx}}^{1/2} \mat{\hat{u}}_r \mat{\hat{u}}_r^\trans \mat{\Sigma}_{\mat{xx}}^{1/2} \mat{S}^{r-1} \mat{\Sigma}_{\mat{yy}}^{1/2} \mat{\hat{v}}_r \mat{\hat{v}}_r^\trans \mat{\Sigma}_{\mat{yy}}^{1/2}}{\mat{\hat{u}}_r^\trans \mat{\Sigma}_{\mat{xx}} \mat{\hat{u}}_r \cdot \mat{\hat{v}}_r^\trans \mat{\Sigma}_{\mat{yy}} \mat{\hat{v}}_r}, \; r=1,2,\dots,R \label{equ:standard_SCCA_in_asymptotic_regime_deflated_popu_cov_matrix_XY_normalized_iter_r}
\end{align}
or equivalently
\begin{align}
    \mat{\Sigma}_{\mat{xy}}^0 &= \mat{\Sigma}_{\mat{xy}} \label{equ:standard_SCCA_in_asymptotic_regime_deflated_popu_cov_matrix_XY_iter0} \\
    \mat{\Sigma}_{\mat{xy}}^r &= \mat{\Sigma}_{\mat{xy}}^{r-1} - \frac{\mat{\Sigma}_{\mat{xx}} \mat{\hat{u}}_r \mat{\hat{u}}_r^\trans \mat{\Sigma}_{\mat{xy}}^{r-1} \mat{\hat{v}}_r \mat{\hat{v}}_r^\trans \mat{\Sigma}_{\mat{yy}}}{\mat{\hat{u}}_r^\trans \mat{\Sigma}_{\mat{xx}} \mat{\hat{u}}_r \cdot \mat{\hat{v}}_r^\trans \mat{\Sigma}_{\mat{yy}} \mat{\hat{v}}_r}, \; r=1,2,\dots,R \label{equ:standard_SCCA_in_asymptotic_regime_deflated_popu_cov_matrix_XY_iter_r}
\end{align}
which results in Algorithm \ref{alg:standard_SCCA_for_multiple_canonical_components_in_asymptotic_regime_deflated_popu_cov_matrix_XY}.

Let $\mat{x} \in \real^{p \times 1}$ and $\mat{y} \in \real^{q \times 1}$ be random vectors generating the $\mat{X} \in \real^{n \times p}$ and $\mat{Y} \in \real^{n \times q}$, respectively. For notational simplicity, assume $\expvof{\mat{x}}=\mat{0}$, $\expvof{\mat{y}}=\mat{0}$. It can be shown that the residual covariance matrix update formulas \eqref{equ:standard_SCCA_in_asymptotic_regime_deflated_popu_cov_matrix_XY_normalized_iter0}-\eqref{equ:standard_SCCA_in_asymptotic_regime_deflated_popu_cov_matrix_XY_normalized_iter_r} can be rewritten in terms of \rev{random vectors $\mat{x}$ and $\mat{y}$} as
\begin{align}
    \mat{x}^0 &= \mat{x}, \; \mat{y}^0 = \mat{y} \label{equ:standard_SCCA_in_asymptotic_regime_deflated_random_vectors_XY_iter0} \\
    \mat{x}^r &= \mat{\Sigma}_{\mat{xx}}^{1/2} \left(\mat{I}_p - \frac{\mat{\Sigma}_{\mat{xx}}^{1/2} \mat{\hat{u}}_r \mat{\hat{u}}_r^\trans \mat{\Sigma}_{\mat{xx}}^{1/2}}{\mat{\hat{u}}_r^\trans \mat{\Sigma}_{\mat{xx}} \mat{\hat{u}}_r}\right) \mat{\Sigma}_{\mat{xx}}^{-1/2} \mat{x}^{r-1}
         = \mat{x}^{r-1} - \frac{\mat{\Sigma}_{\mat{xx}} \mat{\hat{u}}_r \mat{\hat{u}}_r^\trans}{\mat{\hat{u}}_r^\trans \mat{\Sigma}_{\mat{xx}} \mat{\hat{u}}_r} \mat{x}^{r-1}
         \label{equ:standard_SCCA_in_asymptotic_regime_deflated_random_vector_X_iter_r} \\
    \mat{y}^r &= \mat{\Sigma}_{\mat{yy}}^{1/2} \left(\mat{I}_q - \frac{\mat{\Sigma}_{\mat{yy}}^{1/2} \mat{\hat{v}}_r \mat{\hat{v}}_r^\trans \mat{\Sigma}_{\mat{yy}}^{1/2}}{\mat{\hat{v}}_r^\trans \mat{\Sigma}_{\mat{yy}} \mat{\hat{v}}_r}\right) \mat{\Sigma}_{\mat{yy}}^{-1/2} \mat{y}^{r-1}
         = \mat{y}^{r-1} - \frac{\mat{\Sigma}_{\mat{yy}} \mat{\hat{v}}_r \mat{\hat{v}}_r^\trans}{\mat{\hat{v}}_r^\trans \mat{\Sigma}_{\mat{yy}} \mat{\hat{v}}_r} \mat{y}^{r-1}
    \label{equ:standard_SCCA_in_asymptotic_regime_deflated_random_vector_Y_iter_r}
\end{align}
which results in Algorithm \ref{alg:standard_SCCA_for_multiple_canonical_components_in_asymptotic_regime_deflated_random vectors}.

\clearpage

\begin{algorithm}
	\caption{Sequential computation of $R$ canonical components of SCCA in asymptotic regime via deflation of the population cross-covariance matrix.}
	\label{alg:standard_SCCA_for_multiple_canonical_components_in_asymptotic_regime_deflated_popu_cov_matrix_XY}
	\begin{algorithmic}[1]
% \STATE $\mat{\Sigma}_{\mat{xy}}^0 = \expvof{[\mat{x}-\expvof{\mat{x}}][\mat{y}-\expvof{\mat{y}}]^\trans}$, \\ $\mat{\Sigma}_{\mat{xx}} = \expvof{[\mat{x}-\expvof{\mat{x}}][\mat{x}-\expvof{\mat{x}}]^\trans}$ and $\mat{\Sigma}_{\mat{yy}} = {[\mat{y}-\expvof{\mat{y}}][\mat{y}-\expvof{\mat{y}}]^\trans}$.
\STATE $\mat{\Sigma}_{\mat{xy}}^0 = \expvof{\mat{x} \mat{y}^\trans}$, $\mat{\Sigma}_{\mat{xx}} = \expvof{\mat{x} \mat{x}^\trans}$ and $\mat{\Sigma}_{\mat{yy}} = \expvof{\mat{y} \mat{y}^\trans}$.
\FOR{$r=1,2,\dots,R$} %REPEAT %~\Comment{Inner loop}
\STATE Find the estimate of the $r$-th pair of canonical weight vectors $\mat{\hat{u}}_r$ and $\mat{\hat{v}}_r$:
\begin{equation*}%\label{equ:standard_SCCA_rth_component_cov}
\begin{aligned}
& \underset{\mat{u}_r,\mat{v}_r}{\text{maximize}} & & \mat{u}_r^\trans \mat{\Sigma}_{\mat{xy}}^{r-1} \mat{v}_r \\
&                           \text{subject to} & & \mat{u}_r^\trans \mat{\Sigma}_{\mat{xx}} \mat{u}_r = 1 \\
&                                             & & \mat{v}_r^\trans \mat{\Sigma}_{\mat{yy}} \mat{v}_r = 1 \end{aligned}
\end{equation*}
%%\STATE Let $\hat{d}_r = \mat{\hat{u}}_r^\trans \mat{\Sigma}_{\mat{xy}}^{r-1} \mat{\hat{v}}_r$.
%\STATE Let $\hat{d}_r = \frac{\mat{\hat{u}}_r^\trans \mat{\Sigma}_{\mat{xy}}^{r-1} \mat{\hat{v}}_r}{\mat{\hat{u}}_r^\trans \mat{\Sigma}_{\mat{xx}} \mat{\hat{u}}_r \cdot \mat{\hat{v}}_r^\trans \mat{\Sigma}_{\mat{yy}} \mat{\hat{v}}_r}$.
\STATE $\mat{\Sigma}_{\mat{xy}}^r \gets \mat{\Sigma}_{\mat{xy}}^{r-1} - \mat{\Sigma}_{\mat{xx}} \hat{d}_r \mat{\hat{u}}_r \mat{\hat{v}}_r^\trans \mat{\Sigma}_{\mat{yy}}$, where $\hat{d}_r = \frac{\mat{\hat{u}}_r^\trans \mat{\Sigma}_{\mat{xy}}^{r-1} \mat{\hat{v}}_r}{\mat{\hat{u}}_r^\trans \mat{\Sigma}_{\mat{xx}} \mat{\hat{u}}_r \cdot \mat{\hat{v}}_r^\trans \mat{\Sigma}_{\mat{yy}} \mat{\hat{v}}_r}$.
\ENDFOR
	\end{algorithmic}
\end{algorithm}

\begin{algorithm}
	\caption{Sequential computation of $R$ canonical components of SCCA in asymptotic regime via deflation of random vectors.}
	\label{alg:standard_SCCA_for_multiple_canonical_components_in_asymptotic_regime_deflated_random vectors}
	\begin{algorithmic}[1]
\STATE Let $\mat{x}^0 = \mat{x} \in \real^{p \times 1}$, $\mat{y}^0 = \mat{y} \in \real^{q \times 1}$.
\FOR{$r=1,2,\dots,R$} %REPEAT %~\Comment{Inner loop}
\STATE Find the estimate of the $r$-th pair of canonical weight vectors $\mat{\hat{u}}_r$ and $\mat{\hat{v}}_r$:
\begin{equation*}%\label{equ:standard_SCCA_rth_component_cov}
\begin{aligned}
& \underset{\mat{u}_r,\mat{v}_r}{\text{maximize}} & & \mat{u}_r^\trans \expvof{\mat{x}^{r-1}{\mat{y}^{r-1}}^\trans} \mat{v}_r \\
&                           \text{subject to} & & \mat{u}_r^\trans \expvof{\mat{x} \mat{x}^\trans} \mat{u}_r = 1 \\
&                                             & & \mat{v}_r^\trans \expvof{\mat{y} \mat{y}^\trans} \mat{v}_r = 1
\end{aligned}
\end{equation*}
\STATE Calculate the residual random vectors:
\begin{align*}
%    \mat{x}^0 &= \mat{x}, \; \mat{y}^0 = \mat{y} \label{equ:standard_SCCA_in_asymptotic_regime_deflated_random_vectors_XY_iter0} \\
    \mat{x}^r &\gets \mat{x}^{r-1} - \frac{\mat{\Sigma}_{\mat{xx}} \mat{\hat{u}}_r \mat{\hat{u}}_r^\trans}{\mat{\hat{u}}_r^\trans \mat{\Sigma}_{\mat{xx}} \mat{\hat{u}}_r} \mat{x}^{r-1} \\ %\label{equ:standard_SCCA_in_asymptotic_regime_deflated_random_vector_X_iter_r} \\
    \mat{y}^r &\gets \mat{y}^{r-1} - \frac{\mat{\Sigma}_{\mat{yy}} \mat{\hat{v}}_r \mat{\hat{v}}_r^\trans}{\mat{\hat{v}}_r^\trans \mat{\Sigma}_{\mat{yy}} \mat{\hat{v}}_r} \mat{y}^{r-1}  %\label{equ:standard_SCCA_in_asymptotic_regime_deflated_random_vector_Y_iter_r}
\end{align*}
\ENDFOR
	\end{algorithmic}
\end{algorithm}

%In finite-sample settings, the covariance matrix deflation based Algorithm \ref{alg:standard_SCCA_for_multiple_canonical_components_in_asymptotic_regime_deflated_popu_cov_matrix_XY} becomes Algorithm \ref{alg:standard_SCCA_for_multiple_canonical_components_deflated_sample_cov_matrix_XY} to sequentially compute $R$ canonical components of SCCA, while the random vector deflation based Algorithm \ref{alg:standard_SCCA_for_multiple_canonical_components_in_asymptotic_regime_deflated_random vectors} becomes Algorithm \ref{alg:standard_SCCA_for_multiple_canonical_components_deflated_data}.

The Algorithms \ref{alg:standard_SCCA_for_multiple_canonical_components_deflated_sample_cov_matrix_XY} and \ref{alg:standard_SCCA_for_multiple_canonical_components_deflated_data} are implementations of Algorithms \ref{alg:standard_SCCA_for_multiple_canonical_components_in_asymptotic_regime_deflated_popu_cov_matrix_XY} and \ref{alg:standard_SCCA_for_multiple_canonical_components_in_asymptotic_regime_deflated_random vectors} in finite-sample settings, respectively.

\clearpage

%%%%%%%%%%%%%%%%%%%%%%%
\subsection{Sequential calculation of multiple canonical components of simplified SCCA}
\label{subsec_in_supp:proof_cal_multiple_canonical_components_for_simplified_SCCA}

The simplified SCCA model for computing $R$ canonical components is
\begin{equation}\label{equ:simplified_SCCA_model_multiple_components}
\begin{aligned}
& \underset{\mat{U},\mat{V}}{\text{maximize}} & & \traceof{\mat{U}^\trans \mat{\hat{\Sigma}}_{\mat{xy}} \mat{V}} \\
&                           \text{subject to} & & \mat{U}^\trans \mat{U} = \mat{I}_R, \normof{\mat{u}_r}{1} \leq c_{1r}, r=1,2,\dots,R\\
&                                             & & \mat{V}^\trans \mat{V} = \mat{I}_R, \normof{\mat{v}_r}{1} \leq c_{2r}, r=1,2,\dots,R
\end{aligned}
\end{equation}
where $\mat{\hat{\Sigma}}_{\mat{xy}}$ is the sample cross-covariance matrix between \rev{random vectors $\mat{x}$ and $\mat{y}$}.

For clarity, we first present two algorithms (Algorithms \ref{alg:simplified_SCCA_for_multiple_canonical_components_deflated_sample_cov_matrix_XY} and \ref{alg:simplified_SCCA_for_multiple_canonical_components_deflated_data}) to sequentially compute multiple canonical components of simplified SCCA: one is based on deflation of the cross-covariance matrix, and the other one is based on deflation of the data matrices. Then we provide theoretical explanations of both algorithms in the subsequent sections.

\begin{algorithm}
	\caption{Sequential computation of $R$ canonical components of simplified SCCA via deflation of the cross-covariance matrix.}
	\label{alg:simplified_SCCA_for_multiple_canonical_components_deflated_sample_cov_matrix_XY}
	\begin{algorithmic}[1]
\STATE Let $\mat{\hat{\Sigma}}_{\mat{xy}}^0 = \frac{1}{n-1} \mat{X}^\trans \mat{Y} \in \real^{p \times q}$.
\FOR{$r=1,2,\dots,R$} %REPEAT %~\Comment{Inner loop}
\STATE Find the $r$-th pair of canonical weight vectors $\mat{\hat{u}}_r$ and $\mat{\hat{v}}_r$ by applying Algorithm \ref{alg:simplified_SCCA} to $\mat{\hat{\Sigma}}_{\mat{xy}}^{r-1}$:
\begin{equation*}%\label{equ:simplified_SCCA_rth_component_cov}
\begin{aligned}
& \underset{\mat{u}_r,\mat{v}_r}{\text{maximize}} & & \mat{u}_r^\trans \mat{\hat{\Sigma}}_{\mat{xy}}^{r-1} \mat{v}_r \\
&                           \text{subject to} & & \twonorm{\mat{u}_r}^2 \leq 1, \normof{\mat{u}_r}{1} \leq c_{1r} \\
&                                             & & \twonorm{\mat{v}_r}^2 \leq 1, \normof{\mat{v}_r}{1} \leq c_{2r}
\end{aligned}
\end{equation*}
%%\STATE Let $\hat{d}_r = \mat{\hat{u}}_r^\trans \mat{\hat{\Sigma}}_{\mat{xy}}^{r-1} \mat{\hat{v}}_r$.
%\STATE Let $\hat{d}_r = \frac{\mat{\hat{u}}_r^\trans \mat{\hat{\Sigma}}_{\mat{xy}}^{r-1} \mat{\hat{v}}_r}{\twonorm{\mat{\hat{u}}_r}^2 \cdot \twonorm{\mat{\hat{v}}_r}^2}$.
\STATE $\mat{\hat{\Sigma}}_{\mat{xy}}^r \gets \mat{\hat{\Sigma}}_{\mat{xy}}^{r-1} - \hat{d}_r \mat{\hat{u}}_r \mat{\hat{v}}_r^\trans$, where $\hat{d}_r = \frac{\mat{\hat{u}}_r^\trans \mat{\hat{\Sigma}}_{\mat{xy}}^{r-1} \mat{\hat{v}}_r}{\twonorm{\mat{\hat{u}}_r}^2 \cdot \twonorm{\mat{\hat{v}}_r}^2}$.
\ENDFOR
	\end{algorithmic}
\end{algorithm}

\begin{algorithm}
	\caption{Sequential computation of $R$ canonical components of simplified SCCA via deflation of the data matrices.}
	\label{alg:simplified_SCCA_for_multiple_canonical_components_deflated_data}
	\begin{algorithmic}[1]
\STATE Let $\mat{X}^0 = \mat{X} \in \real^{n \times p}$, $\mat{Y}^0 = \mat{Y} \in \real^{n \times q}$.
\FOR{$r=1,2,\dots,R$} %REPEAT %~\Comment{Inner loop}
\STATE Find the $r$-th pair of canonical weight vectors $\left(\mat{\hat{u}}_r,\mat{\hat{v}}_r\right)$ by applying Algorithm \ref{alg:simplified_SCCA}:% to $\left(\mat{\hat{\Sigma}}_{\mat{xy}}^{r-1}, \mat{\hat{\Sigma}}_{\mat{xx}},\right)$:
\begin{equation*}%\label{equ:simplified_SCCA_rth_component_data}
\begin{aligned}
& \underset{\mat{u},\mat{v}}{\text{maximize}} & & \frac{1}{n-1} \mat{u}_r^\trans {\mat{X}^{r-1}}^\trans \mat{Y}^{r-1} \mat{v}_r \\
& \text{subject to}                           & & \twonorm{\mat{u}_r}^2 \leq 1, \normof{\mat{u}}{1} \leq c_{1r} \\
&                                             & & \twonorm{\mat{v}_r}^2 \leq 1, \normof{\mat{v}}{1} \leq c_{2r} \\
\end{aligned}
\end{equation*}
%%\STATE Let $\hat{d}_r = \mat{\hat{u}}_r^\trans \mat{\hat{\Sigma}}_{\mat{xy}}^{r-1} \mat{\hat{v}}_r$.
%\STATE Let $\hat{d}_r = \frac{\mat{\hat{u}}_r^\trans \mat{\hat{\Sigma}}_{\mat{xy}}^{r-1} \mat{\hat{v}}_r}{\twonorm{\mat{\hat{u}}_r}^2 \cdot \twonorm{\mat{\hat{v}}_r}^2}$.
\STATE Calculate the residual data:%\footnote{The deflated data can also be interpreted as the residual matrix of linear least squares regression: $\underset{\mat{z} \in \real^n}{\text{minimize}} \; \normof{\mat{X}^{r-1} - \mat{z} \cdot \mat{\hat{u}}_r^\trans}{\rm F}^2$ and $\underset{\mat{\zeta} \in \real^n}{\text{minimize}} \; \normof{\mat{Y}^{r-1} - \mat{\zeta} \cdot \mat{\hat{v}}_r^\trans}{\rm F}^2$.}:
\begin{align}
    \mat{X}^r &\gets \mat{X}^{r-1} \left(\mat{I}_p - \frac{\mat{\hat{u}}_r \mat{\hat{u}}_r^\trans}{\twonorm{\mat{\hat{u}}_r}^2}\right) \label{equ:simplified_SCCA_in_finite_sample_setting_deflated_data_matrix_X_iter_r} \\
    \mat{Y}^r &\gets \mat{Y}^{r-1} \left(\mat{I}_q - \frac{\mat{\hat{v}}_r \mat{\hat{v}}_r^\trans}{\twonorm{\mat{\hat{v}}_r}^2}\right) \label{equ:simplified_SCCA_in_finite_sample_setting_deflated_data_matrix_Y_iter_r}
\end{align}
\ENDFOR
	\end{algorithmic}
\end{algorithm}

\begin{remark}
The deflated data in Eqs. \eqref{equ:simplified_SCCA_in_finite_sample_setting_deflated_data_matrix_X_iter_r}-\eqref{equ:simplified_SCCA_in_finite_sample_setting_deflated_data_matrix_Y_iter_r} can also be interpreted as the residual matrix of linear least squares regression: $\underset{\mat{z} \in \real^n}{\text{minimize}} \; \normof{\mat{X}^{r-1} - \mat{z} \cdot \mat{\hat{u}}_r^\trans}{\rm F}^2$ and $\underset{\mat{\zeta} \in \real^n}{\text{minimize}} \; \normof{\mat{Y}^{r-1} - \mat{\zeta} \cdot \mat{\hat{v}}_r^\trans}{\rm F}^2$, respectively.
\end{remark}

%\lsc{Note that Algorithm \ref{alg:simplified_SCCA_for_multiple_canonical_components} was proposed in~\cite{witten2009a}.}

\subsubsection{Sequential calculation of multiple SCCA canonical components in the large-sample-size asymptotic regime}
\label{subsubsec_in_supp:proof_cal_multiple_canonical_components_for_simplified_SCCA}
% gain insights on how to
To compute $R$ canonical components sequentially/greedily, we consider the asymptotic regime of $n\to\infty$ in which case model \eqref{equ:simplified_SCCA_model_multiple_components} becomes
\begin{equation}\label{equ:simplified_SCCA_model_multiple_components_in_asymptotic_regime}
\begin{aligned}
& \underset{\mat{U},\mat{V}}{\text{maximize}} & & \traceof{\mat{U}^\trans \mat{\Sigma}_{\mat{xy}} \mat{V}} \\
& \text{subject to}                           & & \mat{U}^\trans \mat{U} = \mat{I}_R \\
&                                             & & \mat{V}^\trans \mat{V} = \mat{I}_R \\
\end{aligned}
\end{equation}
where $\mat{\Sigma}_{\mat{xy}}$ is the population cross-covariance matrix between \rev{random vectors $\mat{x}$ and $\mat{y}$}. \lsc{Note that in model \eqref{equ:simplified_SCCA_model_multiple_components_in_asymptotic_regime} we have dropped the L1 regularizers: since we have infinite amount of data available for use, the L1 regularizations are no longer necessary.}

The Lagrangian function of problem \eqref{equ:simplified_SCCA_model_multiple_components_in_asymptotic_regime} is defined as
\begin{equation*}
    \mathcal{L} \left(\mat{U}, \mat{V}, \mat{\Psi}, \mat{\Phi}\right) = - \mat{U}^\trans \mat{\Sigma}_{\mat{xy}} \mat{V} + \inp{\mat{\Psi}}{\mat{U}^\trans \mat{U} - \mat{I}_R} + \inp{\mat{\Phi}}{\mat{V}^\trans \mat{V} - \mat{I}_R}
\end{equation*}
%\begin{align*}
%    \mathcal{L} \left(\mat{U}, \mat{V}, \mat{\Psi}, \mat{\Phi}\right) = - \mat{U}^\trans \mat{\Sigma}_{\mat{xy}} \mat{V} &+ \inp{\mat{\Psi}}{\mat{U}^\trans \mat{U} - \mat{I}_R} \\
%                                                                                                                        &+ \inp{\mat{\Phi}}{\mat{V}^\trans \mat{V} - \mat{I}_R}
%\end{align*}
where % $\mat{\Psi},\mat{\Phi} \in \real^{R \times R}$ are symmetric matrices of Lagrange multipliers for the $R(R-1)/2$ constraints
$\mat{\Psi} \in \real^{R \times R}$ is a symmetric matrix of Lagrange multipliers for the $R(R+1)/2$ constraints on $\mat{U}$ in problem \eqref{equ:simplified_SCCA_model_multiple_components_in_asymptotic_regime}, and $\mat{\Phi} \in \real^{R \times R}$ is a symmetric matrix of Lagrange multipliers for the $R(R+1)/2$ constraints on $\mat{V}$.

Denote the optimal primal and dual solutions of problem \eqref{equ:simplified_SCCA_model_multiple_components_in_asymptotic_regime} as $\left(\mat{\hat{U}},\mat{\hat{V}}\right)$ and $\left(\mat{\hat{\Psi}},\mat{\hat{\Phi}}\right)$, respectively. According to the KKT conditions, we have
\begin{align}
2 \mat{\hat{U}} \mat{\hat{\Psi}} &= \mat{\Sigma}_{\mat{xy}} \mat{\hat{V}} \label{equ:simplified_SCCA_in_asymptotic_regime_KKT_condition_stationarity_U} \\
2 \mat{\hat{V}} \mat{\hat{\Phi}} &= \mat{\Sigma}_{\mat{xy}}^\trans \mat{\hat{U}} \label{equ:simplified_SCCA_in_asymptotic_regime_KKT_condition_stationarity_V}
\end{align}

Combining Eqs. \eqref{equ:simplified_SCCA_in_asymptotic_regime_KKT_condition_stationarity_U}-\eqref{equ:simplified_SCCA_in_asymptotic_regime_KKT_condition_stationarity_V} with the quadratic constraints in problem \eqref{equ:simplified_SCCA_model_multiple_components_in_asymptotic_regime} yields %$\mat{\hat{\Phi}} = \mat{\hat{V}}^\trans \mat{\Sigma}_{\mat{xy}}^\trans \mat{\hat{U}}$.
\begin{align*}
    2\mat{\hat{\Psi}} &= \mat{\hat{U}}^\trans \mat{\Sigma}_{\mat{xy}} \mat{\hat{V}} \\
    2\mat{\hat{\Phi}} &= \mat{\hat{V}}^\trans \mat{\Sigma}_{\mat{xy}}^\trans \mat{\hat{U}}
\end{align*}

Note that problem \eqref{equ:simplified_SCCA_model_multiple_components_in_asymptotic_regime} does not have a unique solution due to the rotational ambiguity: if $\left(\mat{\hat{U}},\mat{\hat{V}}\right)$ is an optimal solution of problem \eqref{equ:simplified_SCCA_model_multiple_components_in_asymptotic_regime}, then $\left(\mat{\hat{\hat{U}}},\mat{\hat{\hat{V}}}\right) = \left(\mat{\hat{U}} \mat{Q},\mat{\hat{V}} \mat{Q}\right)$ for any orthogonal matrix $\mat{Q}  \in \real^{R \times R}$ is also an optimal solution.
\lsc{Since $\mat{\hat{\Psi}}$ and thus $\mat{\hat{U}}^\trans \mat{\Sigma}_{\mat{xy}} \mat{\hat{V}}$ is a symmetric matrix}, we can choose the optimal solution $\left(\mat{\hat{U}},\mat{\hat{V}}\right)$ for which $\mat{\hat{U}}^\trans \mat{\Sigma}_{\mat{xy}} \mat{\hat{V}}$ is a diagonal matrix. As a result, %we have %$\mat{\hat{\Psi}}=\mat{\hat{\Phi}} \coloneqq \mat{D}$
\begin{equation*}
    2\mat{\hat{\Psi}}=2\mat{\hat{\Phi}} \eqqcolon \mat{D}
\end{equation*}
is a diagonal matrix. Assuming both $\mat{\Sigma}_{\mat{xx}}$ and $\mat{\Sigma}_{\mat{yy}}$ are nonsingular, Eqs. \eqref{equ:simplified_SCCA_in_asymptotic_regime_KKT_condition_stationarity_U}-\eqref{equ:simplified_SCCA_in_asymptotic_regime_KKT_condition_stationarity_V} can be rewritten as
\begin{align}
 \mat{\hat{U}} \mat{D} &= \mat{\Sigma}_{\mat{xy}} \cdot \mat{\hat{V}} \label{equ:simplified_SCCA_in_asymptotic_regime_KKT_condition_stationarity_U_2} \\
 \mat{\hat{V}} \mat{D} &= \mat{\Sigma}_{\mat{xy}}^\trans \cdot \mat{\hat{U}} \label{equ:simplified_SCCA_in_asymptotic_regime_KKT_condition_stationarity_V_2}
\end{align}
Note that the objective of problem \eqref{equ:simplified_SCCA_model_multiple_components_in_asymptotic_regime} is to maximize $\traceof{\mat{D}}$ under the constraints that $\mat{U}$ and $\mat{V}$ both have orthonormal columns. It follows that %$\mat{E} \coloneqq \mat{\hat{U}}$ and $\mat{F} \coloneqq  \mat{\hat{V}}$
$\mat{D}$ contains the $R$ largest singular values of $\mat{\Sigma}_{\mat{xy}}$, and $\mat{\hat{U}}$ and $\mat{\hat{V}}$ contain the corresponding $R$ left and right singular vectors, respectively. \lsc{According to the Eckart-Young-Mirsky theorem \cite{eckart1936}, the columns of $\mat{\hat{U}}$ and $\mat{\hat{V}}$ can be obtained by successive rank-one SVDs of the residual covariance matrix.} Specifically, let $\mat{\Sigma}_{\mat{xy}}^0 = \mat{\Sigma}_{\mat{xy}} \in \real^{p \times q}$.
For $r=1,2,\dots,R$, we have
\begin{align}
    \left(\hat{d}_r, \mat{\hat{u}}_r,\mat{\hat{v}}_r\right) &= \argmin_{\substack{d_r, \mat{u}_r,\mat{v}_r \\ \twonorm{ \mat{u}_r}=1 \\ \twonorm{ \mat{v}_r}=1}} \normof{\mat{\Sigma}_{\mat{xy}}^{r-1} -  d_r \mat{u}_r \mat{v}_r^\trans }{\rm F}^2 \label{equ:simplified_SCCA_in_asymptotic_regime_estimate_duv} \\
    \mat{\Sigma}_{\mat{xy}}^r &= \mat{\Sigma}_{\mat{xy}}^{r-1} -  \hat{d}_r \mat{\hat{u}}_r \mat{\hat{v}}_r^\trans  \label{equ:simplified_SCCA_in_asymptotic_regime_update_popu_cov_matrix_XY}
\end{align}
Suppose we have obtained the estimate of the $r$-th pair of canonical weight vectors $\left(\mat{\hat{u}}_r,\mat{\hat{v}}_r\right)$. %by applying the single-canonical-component SCCA algorithm (Algorithm \ref{alg:LinearizedADMM}).
We then estimate $d_r$ as
\begin{align*}
    \hat{d}_r = \argmin_{d_r} \normof{\mat{\Sigma}_{\mat{xy}}^{r-1} -  d_r \mat{\hat{u}}_r \mat{\hat{v}}_r^\trans }{\rm F}^2 = \frac{\mat{\hat{u}}_r^\trans  \mat{\Sigma}_{\mat{xy}}^{r-1}  \mat{\hat{v}}_r}{\twonorm{\mat{\hat{u}}_r}^2 \cdot \twonorm{\mat{\hat{v}}_r}^2}
\end{align*}
%Therefore, Eq. \equ{equ:simplified_SCCA_in_asymptotic_regime_update_popu_cov_matrix_XY} becomes %the residual covariance matrix for calculating the next pair of canonical weight vectors are
%\begin{equation*}\label{equ:simplified_SCCA_in_asymptotic_regime_deflated_popu_cov_matrix_XY_normalized}
%    \mat{\Sigma}_{\mat{xy}}^r = \mat{\Sigma}_{\mat{xy}}^{r-1} - \frac{ \mat{\hat{u}}_r \mat{\hat{u}}_r^\trans  \mat{\Sigma}_{\mat{xy}}^{r-1}  \mat{\hat{v}}_r \mat{\hat{v}}_r^\trans }{\twonorm{\mat{\hat{u}}_r}^2 \cdot \twonorm{\mat{\hat{v}}_r}^2}
%\end{equation*}

Taken all together, to compute multiple canonical components sequentially in the large-sample-size asymptotic regime, the residual covariance matrix is updated as below:
\begin{align}
    \mat{\Sigma}_{\mat{xy}}^0 &= \mat{\Sigma}_{\mat{xy}} \label{equ:simplified_SCCA_in_asymptotic_regime_deflated_popu_cov_matrix_XY_iter0} \\
    \mat{\Sigma}_{\mat{xy}}^r &= \mat{\Sigma}_{\mat{xy}}^{r-1} - \frac{\mat{\hat{u}}_r \mat{\hat{u}}_r^\trans \mat{\Sigma}_{\mat{xy}}^{r-1} \mat{\hat{v}}_r \mat{\hat{v}}_r^\trans}{\twonorm{\mat{\hat{u}}_r}^2 \cdot \twonorm{\mat{\hat{v}}_r}^2}, \; r=1,2,\dots,R \label{equ:simplified_SCCA_in_asymptotic_regime_deflated_popu_cov_matrix_XY_iter_r}
\end{align}
This results in Algorithm \ref{alg:simplified_SCCA_for_multiple_canonical_components_in_asymptotic_regime_deflated_popu_cov_matrix_XY}.

% Let $\mat{x} \in \real^{p \times 1}$ and $\mat{y} \in \real^{q \times 1}$ be random vectors generating the $\mat{X} \in \real^{n \times p}$ and $\mat{Y} \in \real^{n \times q}$, respectively.
For notational simplicity, assume $\expvof{\mat{x}}=\mat{0}$, $\expvof{\mat{y}}=\mat{0}$. It can be shown that the residual covariance matrix update formulas \eqref{equ:simplified_SCCA_in_asymptotic_regime_deflated_popu_cov_matrix_XY_iter0}-\eqref{equ:simplified_SCCA_in_asymptotic_regime_deflated_popu_cov_matrix_XY_iter_r} can be rewritten in terms of \rev{random vectors $\mat{x}$ and $\mat{y}$} as
\begin{align}
    \mat{x}^0 &= \mat{x}, \; \mat{y}^0 = \mat{y} \label{equ:simplified_SCCA_in_asymptotic_regime_deflated_random_vectors_XY_iter0} \\
    \mat{x}^r &= \left(\mat{I}_p - \frac{\mat{\hat{u}}_r \mat{\hat{u}}_r^\trans}{\twonorm{\mat{\hat{u}}_r}^2}\right) \mat{x}^{r-1}
         \label{equ:simplified_SCCA_in_asymptotic_regime_deflated_random_vector_X_iter_r} \\
    \mat{y}^r &= \left(\mat{I}_q - \frac{\mat{\hat{v}}_r \mat{\hat{v}}_r^\trans}{\twonorm{\mat{\hat{v}}_r}^2}\right) \mat{y}^{r-1}
    \label{equ:simplified_SCCA_in_asymptotic_regime_deflated_random_vector_Y_iter_r}
\end{align}
which results in Algorithm \ref{alg:simplified_SCCA_for_multiple_canonical_components_in_asymptotic_regime_deflated_random vectors}.

\clearpage

\begin{algorithm}
	\caption{Sequential computation of $R$ canonical components of simplified SCCA in asymptotic regime via deflation of the population cross-covariance matrix.}
	\label{alg:simplified_SCCA_for_multiple_canonical_components_in_asymptotic_regime_deflated_popu_cov_matrix_XY}
	\begin{algorithmic}[1]
%\STATE Let $\mat{\Sigma}_{\mat{xy}}^0 = \expvof{[\mat{x}-\expvof{\mat{x}}][\mat{y}-\expvof{\mat{y}}]^\trans}$.
\STATE Let $\mat{\Sigma}_{\mat{xy}}^0 = \expvof{\mat{x} \mat{y}^\trans}$.
\FOR{$r=1,2,\dots,R$} %REPEAT %~\Comment{Inner loop}
\STATE Solve for the $r$-th pair of canonical weight vectors $\mat{\hat{u}}_r$ and $\mat{\hat{v}}_r$:
\begin{equation*}%\label{equ:simplified_SCCA_rth_component_cov}
\begin{aligned}
& \underset{\mat{u}_r,\mat{v}_r}{\text{maximize}} & & \mat{u}_r^\trans \mat{\Sigma}_{\mat{xy}}^{r-1} \mat{v}_r \\
&                           \text{subject to} & & \twonorm{\mat{u}_r}^2 = 1 \\
&                                             & & \twonorm{\mat{v}_r}^2 = 1 \end{aligned}
\end{equation*}
%%\STATE Let $\hat{d}_r = \mat{\hat{u}}_r^\trans \mat{\Sigma}_{\mat{xy}}^{r-1} \mat{\hat{v}}_r$.
%\STATE Let $\hat{d}_r = \frac{\mat{\hat{u}}_r^\trans \mat{\Sigma}_{\mat{xy}}^{r-1} \mat{\hat{v}}_r}{\twonorm{\mat{\hat{u}}_r}^2 \cdot \twonorm{\mat{\hat{v}}_r}^2}$.
\STATE $\mat{\Sigma}_{\mat{xy}}^r \gets \mat{\Sigma}_{\mat{xy}}^{r-1} - \hat{d}_r \mat{\hat{u}}_r \mat{\hat{v}}_r^\trans$, where $\hat{d}_r = \frac{\mat{\hat{u}}_r^\trans \mat{\Sigma}_{\mat{xy}}^{r-1} \mat{\hat{v}}_r}{\twonorm{\mat{\hat{u}}_r}^2 \cdot \twonorm{\mat{\hat{v}}_r}^2}$.
\ENDFOR
	\end{algorithmic}
\end{algorithm}

\begin{algorithm}
	\caption{Sequential computation of $R$ canonical components of simplified SCCA in asymptotic regime via deflation of random vectors.}
	\label{alg:simplified_SCCA_for_multiple_canonical_components_in_asymptotic_regime_deflated_random vectors}
	\begin{algorithmic}[1]
\STATE Let $\mat{x}^0 = \mat{x} \in \real^{p \times 1}$, $\mat{y}^0 = \mat{y} \in \real^{q \times 1}$.
\FOR{$r=1,2,\dots,R$} %REPEAT %~\Comment{Inner loop}
\STATE Solve for the $r$-th pair of canonical weight vectors $\mat{\hat{u}}_r$ and $\mat{\hat{v}}_r$:
\begin{equation*}%\label{equ:simplified_SCCA_rth_component_cov}
\begin{aligned}
& \underset{\mat{u}_r,\mat{v}_r}{\text{maximize}} & & \mat{u}_r^\trans \expvof{\mat{x}^{r-1}{\mat{y}^{r-1}}^\trans} \mat{v}_r \\
&                           \text{subject to} & & \twonorm{\mat{u}_r}^2 = 1 \\
&                                             & & \twonorm{\mat{v}_r}^2 = 1 \end{aligned}
\end{equation*}
\STATE Calculate the residual random vectors:
\begin{align*}
%    \mat{x}^0 &= \mat{x}, \; \mat{y}^0 = \mat{y} \label{equ:simplified_SCCA_in_asymptotic_regime_deflated_random_vectors_XY_iter0} \\
    \mat{x}^r &\gets \left(\mat{I}_p - \frac{\mat{\hat{u}}_r \mat{\hat{u}}_r^\trans}{\twonorm{\mat{\hat{u}}_r}^2}\right) \mat{x}^{r-1} \\
    \mat{y}^r &\gets \left(\mat{I}_q - \frac{\mat{\hat{v}}_r \mat{\hat{v}}_r^\trans}{\twonorm{\mat{\hat{v}}_r}^2}\right) \mat{y}^{r-1}
\end{align*}
\ENDFOR
	\end{algorithmic}
\end{algorithm}

In finite-sample settings, the covariance matrix deflation based Algorithm \ref{alg:simplified_SCCA_for_multiple_canonical_components_in_asymptotic_regime_deflated_popu_cov_matrix_XY} becomes Algorithm \ref{alg:simplified_SCCA_for_multiple_canonical_components_deflated_sample_cov_matrix_XY} to sequentially compute $R$ canonical components of simplified SCCA, while the random vector deflation based Algorithm \ref{alg:simplified_SCCA_for_multiple_canonical_components_in_asymptotic_regime_deflated_random vectors} becomes Algorithm \ref{alg:simplified_SCCA_for_multiple_canonical_components_deflated_data}.

%%%%%%%%%%%%%%%%%%%%%%%
\section{Supporting Information and additional results for simulation study on synthetic data}
\label{sec_in_supp:simulation}
%%%%%%%%%%%%%%%%%%%%%%%

\subsection{Covariance structure of the synthetic data}
\label{subsec_in_supp:cov_structure_synthetic_data}

The sample cross- and auto-covariance matrices among \rev{random vectors $\mat{x}$ and $\mat{y}$} are defined as
\begin{align}
    \mat{\hat{\Sigma}}_{\mat{xy}} &= \frac{1}{n} \mat{X}^\trans \mat{Y} \label{equ:sample_cov_matrix_XY} \\
    \mat{\hat{\Sigma}}_{\mat{xx}} &= \frac{1}{n} \mat{X}^\trans \mat{X} \label{equ:sample_cov_matrix_X} \\
    \mat{\hat{\Sigma}}_{\mat{yy}} &= \frac{1}{n} \mat{Y}^\trans \mat{Y} \label{equ:sample_cov_matrix_Y}
\end{align}

\subsubsection{Experimental setup 1: uncorrelated variables}
\label{subsubsec_in_supp:uncorrelated_variables}
The population cross- and auto-covariance matrices among \rev{random vectors $\mat{x}$ and $\mat{y}$} are
\begin{equation*}
    \expvof{\mat{x}} = \mat{0}, \quad \expvof{\mat{y}} = \mat{0}
\end{equation*}
\begin{align}
    \mat{\Sigma}_{\mat{xx}} &= \expvof{\mat{x} \mat{x}^\trans} = \mat{I}_p \label{equ:uncorrelated_variables_population_cov_matrix_X} \\
    \mat{\Sigma}_{\mat{yy}} &= \expvof{\mat{y} \mat{y}^\trans} = \twonorm{\mat{c}}^2 \mat{d} \mat{d}^\trans + \sigma^2 \mat{I}_q \label{equ:uncorrelated_variables_population_cov_matrix_Y} \\
    \mat{\Sigma}_{\mat{xy}} &= \expvof{\mat{x} \mat{y}^\trans} = \mat{c} \mat{d}^\trans \label{equ:uncorrelated_variables_population_cov_matrix_XY}
\end{align}

\begin{figure}[tb]
\begin{center}
\includegraphics[width=0.4\linewidth]{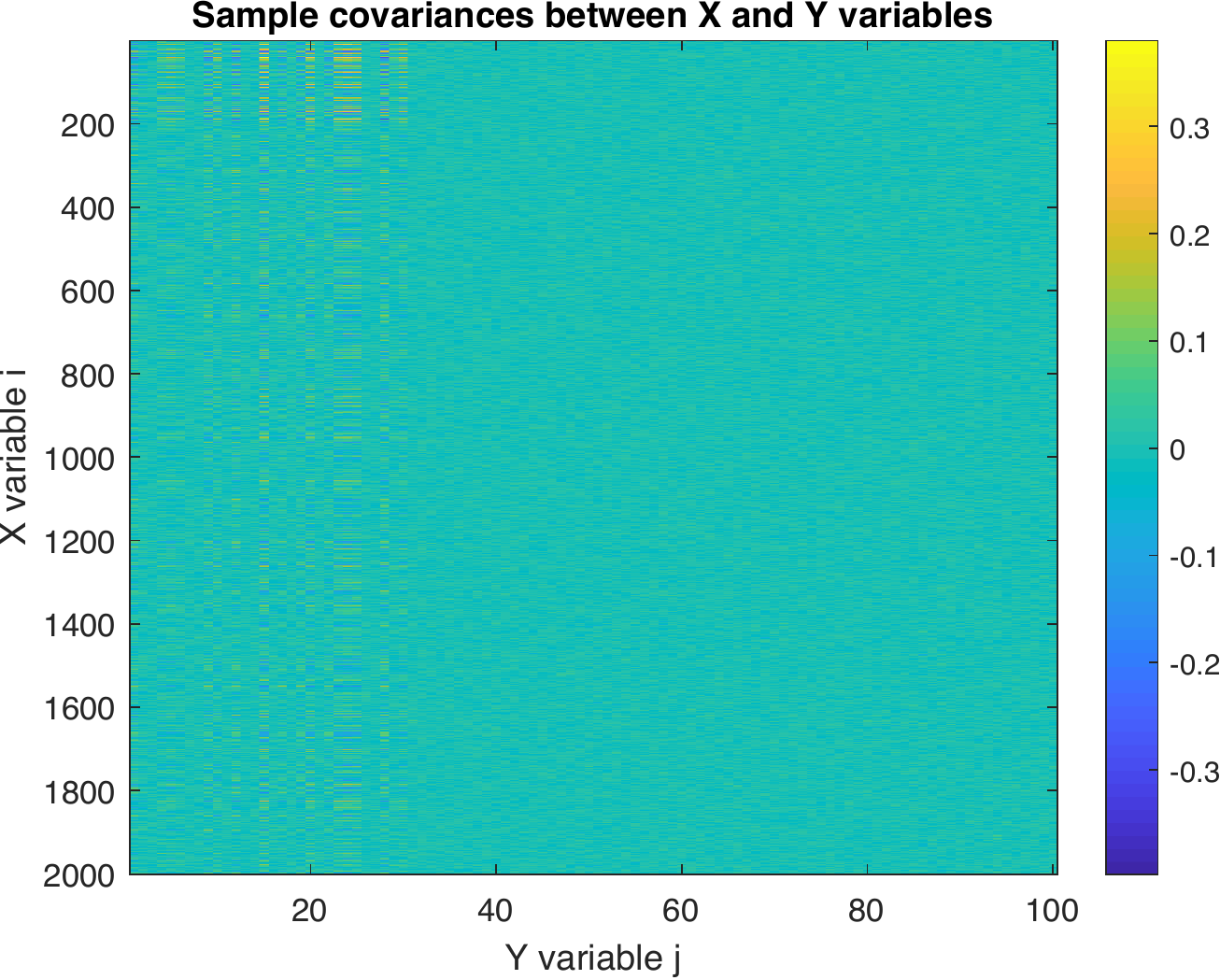}
\hspace{1cm}
\includegraphics[width=0.4\linewidth]{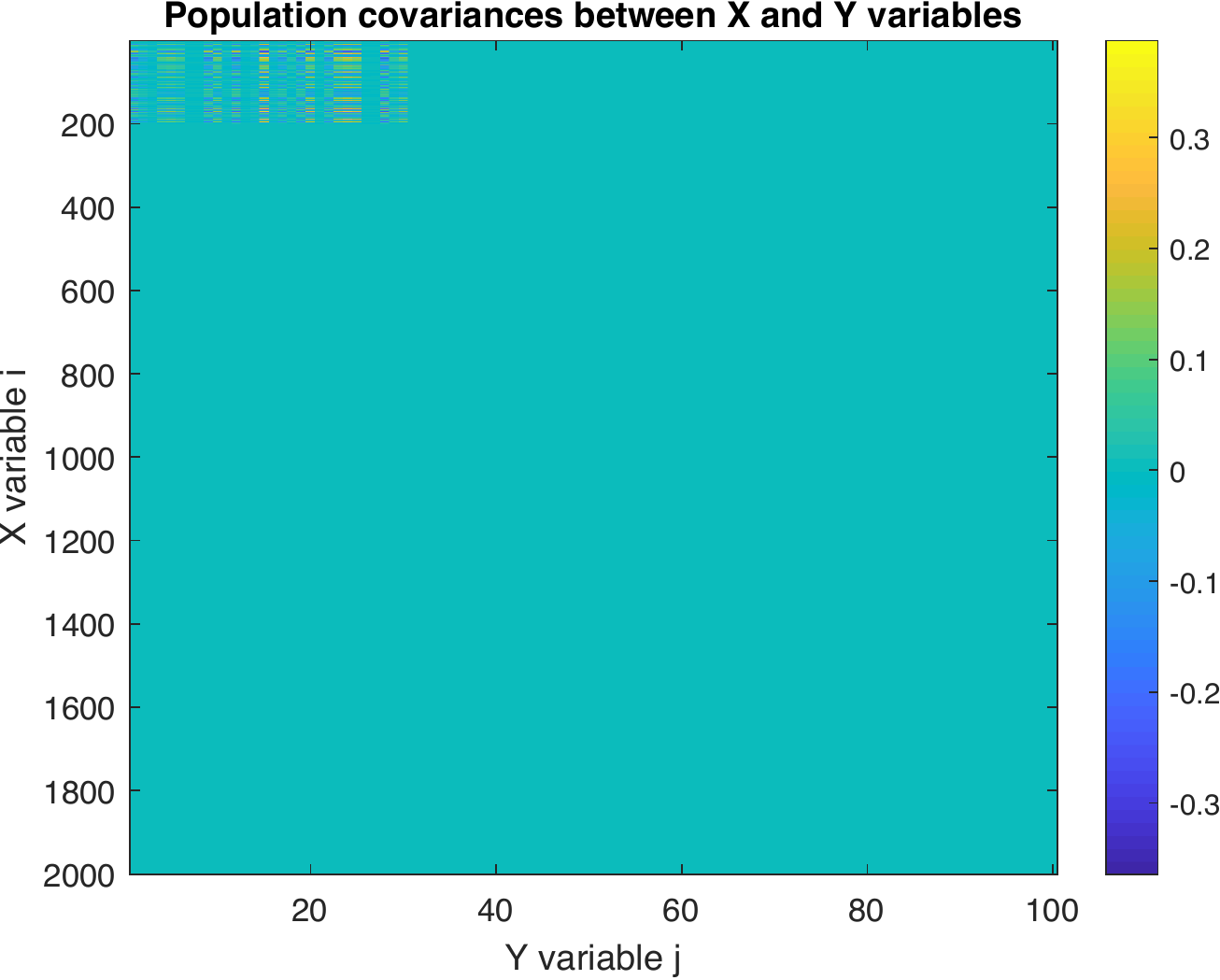} \\
\vspace{0.5cm}
\includegraphics[width=0.4\linewidth]{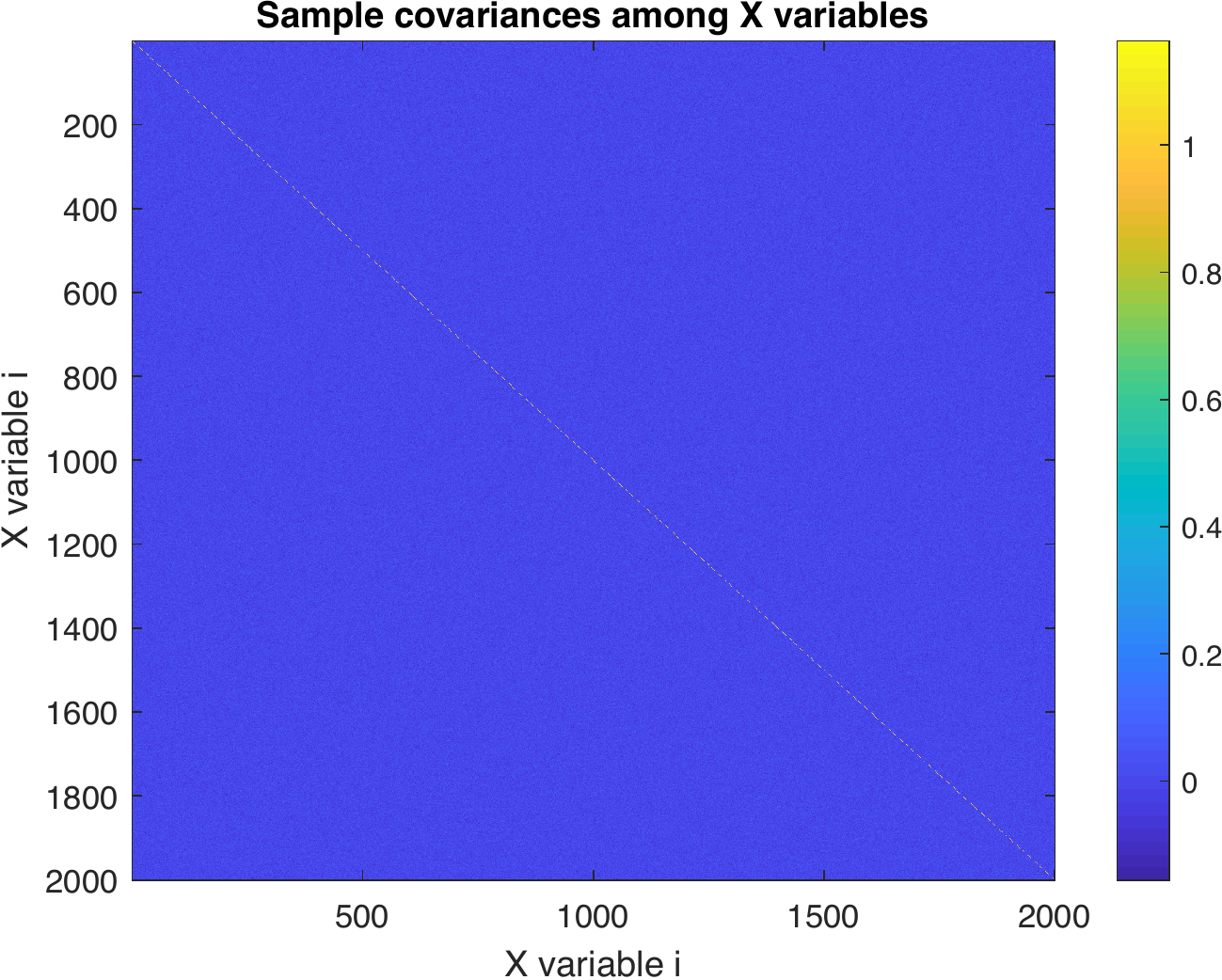}
\hspace{1cm}
\includegraphics[width=0.4\linewidth]{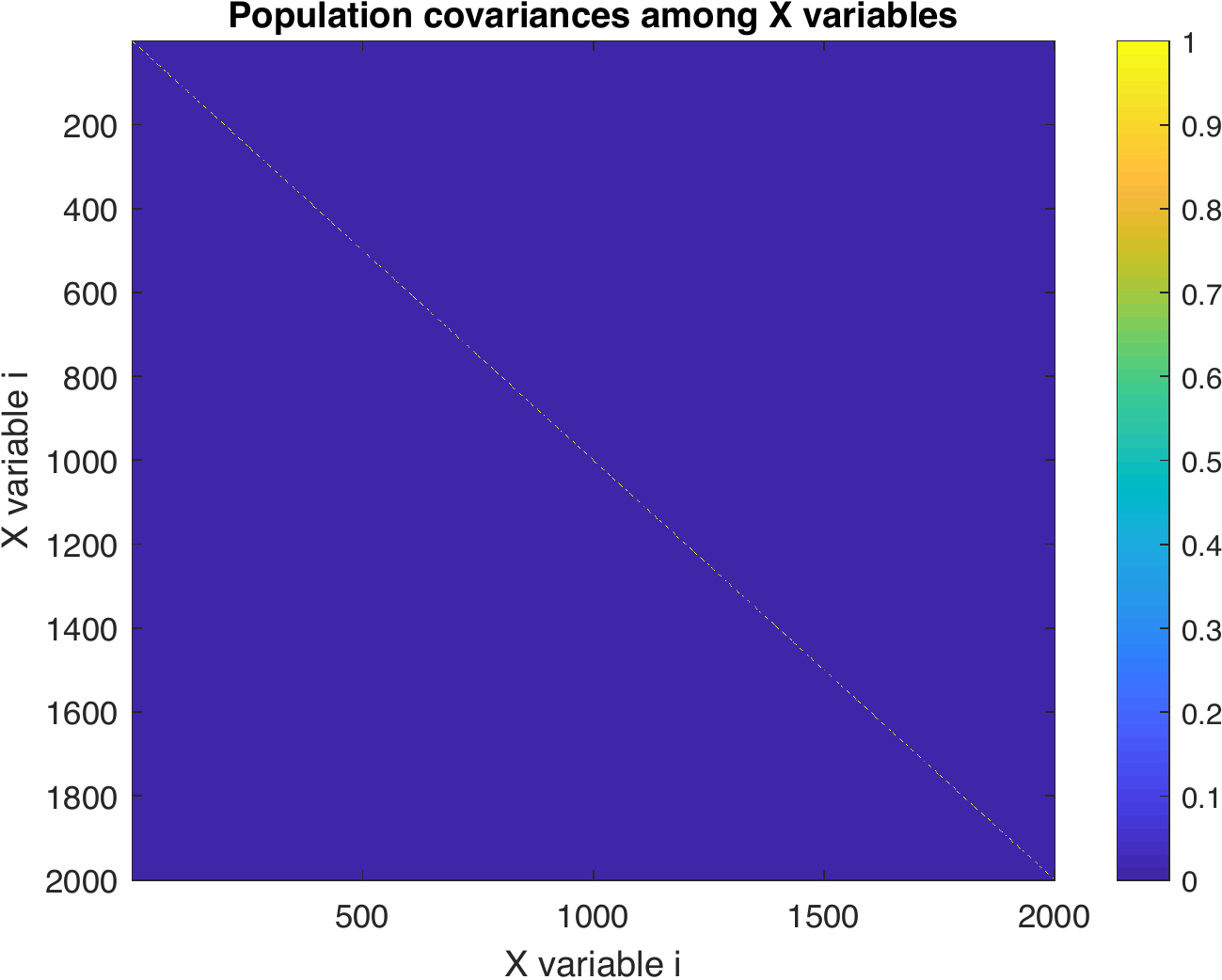} \\
\vspace{0.5cm}
\includegraphics[width=0.4\linewidth]{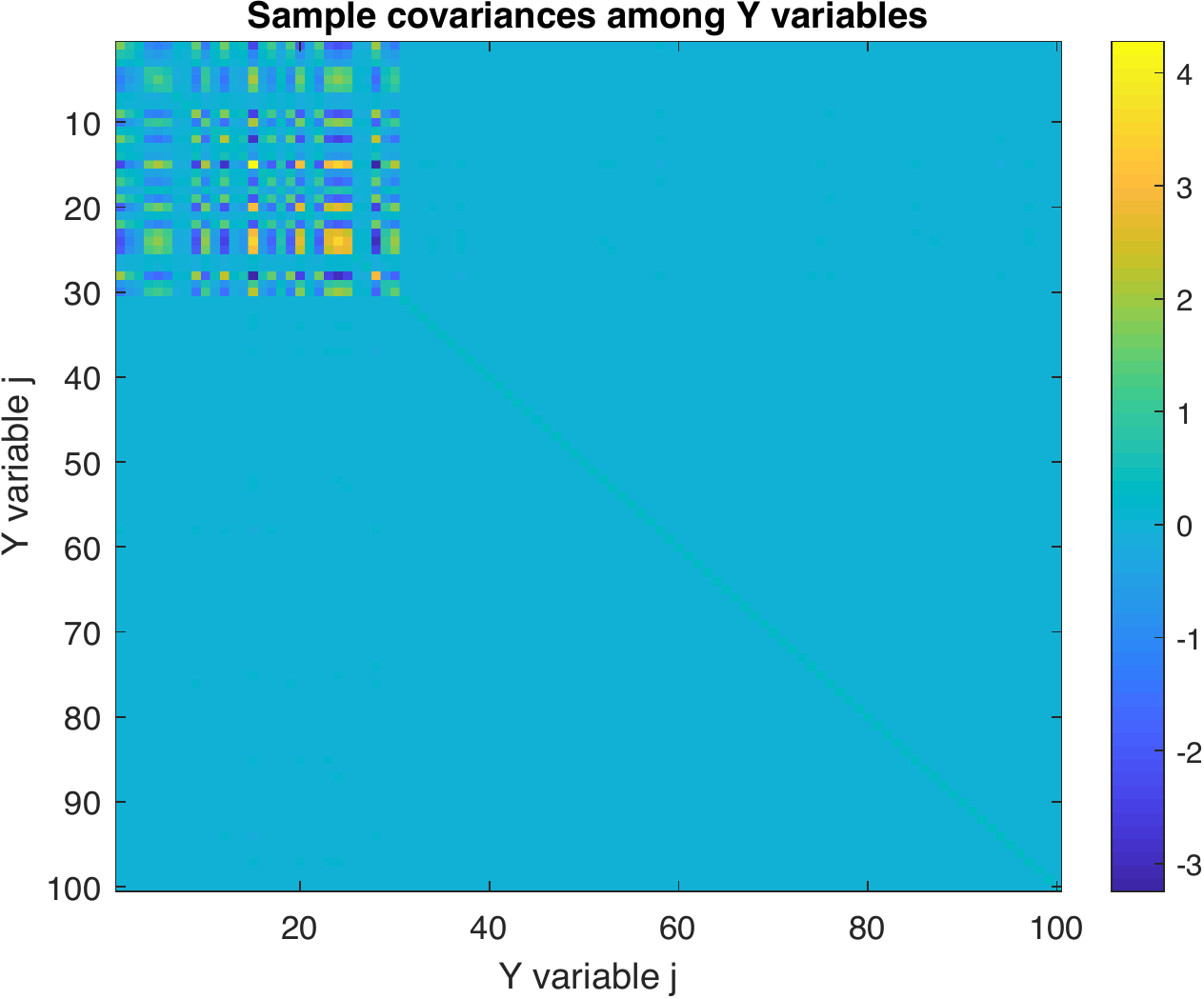}
\hspace{1cm}
\includegraphics[width=0.4\linewidth]{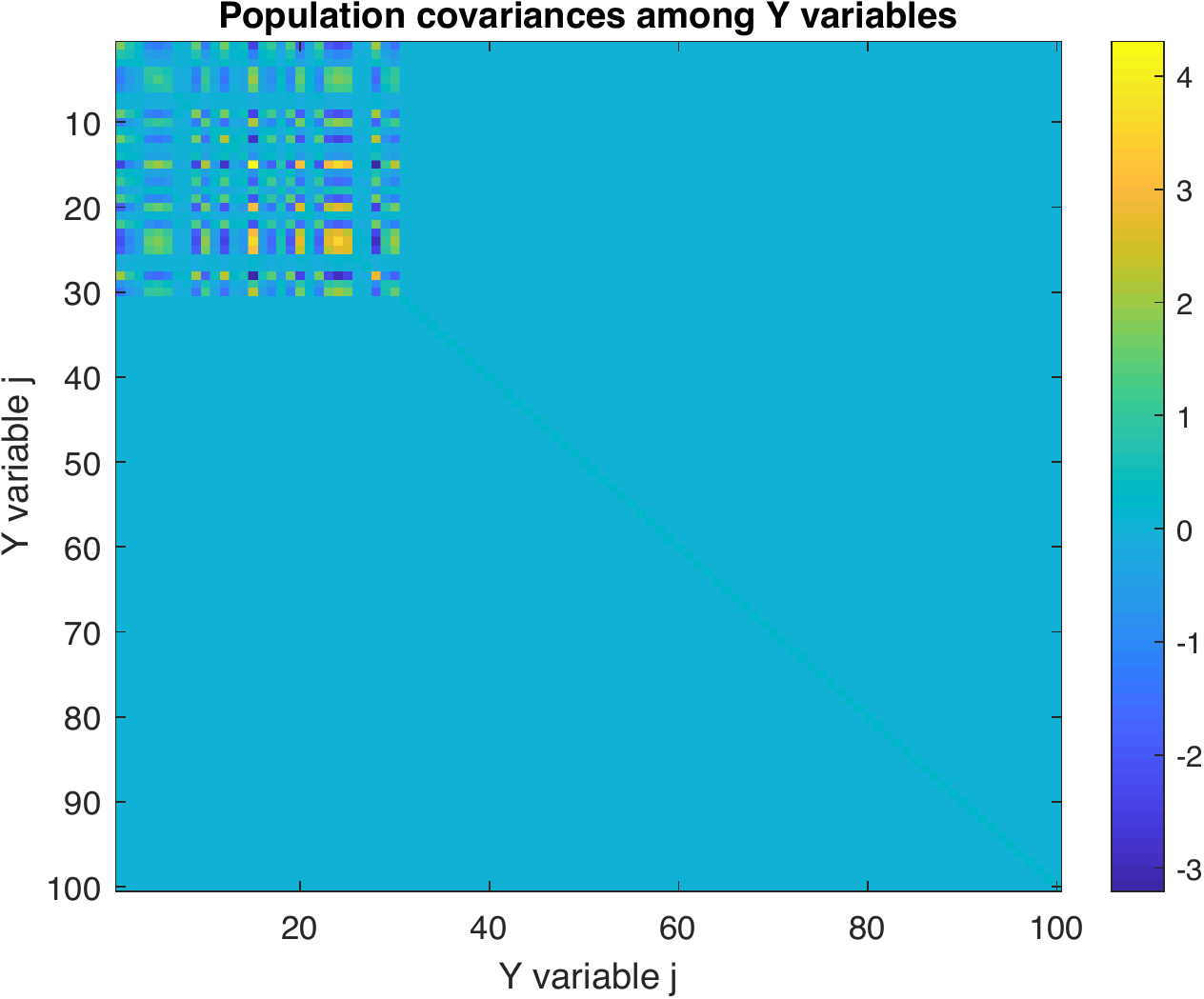}
\caption{Experimental setup 1: Heatmaps showing the sample (left) and population (right) \lsc{cross-covariances between $X$ and $Y$ variables (top), auto-covariances within $X$ variables (middle), and auto-covariances within $Y$ variables (bottom).}}
\label{fig:uncorrelated_variables_cov_XY}
\end{center}
\end{figure}

\subsubsection{Experimental setup 2: grouped variables}
\label{subsubsec_in_supp:grouped_variables}
The population cross- and auto-covariance matrices among \rev{random vectors $\mat{x}$ and $\mat{y}$} are
\begin{equation*}
    \expvof{\mat{x}} = \mat{0}, \quad \expvof{\mat{y}} = \mat{0}
\end{equation*}
\begin{align}
    \mat{\Sigma}_{\mat{xx}} &= \expvof{\mat{x} \mat{x}^\trans} =
  \begin{bmatrix}
    \mat{\Sigma}_1 &  &  &  &  & \\
 & \ddots &  &  &  &  \\
 &  & \mat{\Sigma}_R & &  &  \\
 &  &  & \mat{\Sigma}_{R+1} &  &  \\
 &  &  &  & \ddots  & \\
 &  &  &  &  & \mat{\Sigma}_G \\
  \end{bmatrix} \label{equ:grouped_variables_population_cov_matrix_X} \\
    \mat{\Sigma}_{\mat{yy}} &= \expvof{\mat{y} \mat{y}^\trans} = c^2 \mat{d} \mat{d}^\trans + \sigma^2 \mat{I}_q \label{equ:grouped_variables_population_cov_matrix_Y} \\
    \mat{\Sigma}_{\mat{xy}} &= \expvof{\mat{x} \mat{y}^\trans} = \mat{\Sigma}_{\mat{xx}} \mat{c} \mat{d}^\trans \label{equ:grouped_variables_population_cov_matrix_XY}
\end{align}
where $\mat{\Sigma}_g = (\sigma_{gij}) \in \real^{p_g \times p_g}$, with $\sigma_{gii}=1$ and $\sigma_{gij} \rho_{gi} \rho_{gj}$ for any $i\neq j$ and $g=1,2,\dots,G$, and $c^2 \coloneqq \expvof{z^2} = \mat{c}^\trans \mat{\Sigma}_{\mat{xx}} \mat{c}$. Here $R$ is the number of relevant/informative groups.

\begin{table}[htbp]
  \centering
  \caption{Group sizes of variables in $\mat{x}$}
  \label{tab:group_structure_X_variables}
  \resizebox{0.95\columnwidth}{!}{ % If your table exceeds the column or page width, use this command to reduce it slightly
    \begin{tabular}{rcccccccccccccccccccc}
\hline
Group ID	 & G1	 & G2	 & G3	 & G4	 & G5	 & G6	 & G7	 & G8	 & G9	 & G10	 & G11	 & G12	 & G13	 & G14	 & G15	 & G16	 & G17	 & G18	 & G19	 & G20	 \\
\hline
Group size	 & 89	 & 112	 & 92	 & 88	 & 88	 & 99	 & 130	 & 103	 & 94	 & 91	 & 99	 & 91	 & 90	 & 112	 & 96	 & 100	 & 96	 & 91	 & 103	 & 100	 \\
%Group corr  ($\rho_g$)	 & 0.93	 & -0.84	 & 0.86	 & -0.85	 & 0.82	 & -0.85	 & 0.96	 & -0.84	 & 0.97	 & -0.82	 & 0.85	 & -0.87	 & 0.90	 & -0.95	 & 0.85	 & -0.87	 & 0.90	 & -0.88	 & 0.91	 & -0.98	 \\
    \hline
    \end{tabular}
    }

\end{table}

\begin{figure}[tb]
\begin{center}
\includegraphics[width=0.4\linewidth]{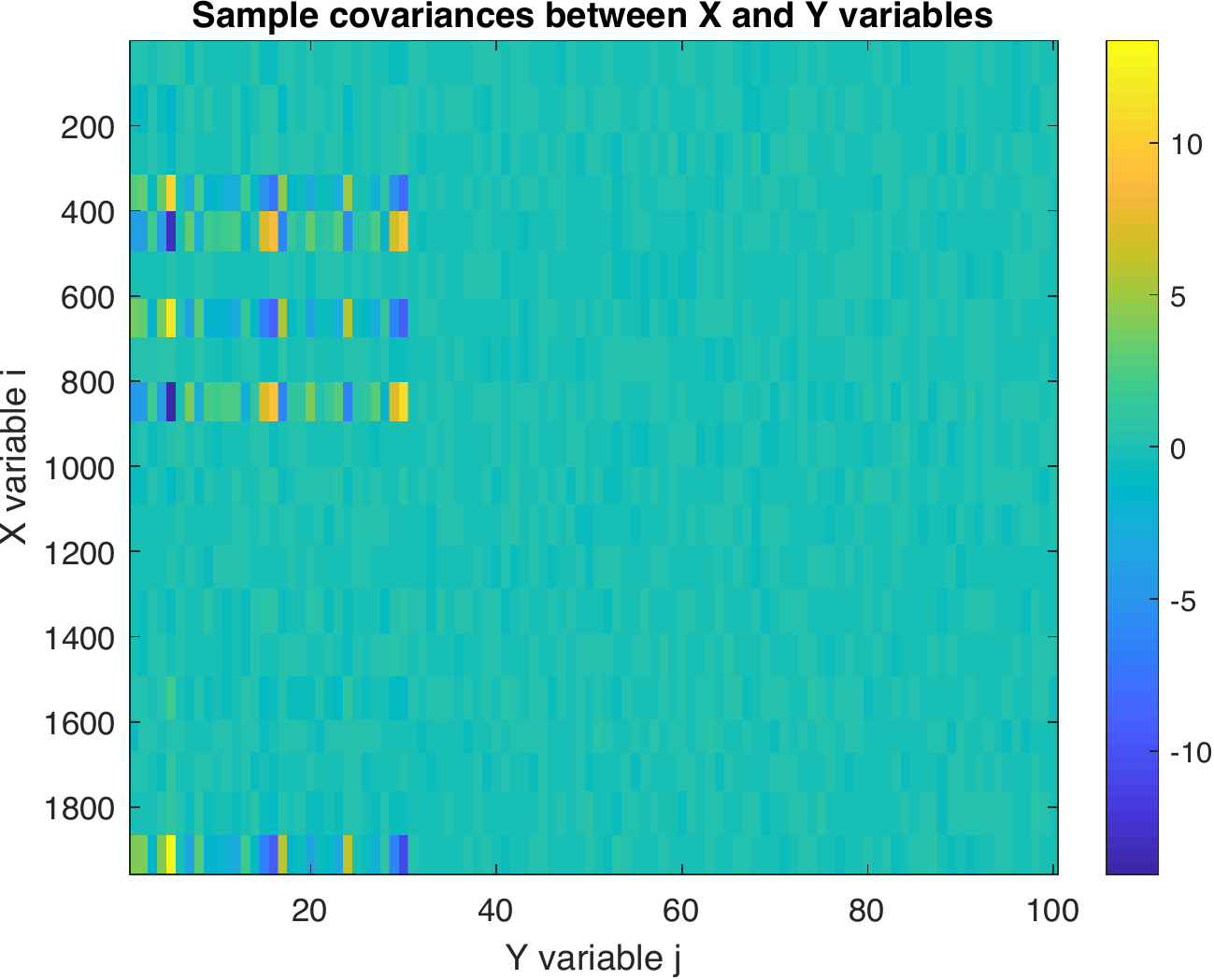}
\hspace{1cm}
\includegraphics[width=0.4\linewidth]{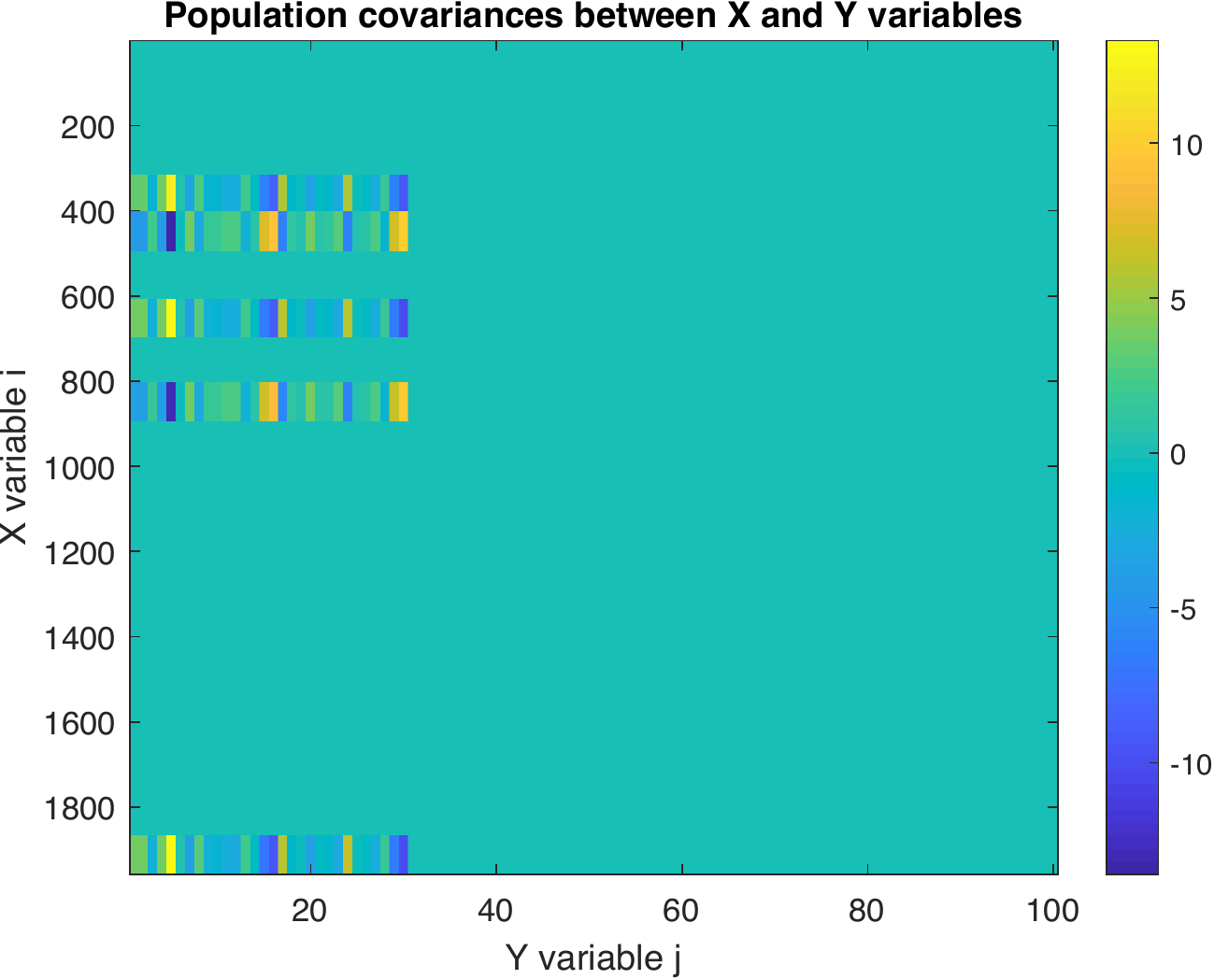} \\
\vspace{0.5cm}
\includegraphics[width=0.4\linewidth]{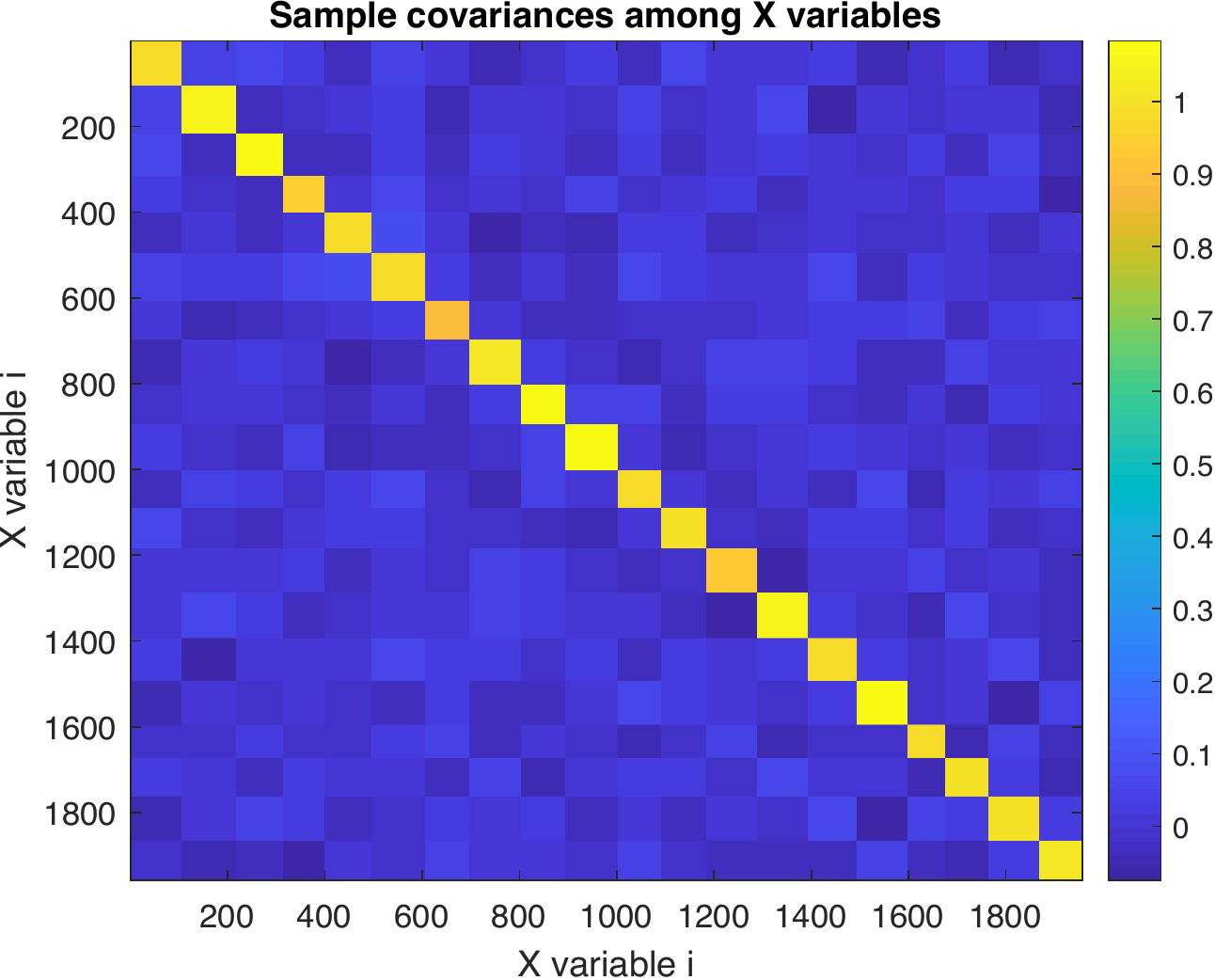}
\hspace{1cm}
\includegraphics[width=0.4\linewidth]{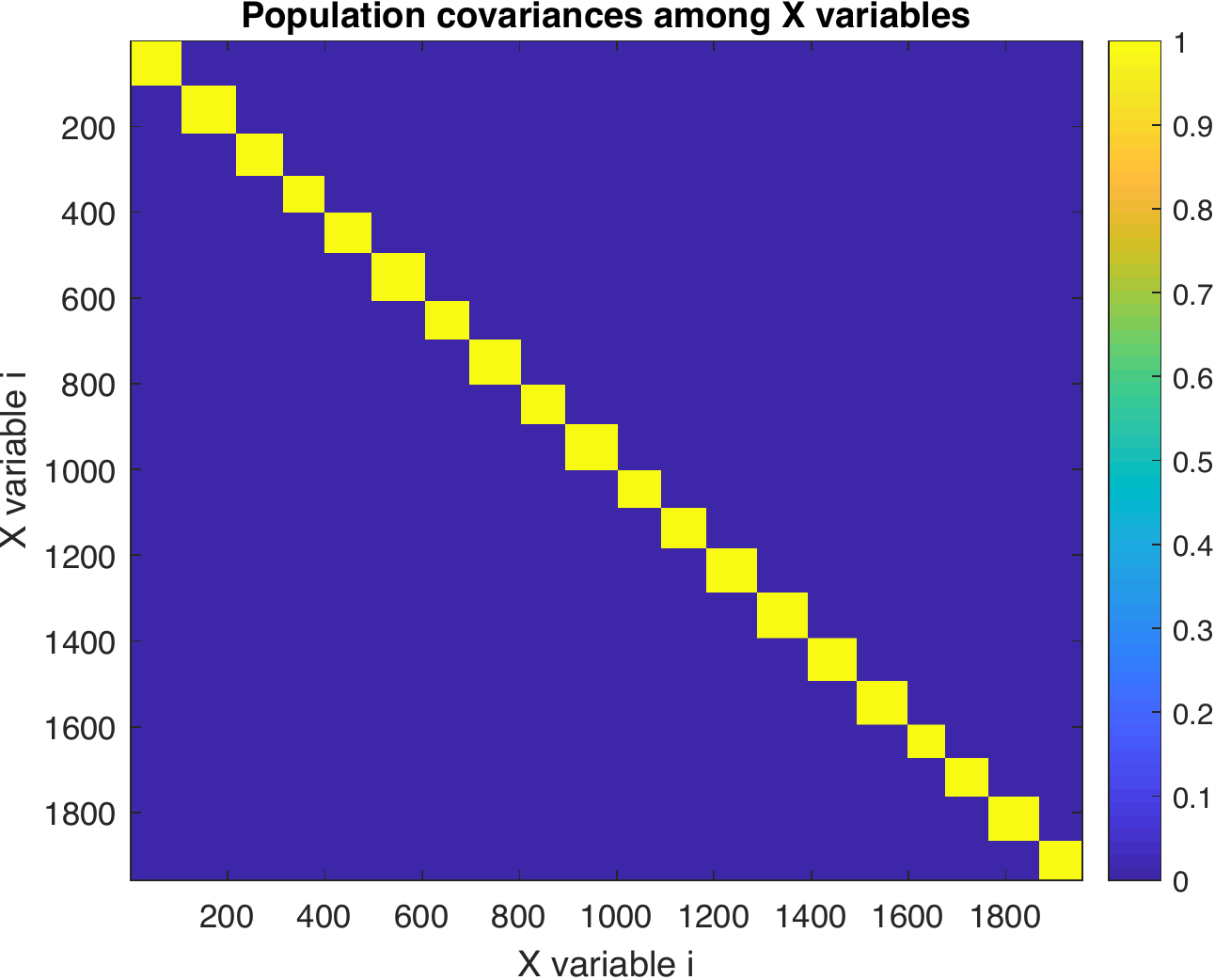} \\
\vspace{0.5cm}
\includegraphics[width=0.4\linewidth]{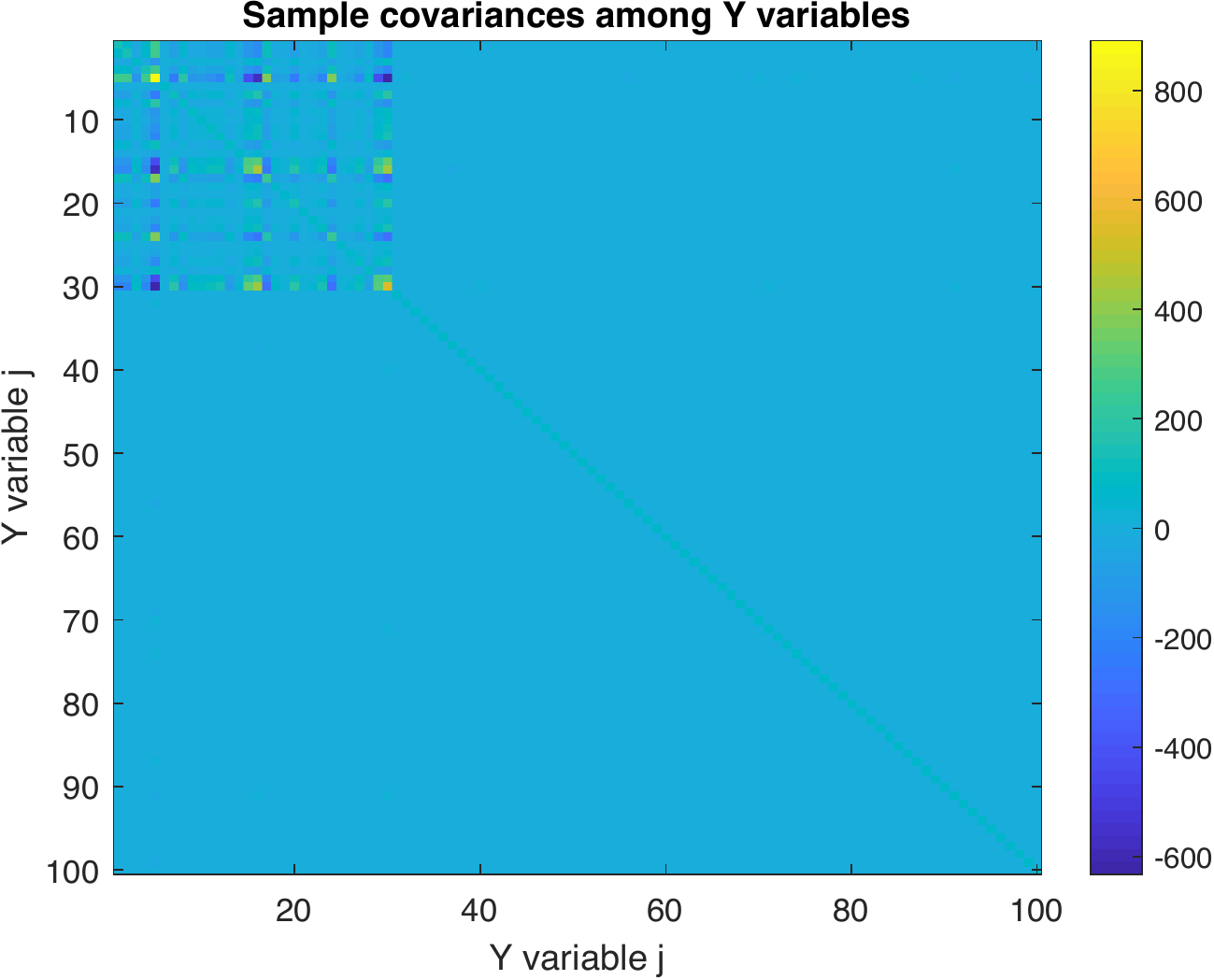}
\hspace{1cm}
\includegraphics[width=0.4\linewidth]{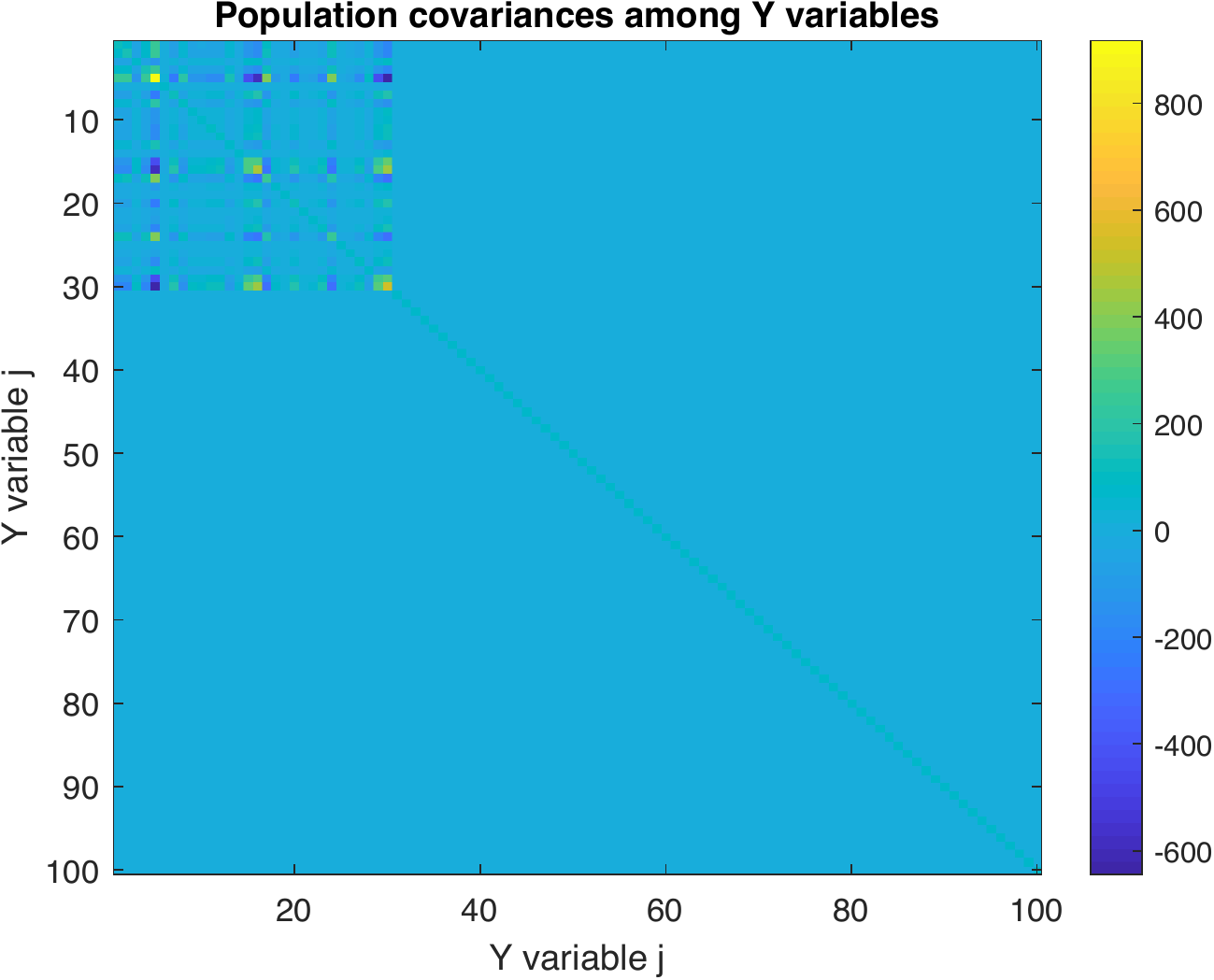}
\caption{Experimental setup 2: Heatmaps showing the sample (left) and population (right) \lsc{cross-covariances between $X$ and $Y$ variables (top), auto-covariances within $X$ variables (middle), and auto-covariances within $Y$ variables (bottom).}}
\label{fig:grouped_variables_cov}
\end{center}
\end{figure}

\clearpage

%%%%%%%%%%%%%%%%%%%%%%%
\subsection{Hyperparameter tuning and performance estimation}
\label{subsec_in_supp:Tuning_parameter_selection+generalization_performance_estimation}
%%%%%%%%%%%%%%%%%%%%%%%

To select the regularization parameters $\left(c_1,c_2\right)$ and estimate the generalization performance, we partition the data into training (50\%, $n_s$ samples), validation (25\%, $n_v$ samples), and testing (25\%, $n_t=n-n_s-n_v$ samples) data sets:
\begin{equation*}
\begin{bmatrix}
\mat{X} & \mat{Y}
\end{bmatrix}
=
\begin{bmatrix}
\mat{X}_{\rm train} & \mat{Y}_{\rm train} \\
\mat{X}_{\rm val} & \mat{Y}_{\rm val} \\
\mat{X}_{\rm test} & \mat{Y}_{\rm test}
\end{bmatrix}
 \in \real^{(n_s+n_v+n_t) \times (p+q)}
\end{equation*}
The training and validation data are used to tune the regularization parameters $\left(c_1,c_2\right)$, and the test data is used to estimate the performance.

To select the regularization parameters $\left(c_1,c_2\right)$, we fit the (simplified) SCCA model on the training data using each candidate value of $\left(c_1,c_2\right)$ as the regularization parameters, where $c_1$ and $c_2$ are chosen from a sequence of values
equally spaced on the log scale: %$c_1 \in 2^{\floor{\log_2 c_{1,{\rm min}}}:\ceil{\log_2 c_{1,{\rm max}}}}$, $c_2 \in 2^{\floor{\log_2 c_{2,{\rm min}}}:\ceil{\log_2 c_{2,{\rm max}}}}$.
$c_1 \in 2.\caret{\left(\floor{\log_2 c_{1,{\rm min}}}:\ceil{\log_2 c_{1,{\rm max}}}\right)}$, $c_2 \in \in 2.\caret{\left(\floor{\log_2 c_{2,{\rm min}}}:\ceil{\log_2 c_{2,{\rm max}}}\right)}$.
Here, $c_{\ell,{\rm min}}$ and $c_{\ell,{\rm max}}$, $\ell=1,2$, are the minimum and maximum value of $c_\ell$ which will be calculated for the standard and simplified SCCA models in Section \ref{subsubsec_in_supp:range_c1c2}. %in Supplementary Materials \ref{subsec_in_supp:Tuning_parameter_selection+generalization_performance_estimation}.

Denote the solution of the model fitted with $\left(c_1,c_2\right)$ as $\left(\mat{\hat{u}}_{\rm train}\left(c_1,c_2\right), \mat{\hat{v}}_{\rm train}\left(c_1,c_2\right)\right)$. For the standard SCCA model, the optimal $\left(c_1,c_2\right)$ are chosen as
%\begin{equation}\label{equ:TPS_criterion_standard_SCCA}
%    \left(c_1^{\rm opt},c_2^{\rm opt}\right) = \argmax_{c_1,c_2} \; \Corrof{\mat{X}_{\rm val} \mat{\hat{u}}_{\rm train}, \mat{Y}_{\rm val} \mat{\hat{v}}_{\rm train}}
%\end{equation}
\begin{align}\label{equ:TPS_criterion_standard_SCCA}
    \left(c_1^{\rm opt},c_2^{\rm opt}\right)
    &= \argmax_{c_1,c_2} \; \Corrof{\mat{X}_{\rm val} \mat{\hat{u}}_{\rm train}, \mat{Y}_{\rm val} \mat{\hat{v}}_{\rm train}} \\
    &= \argmax_{c_1,c_2} \; \frac{\inp{\mat{X}_{\rm val}\mat{\hat{u}}_{\rm train}}{\mat{Y}_{\rm val} \mat{\hat{v}}_{\rm train}}}{\twonorm{\mat{X}_{\rm val}\mat{\hat{u}}_{\rm train}}\twonorm{\mat{Y}_{\rm val}\mat{\hat{v}}_{\rm train}}}
\end{align}

For the simplified SCCA model, the optimal $\left(c_1,c_2\right)$ are chosen as\footnote{The reason the sample covariance matrix has $n_t$ in the denominator rather than $n_t-1$ is that we assume that population mean of $\mat{0}$ is known.}
\begin{align}\label{equ:TPS_criterion_simplified_SCCA}
    \left(c_1^{\rm opt},c_2^{\rm opt}\right)
    &= \argmax_{c_1,c_2} \; \Covof{\mat{X}_{\rm val} \mat{\hat{u}}_{\rm train}/\twonorm{\mat{\hat{u}}_{\rm train}}, \mat{Y}_{\rm val} \mat{\hat{v}}_{\rm train}/\twonorm{\mat{\hat{v}}_{\rm train}}} \\
    &= \argmax_{c_1,c_2} \; \frac{1}{n_t} \frac{\inp{\mat{X}_{\rm val}\mat{\hat{u}}_{\rm train}}{\mat{Y}_{\rm val} \mat{\hat{v}}_{\rm train}}}{\twonorm{\mat{\hat{u}}_{\rm train}}\twonorm{\mat{\hat{v}}_{\rm train}}} %\frac{\inp{\mat{X}_{\rm val}\mat{\hat{u}}_{\rm train}}{\mat{Y}_{\rm val} \mat{\hat{v}}_{\rm train}}}{\twonorm{\mat{\hat{u}}_{\rm train}}\twonorm{\mat{\hat{v}}_{\rm train}}}
\end{align}

Then, we refit the SCCA model with $\left(c_1^{\rm opt},c_2^{\rm opt}\right)$ on all training data (combined training and validation data) to get the solution $\left(\mat{\hat{u}}_{\rm trainval}, \mat{\hat{v}}_{\rm trainval}\right)$. The canonical covariance and correlation on the test data are reported as the generalization performance:
\begin{align}\label{equ:generalization_performance_metric_test_cov}
    \Covof{\mat{X}_{\rm test} \mat{\hat{u}}_{\rm trainval}, \mat{Y}_{\rm test} \mat{\hat{v}}_{\rm trainval}} = \frac{\inp{\mat{X}_{\rm test}\mat{\hat{u}}_{\rm trainval}}{\mat{Y}_{\rm test} \mat{\hat{v}}_{\rm trainval}}}{\twonorm{\mat{\hat{u}}_{\rm trainval}}\twonorm{\mat{\hat{v}}_{\rm trainval}}}
\end{align}
\begin{align}\label{equ:generalization_performance_metric_test_corr}
    \Corrof{\mat{X}_{\rm test} \mat{\hat{u}}_{\rm trainval}, \mat{Y}_{\rm test} \mat{\hat{v}}_{\rm trainval}} = \frac{\inp{\mat{X}_{\rm test}\mat{\hat{u}}_{\rm trainval}}{\mat{Y}_{\rm test} \mat{\hat{v}}_{\rm trainval}}}{\twonorm{\mat{X}_{\rm test}\mat{\hat{u}}_{\rm trainval}}\twonorm{\mat{Y}_{\rm test}\mat{\hat{v}}_{\rm trainval}}}
\end{align}

\subsubsection{Effective range of $c_1$ and $c_2$}
\label{subsubsec_in_supp:range_c1c2}

To determine the range for the parameters $\left(c_1, c_2\right)$ for the standard SCCA model \eqref{equ:SCCA_model}, we replace its L2 inequality constraints with the L2 equality constraints:
\begin{equation}\label{equ:SCCA_model_L2_equality_constraints}
\begin{aligned}
& \underset{\mat{u},\mat{v}}{\text{maximize}} & & \mat{u}^\trans \mat{X}^\trans \mat{Y} \mat{v} \\
& \text{subject to}                           & & \mat{u}^\trans \mat{X}^\trans \mat{X} \mat{u} = 1, \normof{\mat{u}}{1} \leq c_1 \\
&                                             & & \mat{v}^\trans \mat{Y}^\trans \mat{Y} \mat{v} = 1, \normof{\mat{v}}{1} \leq c_2 \\
\end{aligned}
\end{equation}

\lsc{We note that for valid L1 regularization,  the L1 inequality constraints needs to be active (i.e., satisfied as equalities) at the optimal solution.} This implies that
\begin{align}
c_1 \geq \underset{\mat{u}}{\text{minimize}} \; \normof{\mat{u}}{1}  \quad  \text{subject to } \twonorm{\mat{X} \mat{u}}^2 = 1 \label{equ:standard_SCCA_feasible_region_u_L2_equality_c1_lower_bound} \\
c_2 \geq \underset{\mat{v}}{\text{minimize}} \; \normof{\mat{v}}{1}  \quad  \text{subject to } \twonorm{\mat{Y} \mat{v}}^2 = 1 \label{equ:standard_SCCA_feasible_region_v_L2_equality_c2_lower_bound}
\end{align}
and
\begin{align}
c_1 \leq \underset{\mat{u}}{\text{maximize}} \; \normof{\mat{u}}{1}  \quad  \text{subject to } \twonorm{\mat{X} \mat{u}}^2 = 1 \label{equ:standard_SCCA_feasible_region_u_L2_equality_c1_upper_bound} \\
c_2 \leq \underset{\mat{v}}{\text{maximize}} \; \normof{\mat{v}}{1}  \quad  \text{subject to } \twonorm{\mat{Y} \mat{v}}^2 = 1 \label{equ:standard_SCCA_feasible_region_v_L2_equality_c2_upper_bound}
\end{align}
Simple analysis shows that \lsc{a sufficient condition for \eqref{equ:standard_SCCA_feasible_region_u_L2_equality_c1_lower_bound}-\eqref{equ:standard_SCCA_feasible_region_v_L2_equality_c2_lower_bound} to hold} is
\begin{align}
c_1 \geq \maxof{\frac{1}{\rev{\sigma_{\rm max}}\left(\mat{X}\right)}, \frac{1}{\sqrt{\sum_{\ell=1}^n \max_{1 \leq i \leq p} x_{\ell i}^2}}} \eqqcolon c_{1,{\rm min}}  \label{equ:standard_SCCA_c1_lower_bound} \\
c_2 \geq \maxof{\frac{1}{\rev{\sigma_{\rm max}}\left(\mat{Y}\right)}, \frac{1}{\sqrt{\sum_{\ell=1}^n \max_{1 \leq j \leq q} y_{\ell j}^2}}} \eqqcolon c_{2,{\rm min}} \label{equ:standard_SCCA_c2_lower_bound}
\end{align}

Note however, that the objective in \eqref{equ:standard_SCCA_feasible_region_u_L2_equality_c1_upper_bound} (resp., \eqref{equ:standard_SCCA_feasible_region_v_L2_equality_c2_upper_bound}) is unbounded above when $n<p$ (resp., $n <q$), and thus it can not be used to find an effective maximum of $c_1$ (resp., $c_2$). To find an effective maximum value of $c_1$ and $c_2$, we solve problem \eqref{equ:SCCA_model_L2_equality_constraints} in the absence of L1 constraints instead:
\begin{equation}\label{equ:SCCA_model_L2_equality_constraints_wo_L1_constraints}
\begin{aligned}
& \underset{\mat{u},\mat{v}}{\text{maximize}} & & \mat{u}^\trans \mat{X}^\trans \mat{Y} \mat{v} \\
& \text{subject to}                           & & \mat{u}^\trans \mat{X}^\trans \mat{X} \mat{u} = 1 \\
&                                             & & \mat{v}^\trans \mat{Y}^\trans \mat{Y} \mat{v} = 1 \\
\end{aligned}
\end{equation}
Denote the optimal solution of problem \ref{equ:SCCA_model_L2_equality_constraints_wo_L1_constraints} as $\left(\mat{u}^*,\mat{v}^*\right)$. We set $c_{1,{\rm max}}=\normof{\mat{u}^*}{1}$ and $c_{2,{\rm max}}=\normof{\mat{v}^*}{1}$.

It can be shown that $\mat{u}^* = \left(\mat{X}^\trans \mat{X}\right)^{-1/2} \mat{u}_1 ,\mat{v}^* = \left(\mat{Y}^\trans \mat{Y}\right)^{-1/2} \mat{v}_1$, where $\mat{u}_1$ and $\mat{v}_1$ are respectively the left and right singular vectors of $\left(\mat{X}^\trans \mat{X}\right)^{-1/2} \mat{X}^\trans \mat{Y} \left(\mat{Y}^\trans \mat{Y}\right)^{-1/2}$ associated with the largest singular value. If $\mat{X}^\trans \mat{X}$ is singular, we can use $\mat{X}^\trans \mat{X} + \epsilon \mat{I}_p$ to approximate it; likewise for $\mat{Y}^\trans \mat{Y}$.

In a similar line of reasoning, to determine the range for the parameters $\left(c_1, c_2\right)$ for the simplified SCCA model \eqref{equ:SCCA_model}, consider
\begin{equation}\label{equ:simplified_SCCA_model_L2_equality_constraints}
\begin{aligned}
& \underset{\mat{u},\mat{v}}{\text{maximize}} & & \mat{u}^\trans \mat{X}^\trans \mat{Y} \mat{v} \\
& \text{subject to}                           & & \twonorm{\mat{u}}^2 = 1, \normof{\mat{u}}{1} \leq c_1 \\
&                                             & & \twonorm{\mat{v}}^2 = 1, \normof{\mat{v}}{1} \leq c_2 \\
\end{aligned}
\end{equation}

\lsc{We note that an effective value of $c_1$ and $c_2$ should be such that the L1 inequality constraints are active (i.e., satisfied as equalities) at the optimal solution.}

To this end, it should satisfy
\begin{align}
c_1 \geq \underset{\mat{u}}{\text{minimize}} \; \normof{\mat{u}}{1}  \quad  \text{subject to } \twonorm{\mat{u}}^2 = 1 \label{equ:simplified_SCCA_feasible_region_u_L2_equality_c1_lower_bound} \\
c_2 \geq \underset{\mat{v}}{\text{minimize}} \; \normof{\mat{v}}{1}  \quad  \text{subject to } \twonorm{\mat{v}}^2 = 1 \label{equ:simplified_SCCA_feasible_region_v_L2_equality_c2_lower_bound}
\end{align}
and
\begin{align}
c_1 \leq \underset{\mat{u}}{\text{maximize}} \; \normof{\mat{u}}{1}  \quad  \text{subject to } \twonorm{\mat{u}}^2 = 1 \label{equ:simplified_SCCA_feasible_region_u_L2_equality_c1_upper_bound} \\
c_2 \leq \underset{\mat{v}}{\text{maximize}} \; \normof{\mat{v}}{1}  \quad  \text{subject to } \twonorm{\mat{v}}^2 = 1 \label{equ:simplified_SCCA_feasible_region_v_L2_equality_c2_upper_bound}
\end{align}
From \eqref{equ:simplified_SCCA_feasible_region_u_L2_equality_c1_lower_bound}-\eqref{equ:simplified_SCCA_feasible_region_v_L2_equality_c2_upper_bound}, it follows that
\begin{align}
c_{1,{\rm min}} \coloneqq 1 \leq c_1 \leq \sqrt{p} \\
c_{2,{\rm min}} \coloneqq 1 \leq c_2 \leq \sqrt{q}
\end{align}

The upper bounds $\sqrt{p}$ for $c_1$ and $\sqrt{q}$ for $c_2$ are too relaxed. To find a tighter bound, we solve problem \eqref{equ:simplified_SCCA_model_L2_equality_constraints} in the absence of L1 constraints instead:
\begin{equation}\label{equ:simplified_SCCA_model_L2_equality_constraints_wo_L1_constraints}
\begin{aligned}
& \underset{\mat{u},\mat{v}}{\text{maximize}} & & \mat{u}^\trans \mat{X}^\trans \mat{Y} \mat{v} \\
& \text{subject to}                           & & \twonorm{\mat{u}}^2 = 1 \\
&                                             & & \twonorm{\mat{v}}^2 = 1 \\
\end{aligned}
\end{equation}
The optimal solution is $\mat{u}^* = \mat{u}_1, \mat{v}^* = \mat{v}_1$, where $\mat{u}_1$ and $\mat{v}_1$ are respectively the left and right singular vectors of $\mat{X}^\trans \mat{Y}$ associated with the largest singular value. We set $c_{1,{\rm max}}=\normof{\mat{u}^*}{1}$ and $c_{2,{\rm max}}=\normof{\mat{v}^*}{1}$.

%%%%%%%%%%%%%%%%%%%%%%%
\subsection{Variable selection performance}
\label{subsec_in_supp:cariable_selection_performance}
%%%%%%%%%%%%%%%%%%%%%%%

The balanced accuracy (bACC) and Matthews correlation coefficient (MCC) are defined as
\begin{equation}\label{def:bACC}
    \mathrm{bACC} = \frac{1}{2} \left(\frac{\mathrm{TP}}{\mathrm{TP} + \mathrm{FN}} + \frac{\mathrm{TN}}{\mathrm{TN} + \mathrm{FP}}\right),
\end{equation}
\begin{equation}\label{def:MCC}
    \mathrm{MCC} =\frac{\mathrm{TP} \times \mathrm{TN} -\mathrm{FP} \times \mathrm{FN}}{\sqrt{(\mathrm{TP} +\mathrm{FP} )(\mathrm{TP} +\mathrm{FN} )(\mathrm{TN} +\mathrm{FP} )(\mathrm{TN} +\mathrm{FN})}},
\end{equation}
where TP, TN, FP, and FN denote the numbers of true positives, true negatives, false positives, and false negatives, respectively. The bACC and MCC are overall measures of variable selection accuracy, and a larger score indicates a better variable selection performance.The relative absolute error (RAE), which for the selection of $X$ variables is defined as
\begin{equation}\label{def:RAE}
    \mathrm{RAE} = \frac{\normof{\mat{\hat{u}}-\mat{u}^*}{1}}{\normof{\mat{u}^*}{1}}
\end{equation}
where $\mat{u}^*$ and $\mat{\hat{u}}$ denote the true and estimated canonical vector, respectively.
Our variable selection performance on the synthetic data is shown in
Table~\ref{tab:X_variable_selection_performance_synthetic_data} and
Table~\ref{tab:Y_variable_selection_performance_synthetic_data}.

\begin{table*}[tb]
\caption{The $X$ variable selection performance of the SCCA and simplified SCCA on whole training data.}
\label{tab:X_variable_selection_performance_synthetic_data}
\vskip 0.15in
\begin{center}
\begin{small}
\begin{sc}
\begin{tabular}{lcccccccr}
\toprule
  Model     & Recall & Precision & F1 score & ACC & bACC & MCC & PR AUC & RAE \\
\midrule
          \multicolumn{9}{c}{Experimental setup 1} \\
SCCA		 &    0.820	 &    0.276	 &    0.413	 &    0.767	 &    0.791	 &    0.382	 &    0.759	 &    0.384	 \\
Simp SCCA	 &    0.430	 &    0.306	 &    0.358	 &    0.846	 &    0.661	 &    0.278	 &    0.429	 &    1.044	 \\
%to be deleted &    0.304	 &    0.441	 &    0.962	 \\
          \multicolumn{9}{c}{Experimental setup 2} \\
SCCA		 &    1.000	 &    0.233	 &    0.378	 &    0.233	 &    0.500	 &      NaN	 &    0.998	 &    0.320	 \\
Simp SCCA	 &    1.000	 &    0.602	 &    0.751	 &    0.846	 &    0.899	 &    0.693	 &    0.800	 &    0.110	 \\
\bottomrule
\end{tabular}
\end{sc}
\end{small}
\end{center}
\vskip -0.1in
\end{table*}

\begin{table*}[tb]
\caption{The $Y$ variable selection performance of the SCCA and simplified SCCA on whole training data.}
\label{tab:Y_variable_selection_performance_synthetic_data}
\vskip 0.15in
\begin{center}
\begin{small}
\begin{sc}
\begin{tabular}{lcccccccr}
\toprule
  Model     & Recall & Precision & F1 score & ACC & bACC & MCC & PR AUC & RAE \\
\midrule
          \multicolumn{9}{c}{Experimental setup 1} \\
%SCCA		 &  (9.739, 1.244)	 &    1.000	 &    0.300	 &    0.462	 &    0.300	 &    0.500	 &      NaN	 &    0.880	 &    0.811	 \\
%Simp SCCA	 &  (8.348, 4.532)	 &    1.000	 &    0.638	 &    0.779	 &    0.830	 &    0.879	 &    0.695	 &    0.967	 &    0.036	 \\
SCCA		 &    1.000	 &    0.300	 &    0.462	 &    0.300	 &    0.500	 &      NaN	 &    0.896	 &    0.823	 \\
Simp SCCA	 &    1.000	 &    0.566	 &    0.723	 &    0.770	 &    0.836	 &    0.616	 &    0.957	 &    0.030	 \\
%Simp SCCA	 &  (10.379, 3.949)	 &    0.667	 &    1.000	 &    0.800	 &    0.900	 &    0.833	 &    0.764	 &    0.850	 &    0.197	 \\
          \multicolumn{9}{c}{Experimental setup 2} \\
%SCCA		 &  (2.376, 0.188)	 &    1.000	 &    0.300	 &    0.462	 &    0.300	 &    0.500	 &      NaN	 &    0.904	 &    0.423	 \\
%Simp SCCA	 &  (23.495, 4.873)	 &    1.000	 &    0.556	 &    0.714	 &    0.760	 &    0.829	 &    0.604	 &    0.967	 &    0.033	 \\
SCCA		 &    1.000	 &    0.300	 &    0.462	 &    0.300	 &    0.500	 &      NaN	 &    0.844	 &    0.503	 \\
Simp SCCA	 &    0.967	 &    0.558	 &    0.707	 &    0.760	 &    0.819	 &    0.585	 &    0.948	 &    0.050	 \\
%Simp SCCA	 &  (21.354, 4.154)	 &    0.967	 &    0.558	 &    0.707	 &    0.760	 &    0.819	 &    0.585	 &    0.948	 &    0.050	 \\
\bottomrule
\end{tabular}
\end{sc}
\end{small}
\end{center}
\vskip -0.1in
\end{table*}

%\newpage
%\clearpage

%%%%%%%%%%%%%%%%%%%%%%%
\section{Supporting Information and additional results for imaging genetic data analysis in Section \eqref{subsec:experiment_real_data}}
\label{sec_in_supp:experiment_real_data}
%%%%%%%%%%%%%%%%%%%%%%%

\subsection{Subject characteristics}
\label{subsec_in_supp:subject_characteristics}

\begin{table*}[htbp]
  \centering
  \caption{Subject characteristics}
    \begin{tabular}{rccccc}
    \hline
          & HC    & SMC   &  EMCI  & LMCI   & AD \\
    \hline
Num				 &  183		 &  75		 &  218		 &  184		 &  97		 \\
Gender (M/F)	 &  89/94	 &  29/46	 &  113/105	 &  96/88	 &  54/43	 \\
Handedness (R/L)	 &  163/20	 &  65/10	 &  195/23	 &  165/19	 &  89/8	 \\
Age (mean$\pm$std)	 &  73.96$\pm$5.50	 &  71.77$\pm$5.76	 &  70.56$\pm$7.16	 &  71.89$\pm$7.92	 &  73.99$\pm$8.44	 \\
Edu (mean$\pm$std)	 &  16.44$\pm$2.67	 &  16.87$\pm$2.71	 &  15.95$\pm$2.64	 &  16.14$\pm$2.92	 &  15.60$\pm$2.61	 \\
    \hline
    \end{tabular}%
  \label{tab:demographics}%
\end{table*}

Participant characteristics of our real imaging genetics data from the Alzheimer's Disease Neuroimaging Initiative (ADNI) cohort is shown in Table~\ref{tab:demographics}.

\subsection{Correlation structure of the real imaging genetic data}
\label{subsec_in_supp:corr_structure_real_data}

\begin{figure*}[htb]
\centering
    \includegraphics[width=0.33\textwidth]{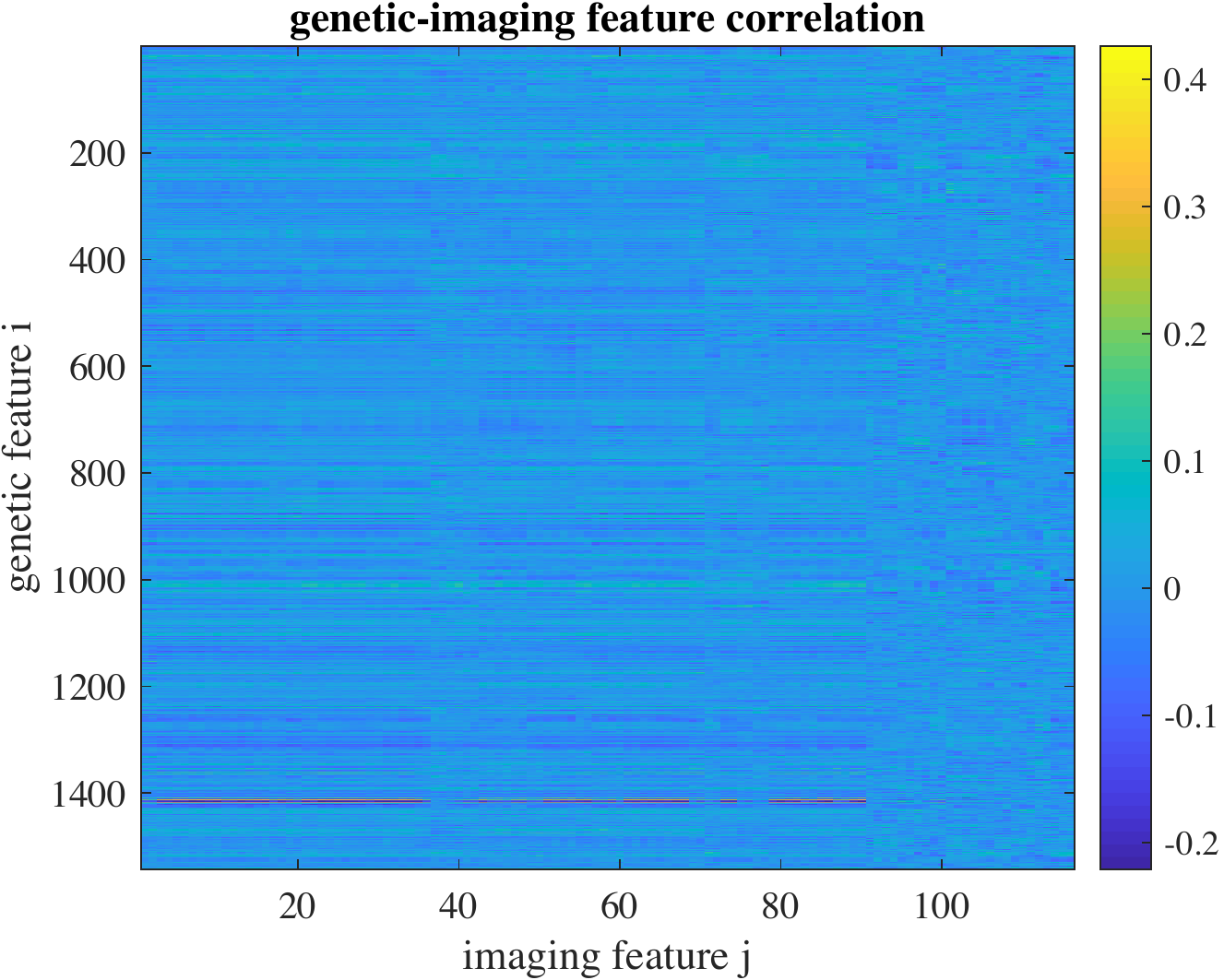}
    \includegraphics[width=0.33\linewidth]{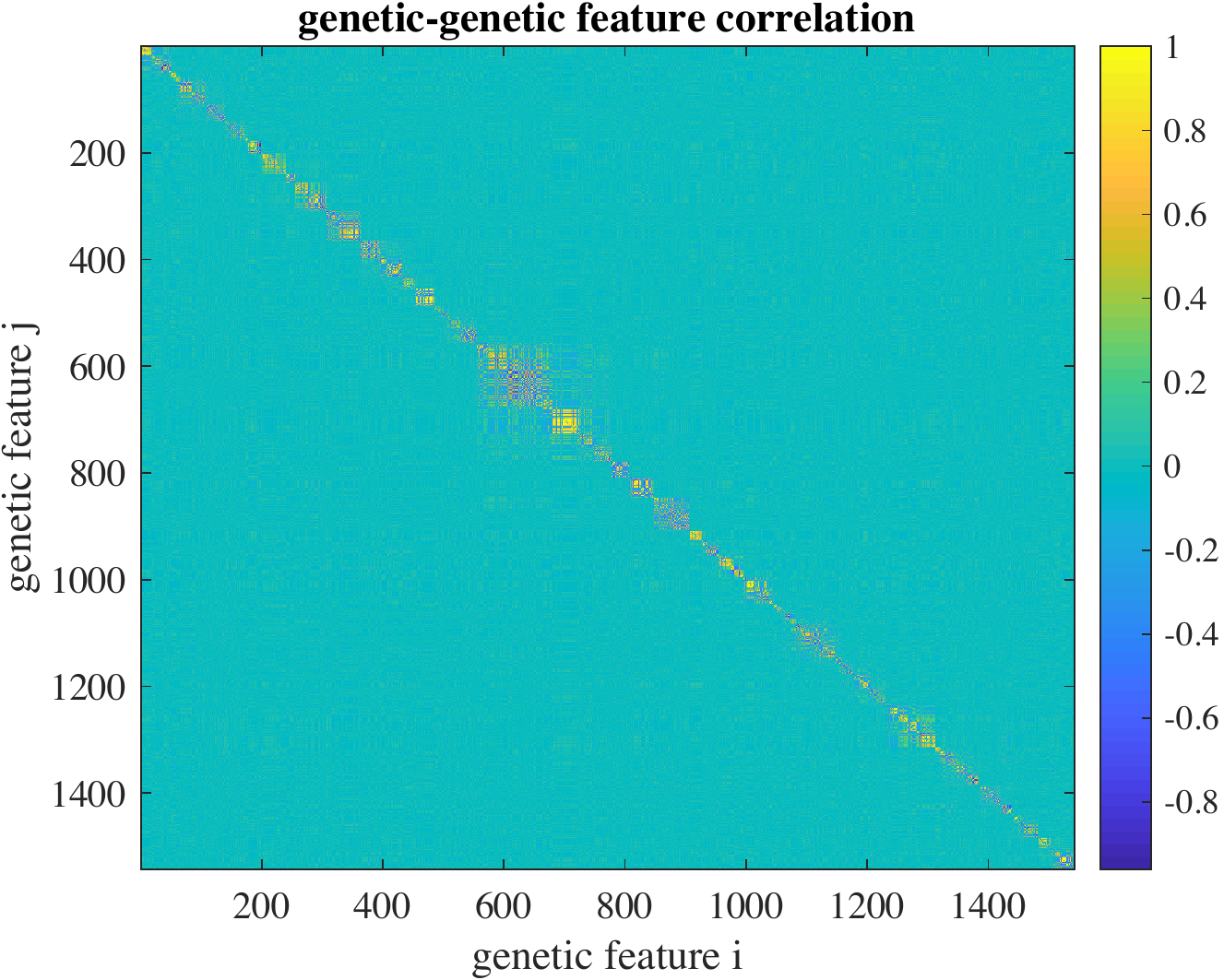}
    \includegraphics[width=0.33\linewidth]{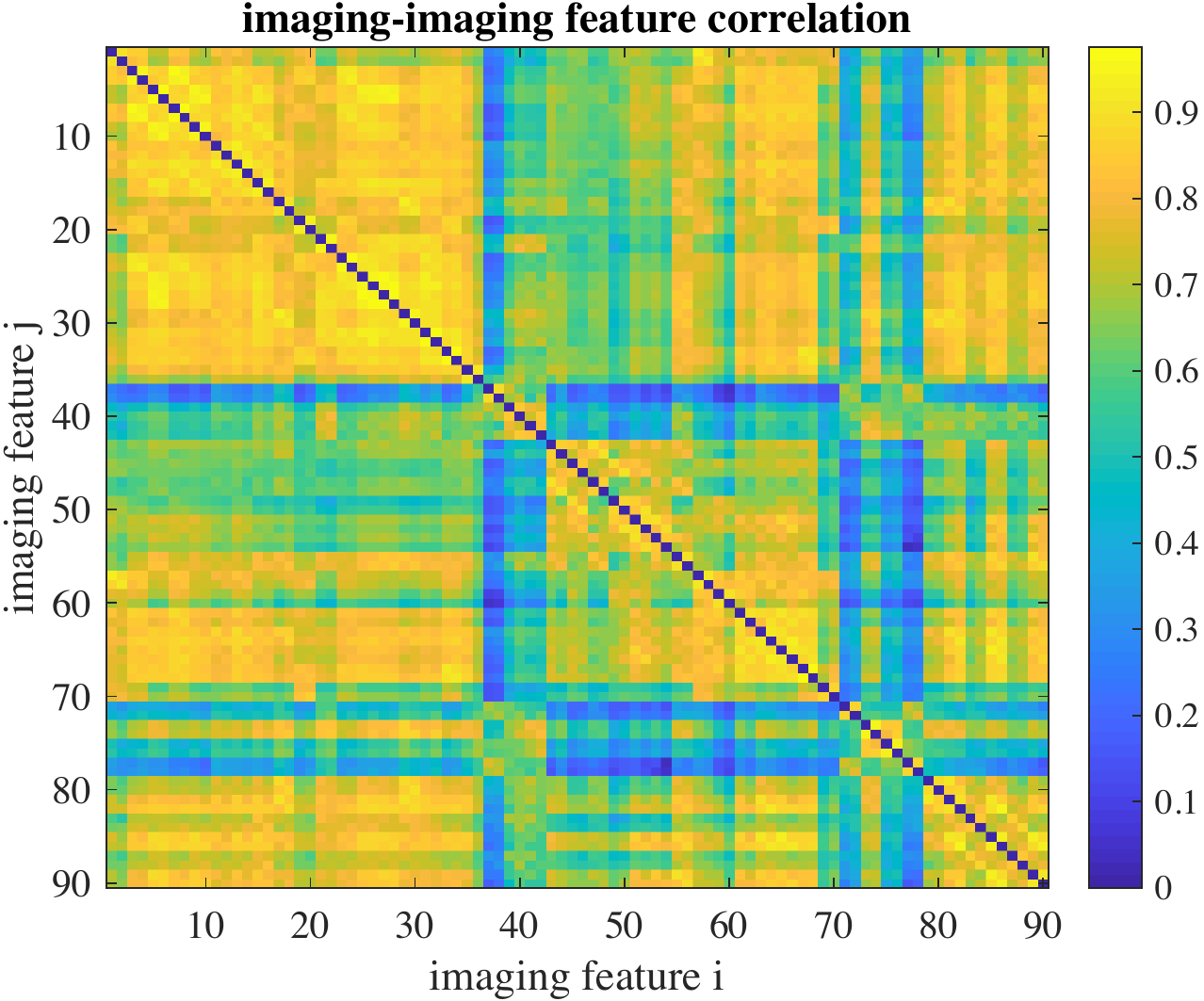}
  \caption{Heatmaps showing the pairwise sample Pearson correlation coefficients between genetic and imaging features (left), within genetic features (middle), and within imaging features (right).}
    \label{fig:corr_SNP_AV45}
\end{figure*}

Correlation structure of the real ADNI imaging genetics data used in this study is shown in Fig.~\ref{fig:corr_SNP_AV45}.

\subsection{Hyperparameter tuning and generalization performance estimation}
\label{Suppsubsec:real_data_Tuning_parameter_selection+generalization_performance_estimation}

We employ the nested cross-validation method which is an extension of the procedure described in Section \ref{subsec_in_supp:Tuning_parameter_selection+generalization_performance_estimation}.
We first \lsc{randomly divide each category of subjects into five roughly equal-sized subgroups and combine the data from each category to form five outer folds}.

We used the first fold for testing and the remaining folds for training/validating the model. Test set data are put aside. The following steps were carried out with the training+validation data:
\begin{description}

  \item[(1)] We employ the stratified cross-validation (CV) method to choose $\left(c_1,c_2\right)$. The samples/subjects from each category are randomly divided into five roughly equal-sized subgroups and then combined to form five folds $\set{I} = \cup_{k=1}^5 \set{I}_k$. \lsc{Denote $\mat{X}_{\set{I}_k}^{\rm trainval}$ and $\mat{Y}_{\set{I}_k}^{\rm trainval}$, $k=1,2,\dots,5$, as the submatrices formed by the rows of $\mat{X}^{\rm trainval}$ and $\mat{Y}^{\rm trainval}$ indexed by $\set{I}_k$, respectively}.

  \item[(2)] The SCCA model is fitted to the normalized $\left(\mat{X}_{\set{I}\setminus\set{I}_1}^{\rm trainval},\mat{Y}_{\set{I}\setminus\set{I}_1}^{\rm trainval}\right)$ to obtain the solution as $\left(\mat{\hat{u}}_{-1}^{\rm trainval}, \mat{\hat{v}}_{-1}^{\rm trainval}\right)$. Then, the performance on the validation data is recorded as $\opof{Corr}{\mat{X}_{\set{I}_1}^{\rm trainval} \mat{\hat{u}}_{-1}^{\rm trainval}, \mat{Y}_{\set{I}_1}^{\rm trainval} \mat{\hat{v}}_{-1}^{\rm trainval}}$. This process is repeated five times with each fold of samples/subjects used once as the validation set.

  \item[(3)] The cross-validation criterion to select the regularization parameters is defined as
\begin{equation}\label{equ:K_fold_CV_TPS_criterion_standard_SCCA}
    \left(c_1^{\rm opt},c_2^{\rm opt}\right) = \argmax_{c_1,c_2} \; \frac{1}{5} \sum_{k=1}^5 \opof{Corr}{\mat{X}_{\set{I}_k}^{\rm trainval} \mat{\hat{u}}_{-k}^{\rm trainval}, \mat{Y}_{\set{I}_k}^{\rm trainval} \mat{\hat{v}}_{-k}^{\rm trainval}}
\end{equation}
where $\opof{Corr}{\cdot,\cdot}$ is the correlation function and $\left(\mat{\hat{u}}_{-k}^{\rm trainval},\mat{\hat{v}}_{-k}^{\rm trainval}\right)$ are the estimates of $\left(\mat{u},\mat{v}\right)$ by the standard SCCA on the training+validation data $\left(\mat{X}_{\set{I}\setminus\set{I}_k}^{\rm trainval},\mat{Y}_{\set{I}\setminus\set{I}_k}^{\rm trainval}\right)$ with $\left(c_1,c_2\right)$ as regularization parameters.

  \item[(4)] The SCCA model was then fit to the entire training set at $\left(c_1^{\rm opt},c_2^{\rm opt}\right)$ to estimate the canonical weights $\left(\mat{\hat{u}}^{\rm opt}, \mat{\hat{v}}^{\rm opt}\right)$.
\end{description}

The canonical correlation on the test data $\Corrof{\mat{X}_{\rm test} \mat{\hat{u}}^{\rm opt}, \mat{Y}_{\rm test} \mat{\hat{v}}^{\rm opt}}$ is reported as the generalization performance. For the simplified SCCA, the canonical covariance is used as the metric to measure the performance and to tune the regularization parameters.

This process is repeated five times with each outer fold of samples/subjects used once as the testing set.

\subsection{Genetic and Imaging Marker Selection}
\label{subsec_in_supp:SNPs}

\begin{figure}[htb]
\begin{center}
\includegraphics[width=0.8\linewidth]{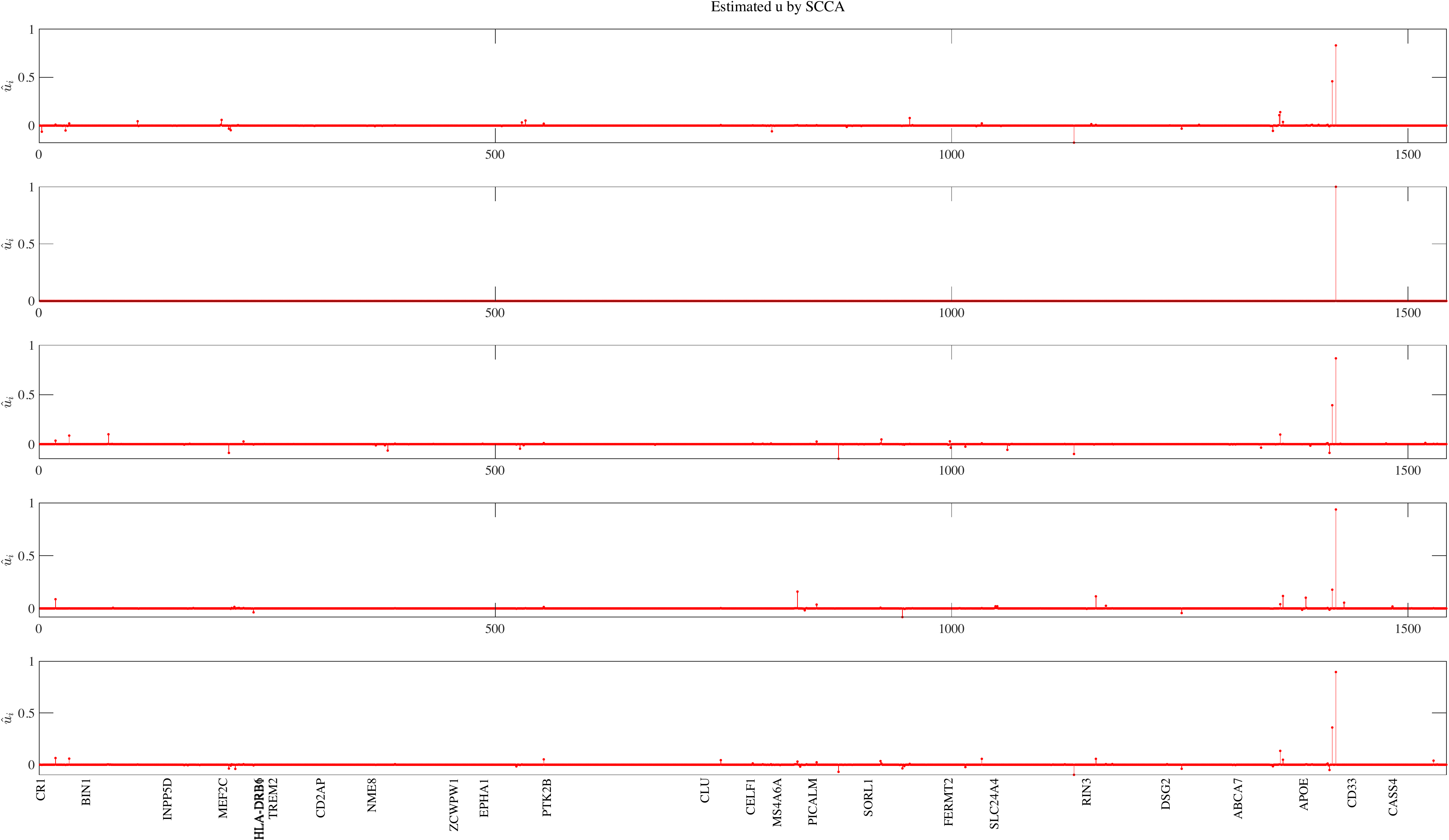} \\
\vspace{0.5cm}
\includegraphics[width=0.8\linewidth]{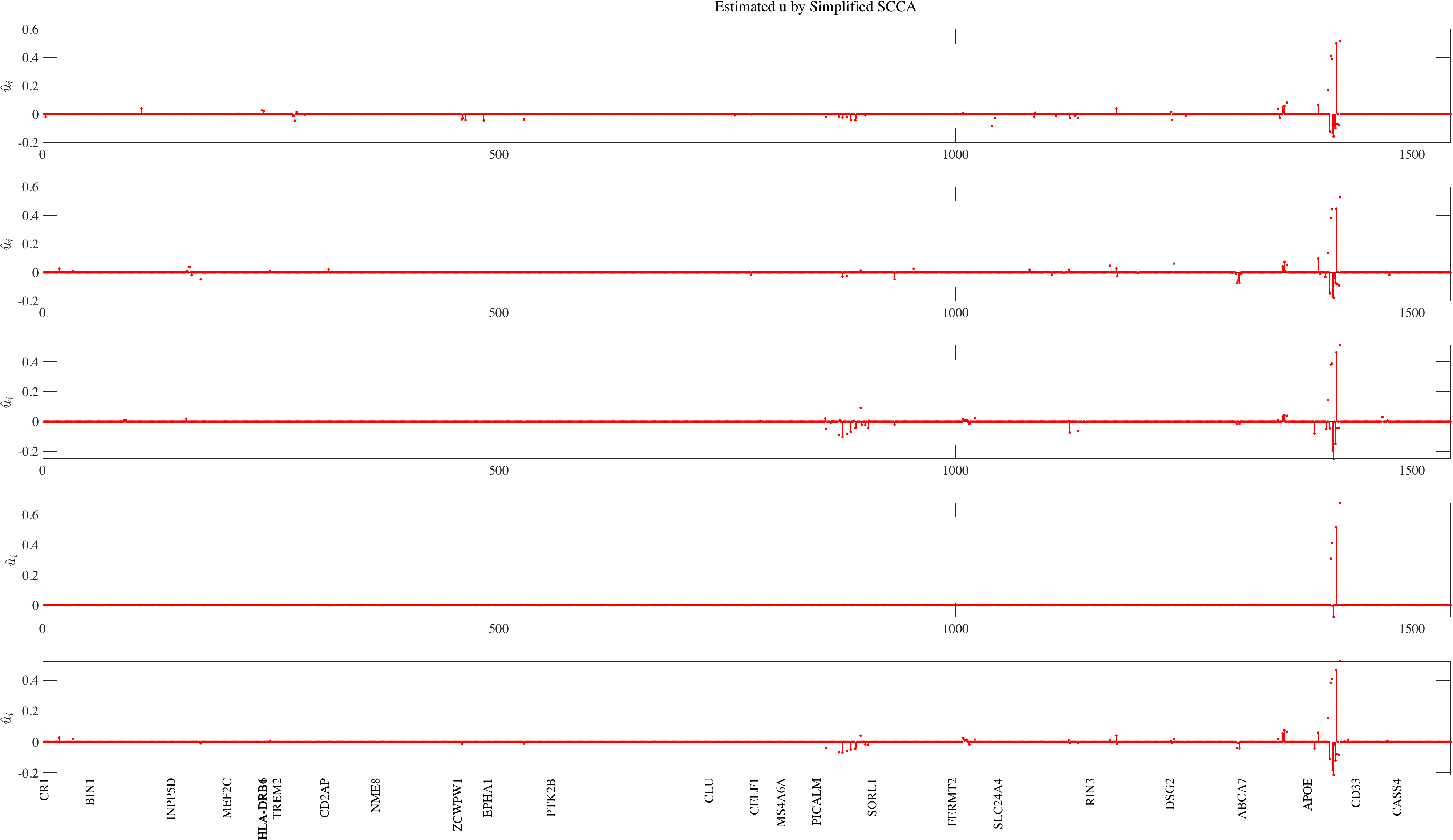}
\caption{Canonical genetic weights estimated by SCCA (top figure) and simplified SCCA (bottom figure). In each figure, the results on each of the four training folds (rows 1-4) and on the entire data (bottom row) are shown.}
\label{fig:genetic_weight_esti}
\end{center}
\end{figure}

\begin{figure}[htb]
\begin{center}
\includegraphics[width=0.8\linewidth]{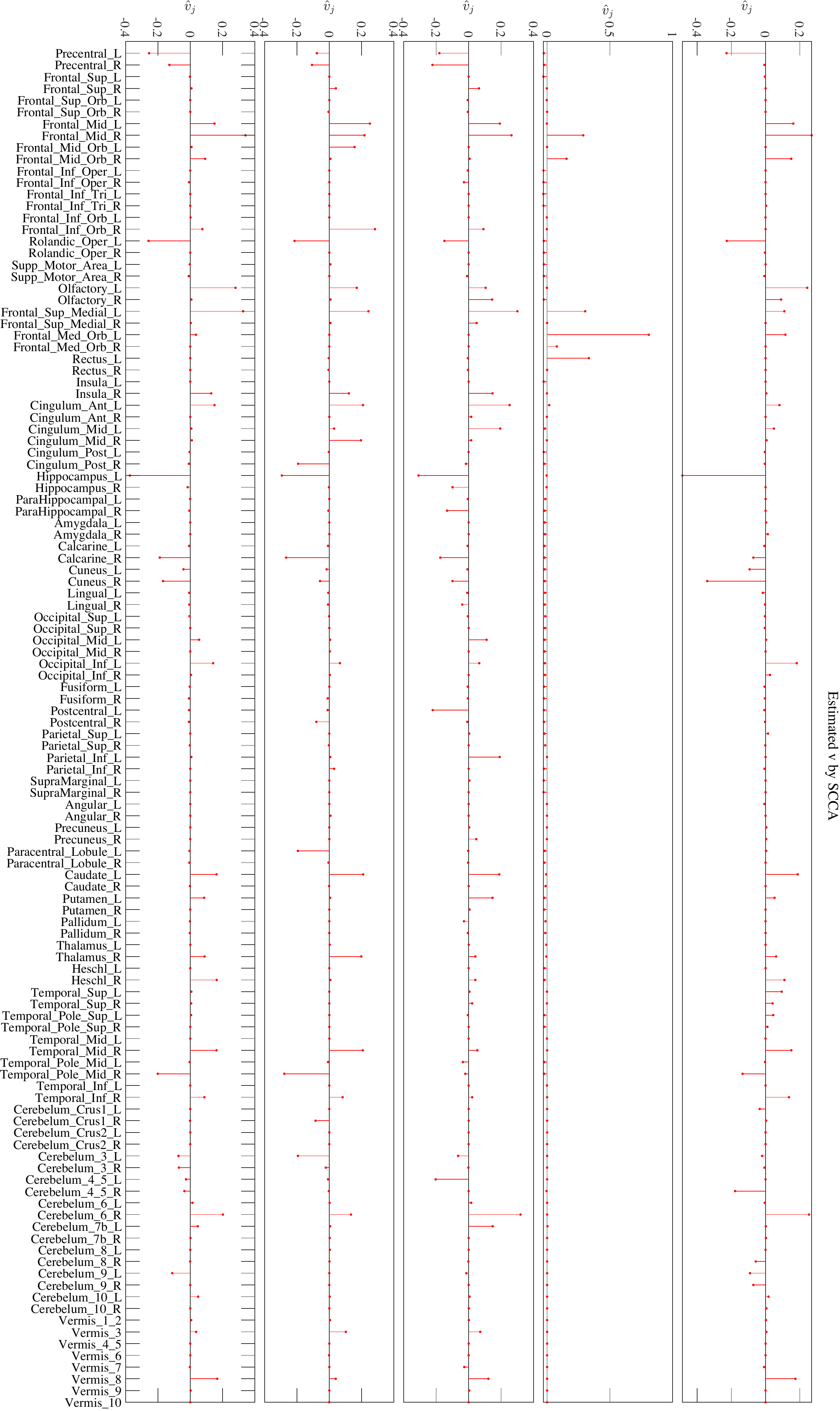}
\caption{Canonical imaging weights estimated by SCCA (top figure) and simplified SCCA (bottom figure). In each figure, the results on each of the four training folds (rows 1-4) and on the entire data (bottom row) are shown.}
\label{fig:imaging_weight_esti}
\end{center}
\end{figure}

\begin{figure}[htb]
\begin{center}
\includegraphics[width=0.8\linewidth]{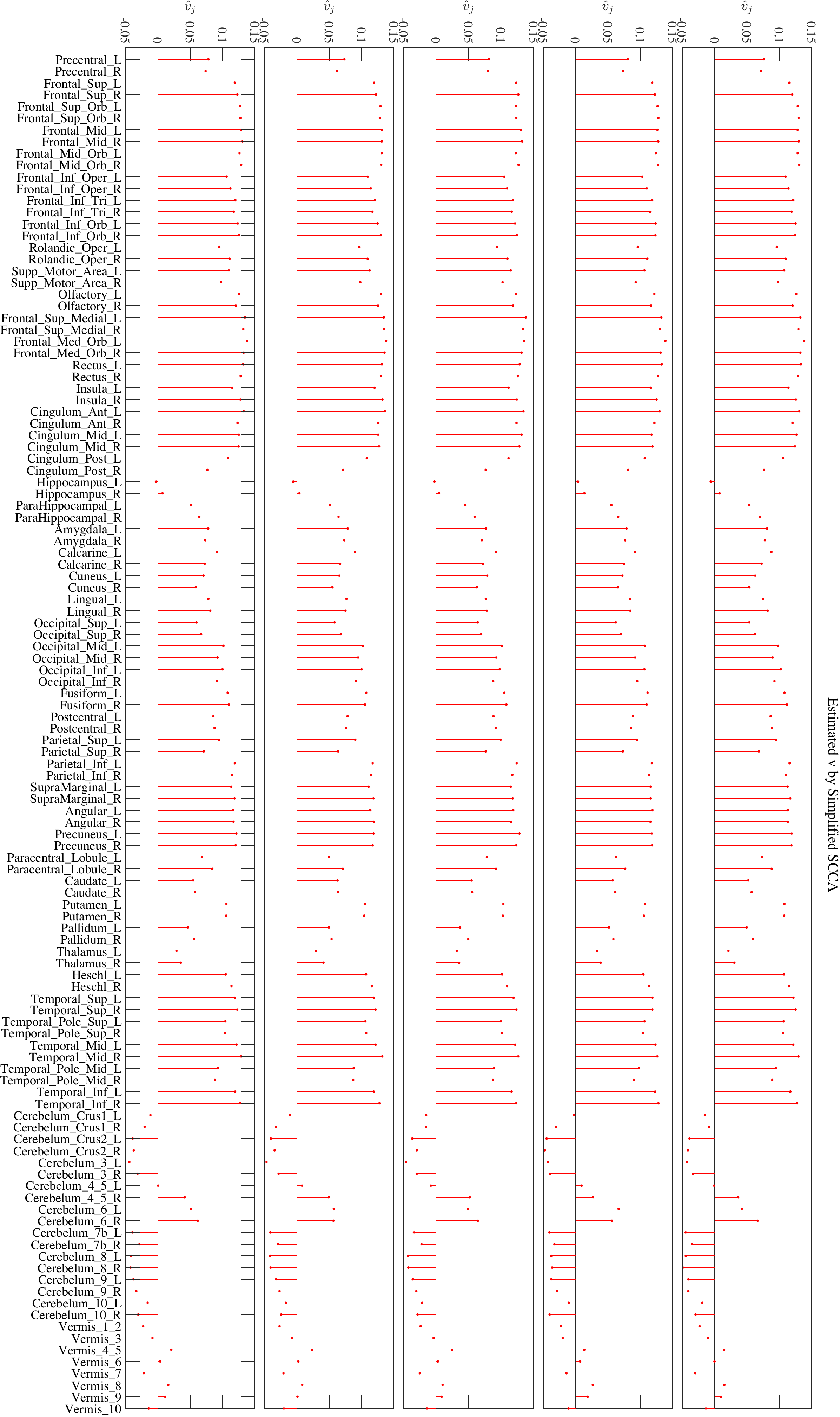}
\caption{Canonical imaging weights estimated by SCCA (top figure) and simplified SCCA (bottom figure). In each figure, the results on each of the four training folds (rows 1-4) and on the entire data (bottom row) are shown.}
\label{fig:imaging_weight_esti}
\end{center}
\end{figure}

\begin{center}
\footnotesize
\begin{longtable}{lccccccr}
\caption{Genetic features (ordered by absolute values of estimated canonical weights) selected by SCCA and simplified SCCA.}
\label{tab:snps_selected} \\
\toprule
\multicolumn{4}{c}{Standard SCCA} & \multicolumn{4}{c}{Simplified SCCA} \\
\cmidrule(lr){1-4}\cmidrule(lr){5-8}
       SNP   &    Closest gene   & $\hat{u}_i$ & p-value &        SNP    &    Closest gene   & $\hat{u}_i$ & p-value \\
\midrule
\endfirsthead

\multicolumn{8}{c}%
{{\bfseries \tablename\ \thetable{} -- continued from previous page}} \\
\toprule
\multicolumn{4}{c}{Standard SCCA} & \multicolumn{4}{c}{Simplified SCCA} \\
\cmidrule(lr){1-4}\cmidrule(lr){5-8}
       SNP   &    Closest gene   & $\hat{u}_i$ & p-value &        SNP    &    Closest gene   & $\hat{u}_i$ & p-value \\
\midrule
\endhead

\hline \multicolumn{8}{r}{{Continued on next page}} \\ \hline
\endfoot

\bottomrule \bottomrule
\endlastfoot

 rs4420638	 &    \textit{APOE}	 & 0.892	 & 8.50e-12	 &  rs4420638	 &    \textit{APOE}	 & 0.522	 & 8.50e-12  \\
  rs769449	 &    \textit{APOE}	 & 0.366	 & 1.60e-12	 &   rs769449	 &    \textit{APOE}	 & 0.466	 & 1.60e-12  \\
rs10404947	 &   \textit{ABCA7}	 & 0.140	 & 5.16e-02	 &   rs157582	 &    \textit{APOE}	 & 0.408	 & 2.37e-05  \\
rs12434016	 & \textit{SLC24A4}	 & -0.102	 & 9.11e-01	 &  rs2075650	 &    \textit{APOE}	 & 0.383	 & 4.46e-07  \\
rs17258982	 &     \textit{CR1}	 & 0.069	 & 5.83e-01	 &  rs1160985	 &    \textit{APOE}	 & -0.213	 & 7.29e-06  \\
  rs609903	 &  \textit{PICALM}	 & -0.065	 & 6.38e-01	 &  rs8106922	 &    \textit{APOE}	 & -0.183	 & 2.27e-03  \\
 rs7141622	 &    \textit{RIN3}	 & 0.058	 & 8.92e-01	 &     rs6859	 &    \textit{APOE}	 & 0.156	 & 9.20e-03  \\
 rs3818361	 &     \textit{CR1}	 & 0.056	 & 8.24e-03	 &   rs405509	 &    \textit{APOE}	 & -0.121	 & 4.20e-03  \\
  rs923892	 &   \textit{SORL1}	 & -0.052	 & 3.82e-01	 &   rs157580	 &    \textit{APOE}	 & -0.111	 & 1.70e-01  \\
 rs2949766	 &   \textit{EPHA1}	 & 0.051	 & 1.58e-01	 &   rs584007	 &    \textit{APOE}	 & -0.084	 & 3.85e-01  \\
rs17126012	 &  \textit{FERMT2}	 & 0.048	 & 4.25e-01	 &   rs439401	 &    \textit{APOE}	 & -0.078	 & 3.26e-01  \\
 rs3087554	 &     \textit{CLU}	 & 0.046	 & 4.43e-01	 & rs10404947	 &   \textit{ABCA7}	 & 0.076	 & 5.16e-02  \\
 rs1160985	 &    \textit{APOE}	 & -0.043	 & 7.29e-06	 &   rs609903	 &  \textit{PICALM}	 & -0.067	 & 6.38e-01  \\
    rs6843	 &   \textit{ABCA7}	 & 0.043	 & 1.61e-01	 &   rs637304	 &  \textit{PICALM}	 & -0.067	 & 3.06e-01  \\
 rs1422189	 &   \textit{MEF2C}	 & -0.042	 & 5.00e-02	 &     rs6843	 &   \textit{ABCA7}	 & 0.066	 & 1.61e-01  \\
 rs2304607	 &   \textit{MEF2C}	 & -0.040	 & 2.05e-01	 &   rs519825	 &    \textit{APOE}	 & 0.060	 & 5.48e-01  \\
rs17660414	 &    \textit{DSG2}	 & -0.038	 & 9.48e-01	 &   rs694011	 &  \textit{PICALM}	 & -0.059	 & 5.21e-01  \\
 rs6064401	 &   \textit{CASS4}	 & 0.035	 & 5.46e-01	 &  rs2074442	 &   \textit{ABCA7}	 & 0.057	 & 1.12e-01  \\
rs11230197	 &  \textit{MS4A6A}	 & 0.034	 & 3.81e-01	 &   rs757232	 &   \textit{ABCA7}	 & 0.053	 & 8.50e-02  \\
   rs93882	 &   \textit{SORL1}	 & 0.022	 & 4.83e-01	 &   rs561655	 &  \textit{PICALM}	 & -0.050	 & 6.54e-01  \\
rs12703526	 &   \textit{EPHA1}	 & -0.022	 & 9.19e-01	 &  rs1237999	 &  \textit{PICALM}	 & -0.043	 & 8.32e-01  \\
  rs611267	 &  \textit{MS4A6A}	 & -0.021	 & 1.19e-01	 & rs34374273	 &    \textit{APOE}	 & -0.041	 & 5.45e-02  \\
 rs7936092	 &  \textit{PICALM}	 & 0.021	 & 2.07e-01	 &  rs1667284	 &    \textit{DSG2}	 & -0.041	 & 6.87e-01  \\
  rs733430	 &   \textit{SORL1}	 & 0.020	 & 2.08e-01	 & rs10898436	 &  \textit{PICALM}	 & 0.040	 & 4.00e-01  \\
 rs2279796	 &   \textit{ABCA7}	 & -0.020	 & 3.68e-01	 & rs11608136	 &  \textit{PICALM}	 & -0.040	 & 7.11e-01  \\
 rs8008270	 &  \textit{FERMT2}	 & -0.013	 & 5.49e-02	 &  rs8013925	 &    \textit{RIN3}	 & 0.040	 & 5.30e-01  \\
 rs8013925	 &    \textit{RIN3}	 & 0.012	 & 5.30e-01	 &  rs1791161	 &    \textit{DSG2}	 & -0.040	 & 6.82e-01  \\
  rs157582	 &    \textit{APOE}	 & 0.011	 & 2.37e-05	 &   rs543293	 &  \textit{PICALM}	 & -0.029	 & 7.78e-01  \\
 rs1667284	 &    \textit{DSG2}	 & -0.009	 & 6.87e-01	 & rs17258982	 &     \textit{CR1}	 & 0.027	 & 5.83e-01  \\
 rs4752856	 &   \textit{CELF1}	 & -0.009	 & 8.21e-01	 &  rs7143400	 &  \textit{FERMT2}	 & 0.026	 & 8.64e-01  \\
  rs558788	 &  \textit{MS4A6A}	 & -0.007	 & 6.15e-01	 &  rs3851179	 &  \textit{PICALM}	 & -0.021	 & 8.13e-01  \\
rs11952384	 &   \textit{MEF2C}	 & -0.007	 & 5.46e-01	 &   rs405697	 &    \textit{APOE}	 & -0.020	 & 2.60e-01  \\
 rs1784927	 &   \textit{SORL1}	 & -0.006	 & 2.66e-01	 &  rs4147932	 &   \textit{ABCA7}	 & 0.017	 & 3.80e-01  \\
 rs4720262	 &    \textit{NME8}	 & 0.006	 & 8.96e-01	 &  rs3818361	 &     \textit{CR1}	 & 0.017	 & 8.24e-03  \\
 rs8106922	 &    \textit{APOE}	 & -0.006	 & 2.27e-03	 & rs12961029	 &    \textit{DSG2}	 & 0.017	 & 1.13e-01  \\
rs12709651	 &    \textit{DSG2}	 & 0.005	 & 8.95e-01	 &  rs8008270	 &  \textit{FERMT2}	 & -0.016	 & 5.49e-02  \\
  rs244749	 &   \textit{MEF2C}	 & 0.005	 & 1.60e-01	 &  rs7941541	 &  \textit{PICALM}	 & -0.016	 & 7.83e-01  \\
  rs753812	 &   \textit{CELF1}	 & 0.005	 & 5.95e-01	 &  rs7160582	 &  \textit{FERMT2}	 & 0.016	 & 8.25e-01  \\
 rs2075650	 &    \textit{APOE}	 & 0.005	 & 4.46e-07	 & rs17125944	 &  \textit{FERMT2}	 & 0.015	 & 4.57e-01  \\
 rs7584458	 &  \textit{INPP5D}	 & -0.005	 & 4.06e-01	 & rs16979595	 &    \textit{APOE}	 & 0.014	 & 5.87e-01  \\
 rs7569827	 &  \textit{INPP5D}	 & -0.005	 & 3.62e-01	 &  rs4904920	 & \textit{SLC24A4}	 & 0.014	 & 8.38e-01  \\
 rs2104239	 &    \textit{RIN3}	 & 0.004	 & 1.72e-02	 &  rs2357947	 &  \textit{FERMT2}	 & 0.014	 & 8.50e-01  \\
rs10742816	 &   \textit{CELF1}	 & 0.004	 & 5.34e-01	 & rs11157933	 &  \textit{FERMT2}	 & 0.014	 & 8.50e-01  \\
 rs4752839	 &   \textit{CELF1}	 & 0.004	 & 5.04e-01	 &  rs6572869	 &  \textit{FERMT2}	 & 0.014	 & 8.50e-01  \\
 rs4663337	 &  \textit{INPP5D}	 & -0.004	 & 3.97e-01	 &  rs2405442	 &  \textit{ZCWPW1}	 & -0.013	 & 2.75e-01  \\
  rs254778	 &   \textit{MEF2C}	 & 0.004	 & 8.06e-01	 & rs11623185	 &    \textit{RIN3}	 & -0.013	 & 5.53e-01  \\
 rs1117067	 &  \textit{MS4A6A}	 & 0.004	 & 4.21e-01	 &  rs2104239	 &    \textit{RIN3}	 & 0.011	 & 1.72e-02  \\
rs11230193	 &  \textit{MS4A6A}	 & 0.004	 & 4.79e-01	 &  rs6951852	 &   \textit{EPHA1}	 & -0.011	 & 1.96e-01  \\
 rs4939319	 &  \textit{MS4A6A}	 & 0.004	 & 4.79e-01	 &  rs7580869	 &  \textit{INPP5D}	 & -0.011	 & 1.28e-01  \\
 rs7929057	 &  \textit{MS4A6A}	 & 0.004	 & 4.79e-01	 & rs10134832	 & \textit{SLC24A4}	 & -0.009	 & 4.92e-01  \\
 rs1866236	 &    \textit{BIN1}	 & 0.003	 & 1.01e-01	 &  rs1026123	 &    \textit{DSG2}	 & -0.009	 & 5.14e-01  \\
rs11218325	 &   \textit{SORL1}	 & 0.003	 & 1.10e-01	 &  rs1667280	 &    \textit{DSG2}	 & -0.009	 & 5.14e-01  \\
 rs1791161	 &    \textit{DSG2}	 & -0.003	 & 6.82e-01	 & rs12434016	 & \textit{SLC24A4}	 & -0.008	 & 9.11e-01  \\
 rs1871045	 &    \textit{APOE}	 & 0.003	 & 9.67e-01	 &   rs273622	 &    \textit{CD33}	 & 0.007	 & 2.38e-01  \\
 rs6069767	 &   \textit{CASS4}	 & 0.003	 & 4.47e-01	 &   rs660895	 & \textit{HLA-DRB1}	 & 0.007	 & 6.42e-01  \\
 rs4662703	 &    \textit{BIN1}	 & 0.003	 & 5.21e-01	 & rs17729233	 &    \textit{DSG2}	 & -0.006	 & 7.17e-01  \\
  rs757232	 &   \textit{ABCA7}	 & 0.003	 & 8.50e-02	 & rs12709651	 &    \textit{DSG2}	 & 0.003	 & 8.95e-01  \\
    rs7026	 &    \textit{APOE}	 & 0.003	 & 9.86e-01	 & rs17660414	 &    \textit{DSG2}	 & -0.003	 & 9.48e-01  \\
rs12476339	 &    \textit{BIN1}	 & 0.003	 & 5.53e-01	 &  rs1710354	 &    \textit{CD33}	 & -0.002	 & 2.95e-01  \\
  rs674747	 &   \textit{MEF2C}	 & 0.002	 & 4.33e-01	 & rs10413089	 &    \textit{APOE}	 & 0.002	 & 1.15e-01  \\
 rs4938933	 &  \textit{MS4A6A}	 & -0.002	 & 3.67e-01	 & rs12539172	 &  \textit{ZCWPW1}	 & -0.002	 & 5.44e-01  \\
rs17186722	 &     \textit{CR1}	 & -0.002	 & 5.20e-01	 & rs13426725	 &    \textit{BIN1}	 & 0.000	 & 1.20e-01  \\
 rs3752243	 &   \textit{ABCA7}	 & -0.002	 & 5.98e-01	 & rs10779277	 &     \textit{CR1}	 & 0.000	 & 2.92e-01  \\
 rs2161228	 &   \textit{MEF2C}	 & -0.002	 & 1.49e-01	 &  rs2490255	 &     \textit{CR1}	 & 0.000	 & 2.65e-01  \\
  rs543293	 &  \textit{PICALM}	 & -0.002	 & 7.78e-01	 & rs17186722	 &     \textit{CR1}	 & 0.000	 & 5.20e-01  \\
 rs3738468	 &     \textit{CR1}	 & -0.002	 & 6.84e-01	 &  rs2940252	 &     \textit{CR1}	 & 0.000	 & 5.18e-01  \\
  rs881768	 &   \textit{ABCA7}	 & -0.002	 & 6.04e-01	 &  rs2661361	 &     \textit{CR1}	 & 0.000	 & 3.76e-01  \\
  rs694011	 &  \textit{PICALM}	 & -0.002	 & 5.21e-01	 &  rs6664001	 &     \textit{CR1}	 & 0.000	 & 2.68e-01  \\
 rs4752845	 &   \textit{CELF1}	 & -0.002	 & 8.76e-01	 & rs17042520	 &     \textit{CR1}	 & 0.000	 & 6.56e-01  \\
rs12798346	 &   \textit{CELF1}	 & -0.002	 & 8.76e-01	 &  rs2135924	 &     \textit{CR1}	 & 0.000	 & 2.68e-01  \\
rs10838738	 &   \textit{CELF1}	 & -0.002	 & 8.76e-01	 &  rs6656123	 &     \textit{CR1}	 & 0.000	 & 3.10e-01  \\
 rs1871047	 &    \textit{APOE}	 & 0.002	 & 7.06e-01	 &   rs311299	 &     \textit{CR1}	 & 0.000	 & 3.73e-01  \\
 rs4726624	 &   \textit{EPHA1}	 & 0.002	 & 6.40e-01	 & rs12734973	 &     \textit{CR1}	 & 0.000	 & 5.06e-01  \\
 rs6951852	 &   \textit{EPHA1}	 & -0.001	 & 1.96e-01	 &  rs1032980	 &     \textit{CR1}	 & 0.000	 & 3.70e-01  \\
rs17014818	 &    \textit{BIN1}	 & 0.001	 & 5.08e-01	 &    rs17615	 &     \textit{CR1}	 & 0.000	 & 2.94e-01  \\
rs12155159	 &    \textit{NME8}	 & -0.001	 & 4.05e-01	 &  rs4308977	 &     \textit{CR1}	 & 0.000	 & 4.17e-01  \\
  rs676759	 &   \textit{SORL1}	 & -0.001	 & 5.51e-01	 &    rs17616	 &     \textit{CR1}	 & 0.000	 & 3.40e-01  \\
 rs8018746	 & \textit{SLC24A4}	 & -0.001	 & 3.49e-01	 &  rs7549152	 &     \textit{CR1}	 & 0.000	 & 5.70e-01  \\
 rs6591559	 &  \textit{MS4A6A}	 & -0.001	 & 3.62e-01	 &  rs2182909	 &     \textit{CR1}	 & 0.000	 & 3.73e-01  \\
 rs1530914	 &  \textit{MS4A6A}	 & -0.001	 & 3.62e-01	 &  rs6540433	 &     \textit{CR1}	 & 0.000	 & 5.06e-01  \\
rs17128308	 & \textit{SLC24A4}	 & 0.001	 & 1.86e-01	 &  rs6690215	 &     \textit{CR1}	 & 0.000	 & 9.38e-02  \\
 rs3752242	 &   \textit{ABCA7}	 & -0.001	 & 6.49e-01	 & rs12021671	 &     \textit{CR1}	 & 0.000	 & 1.26e-01  \\
 rs4904920	 & \textit{SLC24A4}	 & 0.001	 & 8.38e-01	 &  rs2182911	 &     \textit{CR1}	 & 0.000	 & 1.56e-01  \\
 rs3754617	 &    \textit{BIN1}	 & 0.001	 & 6.93e-01	 &  rs4618970	 &     \textit{CR1}	 & 0.000	 & 6.50e-01  \\
 rs2722246	 &    \textit{NME8}	 & -0.001	 & 9.90e-01	 &  rs9429940	 &     \textit{CR1}	 & 0.000	 & 6.50e-01  \\
  rs561655	 &  \textit{PICALM}	 & -0.001	 & 6.54e-01	 & rs11117956	 &     \textit{CR1}	 & 0.000	 & 1.90e-01  \\
  rs412458	 &   \textit{MEF2C}	 & 0.001	 & 3.16e-01	 & rs11117959	 &     \textit{CR1}	 & 0.000	 & 5.21e-01  \\
 rs7580869	 &  \textit{INPP5D}	 & -0.001	 & 1.28e-01	 & rs10127904	 &     \textit{CR1}	 & 0.000	 & 2.85e-02  \\
rs11117959	 &     \textit{CR1}	 & -0.001	 & 5.21e-01	 &  rs2274566	 &     \textit{CR1}	 & 0.000	 & 4.70e-02  \\
rs12883551	 & \textit{SLC24A4}	 & -0.000	 & 2.36e-01	 &  rs3738468	 &     \textit{CR1}	 & 0.000	 & 6.84e-01  \\
 rs2074442	 &   \textit{ABCA7}	 & 0.000	 & 1.12e-01	 & rs17259045	 &     \textit{CR1}	 & 0.000	 & 7.45e-01  \\
 rs4752993	 &   \textit{CELF1}	 & -0.000	 & 8.27e-01	 &  rs6691117	 &     \textit{CR1}	 & 0.000	 & 2.46e-01  \\
   rs12453	 &  \textit{MS4A6A}	 & -0.000	 & 6.72e-02	 & rs12032275	 &     \textit{CR1}	 & 0.000	 & 6.76e-01  \\
 rs1237999	 &  \textit{PICALM}	 & -0.000	 & 8.32e-01	 & rs12734030	 &     \textit{CR1}	 & 0.000	 & 3.09e-01  \\
  rs755553	 &   \textit{CELF1}	 & -0.000	 & 8.66e-01	 & rs12034383	 &     \textit{CR1}	 & 0.000	 & 2.51e-02  \\
rs10426423	 &    \textit{APOE}	 & 0.000	 & 6.63e-01	 & rs10779339	 &     \textit{CR1}	 & 0.000	 & 4.55e-01  \\
 rs7124060	 &   \textit{SORL1}	 & 0.000	 & 2.28e-01	 & rs10494885	 &     \textit{CR1}	 & 0.000	 & 4.65e-01  \\
rs10779277	 &     \textit{CR1}	 & -0.000	 & 2.92e-01	 &  rs6696840	 &     \textit{CR1}	 & 0.000	 & 4.86e-01  \\
 rs2490255	 &     \textit{CR1}	 & -0.000	 & 2.65e-01	 &  rs1323721	 &     \textit{CR1}	 & 0.000	 & 2.43e-01  \\
 rs2940252	 &     \textit{CR1}	 & -0.000	 & 5.18e-01	 & rs10863461	 &     \textit{CR1}	 & 0.000	 & 2.46e-01  \\
%\bottomrule
\end{longtable}
\end{center}

%\subsection{Brain Imaging Marker Selection}
%\label{subsec_in_supp:QTs}

\begin{center}
\footnotesize
\begin{longtable}{lccccr}
\caption{Imaging features (ordered by absolute values of estimated canonical weights) selected by SCCA and simplified SCCA.}
\label{tab:QTs_selected} \\
\toprule
\multicolumn{3}{c}{Standard SCCA} & \multicolumn{3}{c}{Simplified SCCA} \\
\cmidrule(lr){1-3}\cmidrule(lr){4-6}
             brain ROI   & $\hat{v}_j$ & p-value &               brain ROI   & $\hat{v}_j$ & p-value \\
\midrule
\endfirsthead

\multicolumn{6}{c}%
{{\bfseries \tablename\ \thetable{} -- continued from previous page}} \\
\toprule
\multicolumn{3}{c}{Standard SCCA} & \multicolumn{3}{c}{Simplified SCCA} \\
\cmidrule(lr){1-3}\cmidrule(lr){4-6}
             brain ROI   & $\hat{v}_j$ & p-value &               brain ROI   & $\hat{v}_j$ & p-value \\
\midrule
\endhead

\hline \multicolumn{6}{r}{{Continued on next page}} \\ \hline
\endfoot

\bottomrule \bottomrule
\endlastfoot

       Hippocampus\_L	 & -0.403	 & 1.25e-08	 &    Frontal\_Med\_Orb\_L	 & 0.138	 & 9.65e-26  \\
        Frontal\_Mid\_R	 & 0.279	 & 4.84e-18	 & Frontal\_Sup\_Medial\_L	 & 0.135	 & 8.66e-21  \\
        Frontal\_Mid\_L	 & 0.261	 & 1.67e-18	 &        Cingulum\_Ant\_L	 & 0.133	 & 2.32e-19  \\
          Precentral\_L	 & -0.249	 & 7.67e-07	 &    Frontal\_Med\_Orb\_R	 & 0.133	 & 1.04e-24  \\
      Rolandic\_Oper\_L	 & -0.238	 & 4.63e-10	 & Frontal\_Sup\_Medial\_R	 & 0.132	 & 4.47e-20  \\
Frontal\_Sup\_Medial\_L	 & 0.235	 & 8.66e-21	 &               Rectus\_L	 & 0.132	 & 3.33e-25  \\
        Cerebelum\_6\_R	 & 0.219	 & 5.71e-10	 &         Frontal\_Mid\_R	 & 0.130	 & 4.84e-18  \\
           Calcarine\_R	 & -0.216	 & 5.11e-13	 &    Frontal\_Mid\_Orb\_R	 & 0.129	 & 5.09e-21  \\
              Insula\_R	 & 0.206	 & 1.67e-16	 &         Frontal\_Mid\_L	 & 0.129	 & 1.67e-18  \\
       Cingulum\_Ant\_L	 & 0.188	 & 2.32e-19	 &        Temporal\_Mid\_R	 & 0.128	 & 2.15e-20  \\
 Temporal\_Pole\_Mid\_R	 & -0.187	 & 1.39e-06	 &               Rectus\_R	 & 0.128	 & 4.53e-22  \\
             Caudate\_L	 & 0.185	 & 1.38e-01	 &    Frontal\_Sup\_Orb\_R	 & 0.128	 & 3.23e-20  \\
          Precentral\_R	 & -0.179	 & 1.25e-05	 &               Insula\_R	 & 0.127	 & 1.67e-16  \\
              Vermis\_8	 & 0.171	 & 9.35e-01	 &        Temporal\_Inf\_R	 & 0.127	 & 6.79e-19  \\
       Temporal\_Inf\_R	 & 0.169	 & 6.79e-19	 &    Frontal\_Sup\_Orb\_L	 & 0.127	 & 7.41e-20  \\
              Cuneus\_R	 & -0.165	 & 9.59e-07	 &    Frontal\_Mid\_Orb\_L	 & 0.126	 & 3.34e-21  \\
           Olfactory\_L	 & 0.130	 & 1.46e-13	 &    Frontal\_Inf\_Orb\_R	 & 0.126	 & 2.57e-14  \\
              Heschl\_R	 & 0.130	 & 6.06e-17	 &            Olfactory\_L	 & 0.125	 & 1.46e-13  \\
      Occipital\_Inf\_L	 & 0.112	 & 2.30e-13	 &        Cingulum\_Mid\_L	 & 0.125	 & 8.86e-22  \\
        Cerebelum\_9\_L	 & -0.112	 & 1.18e-03	 &        Cingulum\_Mid\_R	 & 0.125	 & 2.12e-19  \\
            Thalamus\_R	 & 0.108	 & 9.22e-01	 &    Frontal\_Inf\_Orb\_L	 & 0.124	 & 1.64e-17  \\
        Cerebelum\_3\_L	 & -0.107	 & 2.41e-05	 &        Cingulum\_Ant\_R	 & 0.123	 & 6.76e-15  \\
             Putamen\_L	 & 0.105	 & 2.11e-17	 &         Frontal\_Sup\_R	 & 0.123	 & 1.35e-14  \\
   Frontal\_Med\_Orb\_L	 & 0.098	 & 9.65e-26	 &        Temporal\_Sup\_R	 & 0.123	 & 2.06e-20  \\
       Temporal\_Mid\_R	 & 0.097	 & 2.15e-20	 &        Temporal\_Mid\_L	 & 0.121	 & 1.94e-21  \\
      Occipital\_Mid\_L	 & 0.090	 & 1.55e-09	 &            Precuneus\_L	 & 0.121	 & 8.66e-22  \\
   Frontal\_Inf\_Orb\_R	 & 0.081	 & 2.57e-14	 &            Olfactory\_R	 & 0.120	 & 7.48e-11  \\
   Frontal\_Mid\_Orb\_R	 & 0.073	 & 5.09e-21	 &            Precuneus\_R	 & 0.120	 & 8.93e-23  \\
           Olfactory\_R	 & 0.069	 & 7.48e-11	 &    Frontal\_Inf\_Tri\_L	 & 0.120	 & 4.56e-16  \\
              Vermis\_3	 & 0.059	 & 1.05e-01	 &        Temporal\_Inf\_L	 & 0.119	 & 5.09e-19  \\
        Cerebelum\_3\_R	 & -0.053	 & 1.33e-05	 &        Temporal\_Sup\_L	 & 0.119	 & 8.89e-17  \\
     Cerebelum\_4\_5\_R	 & -0.052	 & 5.88e-09	 &         Frontal\_Sup\_L	 & 0.119	 & 6.30e-15  \\
              Cuneus\_L	 & -0.052	 & 5.56e-06	 &        Parietal\_Inf\_L	 & 0.119	 & 2.94e-15  \\
        Frontal\_Sup\_R	 & 0.051	 & 1.35e-14	 &        SupraMarginal\_R	 & 0.118	 & 7.04e-15  \\
       Cerebelum\_10\_L	 & 0.044	 & 1.02e-04	 &    Frontal\_Inf\_Tri\_R	 & 0.117	 & 4.50e-13  \\
       Cerebelum\_7b\_L	 & 0.042	 & 4.67e-07	 &              Angular\_R	 & 0.117	 & 5.56e-16  \\
         Hippocampus\_R	 & -0.034	 & 4.66e-08	 &              Angular\_L	 & 0.116	 & 6.30e-17  \\
     Cerebelum\_4\_5\_L	 & -0.033	 & 1.29e-04	 &        Parietal\_Inf\_R	 & 0.115	 & 7.79e-14  \\
        Cerebelum\_6\_L	 & 0.032	 & 1.69e-09	 &               Insula\_L	 & 0.115	 & 4.36e-14  \\
       Cingulum\_Mid\_R	 & 0.023	 & 2.12e-19	 &               Heschl\_R	 & 0.114	 & 6.06e-17  \\
   Supp\_Motor\_Area\_R	 & -0.010	 & 7.83e-15	 &        SupraMarginal\_L	 & 0.113	 & 2.29e-11  \\
      Cingulum\_Post\_R	 & -0.009	 & 3.18e-04	 &   Frontal\_Inf\_Oper\_R	 & 0.112	 & 2.22e-14  \\
            Fusiform\_R	 & -0.009	 & 1.94e-20	 &       Rolandic\_Oper\_R	 & 0.111	 & 2.84e-13  \\
         Postcentral\_L	 & -0.009	 & 2.91e-09	 &    Supp\_Motor\_Area\_L	 & 0.110	 & 1.13e-15  \\
   Frontal\_Mid\_Orb\_L	 & 0.008	 & 3.34e-21	 &             Fusiform\_R	 & 0.110	 & 1.94e-20  \\
         Postcentral\_R	 & -0.008	 & 1.85e-08	 &       Cingulum\_Post\_L	 & 0.108	 & 5.66e-13  \\
           Calcarine\_L	 & -0.008	 & 3.33e-17	 &             Fusiform\_L	 & 0.108	 & 4.12e-19  \\
  Frontal\_Inf\_Oper\_R	 & -0.008	 & 2.22e-14	 &   Frontal\_Inf\_Oper\_L	 & 0.106	 & 4.89e-12  \\
             Lingual\_L	 & -0.008	 & 2.68e-15	 &              Putamen\_L	 & 0.106	 & 2.11e-17  \\
       Cingulum\_Mid\_L	 & 0.007	 & 8.86e-22	 &              Putamen\_R	 & 0.106	 & 4.16e-15  \\
       Parietal\_Inf\_L	 & 0.007	 & 2.94e-15	 &               Heschl\_L	 & 0.105	 & 1.05e-14  \\
Frontal\_Sup\_Medial\_R	 & 0.007	 & 4.47e-20	 &  Temporal\_Pole\_Sup\_L	 & 0.104	 & 3.30e-08  \\
       Temporal\_Sup\_L	 & 0.007	 & 8.89e-17	 &  Temporal\_Pole\_Sup\_R	 & 0.104	 & 4.33e-09  \\
     ParaHippocampal\_R	 & -0.007	 & 4.46e-01	 &       Occipital\_Mid\_L	 & 0.101	 & 1.55e-09  \\
       Temporal\_Sup\_R	 & 0.006	 & 2.06e-20	 &       Occipital\_Inf\_L	 & 0.100	 & 2.30e-13  \\
 Paracentral\_Lobule\_R	 & -0.006	 & 3.84e-12	 &    Supp\_Motor\_Area\_R	 & 0.098	 & 7.83e-15  \\
             Lingual\_R	 & -0.006	 & 7.85e-17	 &       Rolandic\_Oper\_L	 & 0.095	 & 4.63e-10  \\
 Temporal\_Pole\_Sup\_L	 & 0.005	 & 3.30e-08	 &        Parietal\_Sup\_L	 & 0.094	 & 3.29e-10  \\
 Paracentral\_Lobule\_L	 & -0.005	 & 5.85e-08	 &  Temporal\_Pole\_Mid\_L	 & 0.093	 & 2.11e-07  \\
           Vermis\_1\_2	 & 0.005	 & 6.11e-04	 &       Occipital\_Mid\_R	 & 0.092	 & 1.56e-08  \\
      Occipital\_Sup\_L	 & -0.005	 & 1.43e-02	 &       Occipital\_Inf\_R	 & 0.091	 & 1.26e-09  \\
      Occipital\_Inf\_R	 & 0.005	 & 1.26e-09	 &            Calcarine\_L	 & 0.091	 & 3.33e-17  \\
      Cingulum\_Post\_L	 & -0.004	 & 5.66e-13	 &  Temporal\_Pole\_Mid\_R	 & 0.088	 & 1.39e-06  \\
 Temporal\_Pole\_Mid\_L	 & -0.004	 & 2.11e-07	 &          Postcentral\_R	 & 0.087	 & 1.85e-08  \\
            Fusiform\_L	 & -0.004	 & 4.12e-19	 &          Postcentral\_L	 & 0.086	 & 2.91e-09  \\
            Pallidum\_R	 & -0.004	 & 3.78e-03	 &  Paracentral\_Lobule\_R	 & 0.084	 & 3.84e-12  \\
       Parietal\_Sup\_R	 & -0.003	 & 3.11e-05	 &              Lingual\_R	 & 0.081	 & 7.85e-17  \\
            Pallidum\_L	 & -0.002	 & 4.55e-02	 &           Precentral\_L	 & 0.078	 & 7.67e-07  \\
             Caudate\_R	 & -0.002	 & 5.42e-01	 &              Lingual\_L	 & 0.078	 & 2.68e-15  \\
              Vermis\_9	 & 0.002	 & 9.51e-02	 &             Amygdala\_L	 & 0.078	 & 8.42e-07  \\
              Vermis\_7	 & -0.002	 & 4.63e-02	 &       Cingulum\_Post\_R	 & 0.076	 & 3.18e-04  \\
    Cerebelum\_Crus1\_R	 & -0.002	 & 3.23e-03	 &           Precentral\_R	 & 0.074	 & 1.25e-05  \\
        Cerebelum\_8\_R	 & -0.001	 & 3.11e-06	 &             Amygdala\_R	 & 0.073	 & 1.09e-02  \\
       Cerebelum\_7b\_R	 & 0.001	 & 6.70e-05	 &            Calcarine\_R	 & 0.072	 & 5.11e-13  \\
  Frontal\_Inf\_Oper\_L	 & -0.001	 & 4.89e-12	 &        Parietal\_Sup\_R	 & 0.071	 & 3.11e-05  \\
   Frontal\_Inf\_Orb\_L	 & 0.001	 & 1.64e-17	 &               Cuneus\_L	 & 0.071	 & 5.56e-06  \\
     ParaHippocampal\_L	 & -0.001	 & 4.57e-01	 &  Paracentral\_Lobule\_L	 & 0.068	 & 5.85e-08  \\
            Thalamus\_L	 & 0.001	 & 2.74e-01	 &       Occipital\_Sup\_R	 & 0.067	 & 4.50e-05  \\
   Supp\_Motor\_Area\_L	 & -0.000	 & 1.13e-15	 &      ParaHippocampal\_R	 & 0.064	 & 4.46e-01  \\
        Frontal\_Sup\_L	 & -0.000	 & 6.30e-15	 &         Cerebelum\_6\_R	 & 0.062	 & 5.71e-10  \\
   Frontal\_Sup\_Orb\_L	 & -0.000	 & 7.41e-20	 &       Occipital\_Sup\_L	 & 0.059	 & 1.43e-02  \\
   Frontal\_Sup\_Orb\_R	 & -0.000	 & 3.23e-20	 &               Cuneus\_R	 & 0.059	 & 9.59e-07  \\
   Frontal\_Inf\_Tri\_L	 & -0.000	 & 4.56e-16	 &              Caudate\_R	 & 0.057	 & 5.42e-01  \\
   Frontal\_Inf\_Tri\_R	 & -0.000	 & 4.50e-13	 &             Pallidum\_R	 & 0.056	 & 3.78e-03  \\
      Rolandic\_Oper\_R	 & -0.000	 & 2.84e-13	 &              Caudate\_L	 & 0.054	 & 1.38e-01  \\
   Frontal\_Med\_Orb\_R	 & -0.000	 & 1.04e-24	 &         Cerebelum\_6\_L	 & 0.051	 & 1.69e-09  \\
              Rectus\_L	 & -0.000	 & 3.33e-25	 &      ParaHippocampal\_L	 & 0.051	 & 4.57e-01  \\
              Rectus\_R	 & -0.000	 & 4.53e-22	 &             Pallidum\_L	 & 0.047	 & 4.55e-02  \\
              Insula\_L	 & -0.000	 & 4.36e-14	 &         Cerebelum\_3\_L	 & -0.044	 & 2.41e-05  \\
       Cingulum\_Ant\_R	 & -0.000	 & 6.76e-15	 &         Cerebelum\_8\_R	 & -0.042	 & 3.11e-06  \\
            Amygdala\_L	 & -0.000	 & 8.42e-07	 &         Cerebelum\_8\_L	 & -0.042	 & 2.15e-05  \\
            Amygdala\_R	 & -0.000	 & 1.09e-02	 &      Cerebelum\_4\_5\_R	 & 0.041	 & 5.88e-09  \\
      Occipital\_Sup\_R	 & -0.000	 & 4.50e-05	 &        Cerebelum\_7b\_L	 & -0.040	 & 4.67e-07  \\
      Occipital\_Mid\_R	 & -0.000	 & 1.56e-08	 &     Cerebelum\_Crus2\_L	 & -0.039	 & 3.35e-06  \\
       Parietal\_Sup\_L	 & -0.000	 & 3.29e-10	 &         Cerebelum\_9\_L	 & -0.038	 & 1.18e-03  \\
       Parietal\_Inf\_R	 & -0.000	 & 7.79e-14	 &     Cerebelum\_Crus2\_R	 & -0.038	 & 8.42e-06  \\
       SupraMarginal\_L	 & -0.000	 & 2.29e-11	 &             Thalamus\_R	 & 0.035	 & 9.22e-01  \\
       SupraMarginal\_R	 & -0.000	 & 7.04e-15	 &         Cerebelum\_9\_R	 & -0.033	 & 3.81e-04  \\
             Angular\_L	 & -0.000	 & 6.30e-17	 &         Cerebelum\_3\_R	 & -0.032	 & 1.33e-05  \\
             Angular\_R	 & -0.000	 & 5.56e-16	 &        Cerebelum\_10\_R	 & -0.030	 & 7.99e-05  \\
           Precuneus\_L	 & -0.000	 & 8.66e-22	 &             Thalamus\_L	 & 0.029	 & 2.74e-01  \\
           Precuneus\_R	 & -0.000	 & 8.93e-23	 &        Cerebelum\_7b\_R	 & -0.029	 & 6.70e-05  \\
             Putamen\_R	 & -0.000	 & 4.16e-15	 &            Vermis\_1\_2	 & -0.023	 & 6.11e-04  \\
              Heschl\_L	 & -0.000	 & 1.05e-14	 &               Vermis\_7	 & -0.022	 & 4.63e-02  \\
 Temporal\_Pole\_Sup\_R	 & -0.000	 & 4.33e-09	 &            Vermis\_4\_5	 & 0.021	 & 5.31e-04  \\
       Temporal\_Mid\_L	 & -0.000	 & 1.94e-21	 &     Cerebelum\_Crus1\_R	 & -0.021	 & 3.23e-03  \\
       Temporal\_Inf\_L	 & -0.000	 & 5.09e-19	 &               Vermis\_8	 & 0.016	 & 9.35e-01  \\
    Cerebelum\_Crus1\_L	 & -0.000	 & 7.15e-02	 &        Cerebelum\_10\_L	 & -0.016	 & 1.02e-04  \\
    Cerebelum\_Crus2\_L	 & -0.000	 & 3.35e-06	 &              Vermis\_10	 & -0.014	 & 9.29e-02  \\
    Cerebelum\_Crus2\_R	 & -0.000	 & 8.42e-06	 &     Cerebelum\_Crus1\_L	 & -0.012	 & 7.15e-02  \\
        Cerebelum\_8\_L	 & -0.000	 & 2.15e-05	 &               Vermis\_9	 & 0.011	 & 9.51e-02  \\
        Cerebelum\_9\_R	 & -0.000	 & 3.81e-04	 &               Vermis\_3	 & -0.008	 & 1.05e-01  \\
       Cerebelum\_10\_R	 & -0.000	 & 7.99e-05	 &          Hippocampus\_R	 & 0.007	 & 4.66e-08  \\
           Vermis\_4\_5	 & -0.000	 & 5.31e-04	 &               Vermis\_6	 & 0.003	 & 1.60e-01  \\
              Vermis\_6	 & -0.000	 & 1.60e-01	 &          Hippocampus\_L	 & -0.003	 & 1.25e-08  \\
             Vermis\_10	 & -0.000	 & 9.29e-02	 &      Cerebelum\_4\_5\_L	 & 0.000	 & 1.29e-04  \\
%\bottomrule
\end{longtable}
\end{center}

\end{document}